\providecommand{\U}[1]{\protect\rule{.1in}{.1in}}
\newtheorem{theorem}{Theorem}[section]
\newtheorem{definition}[theorem]{Definition}
\newtheorem{lemma}[theorem]{Lemma}
\newtheorem{condition}[theorem]{Conditions}
\newtheorem{proposition}[theorem]{Proposition}
\newtheorem{corollary}[theorem]{Corollary}
\newtheorem{remark}[theorem]{Remark}
\newenvironment{proof}[1][Proof]{\textbf{#1.} }{\ \rule{0.5em}{0.5em}}
\def\phantom#1{}
\begin{document}

\title{ The Nested Off-shell Bethe ansatz\\and $O(N)$ Matrix Difference Equations}
\author{Hrachya M. Babujian\thanks{Address: Yerevan Physics Institute, Alikhanian
Brothers 2, Yerevan, 375036 Armenia} \thanks{E-mail: babujian@yerphi.am} ,
Angela Foerster\thanks{Address: Instituto de F\'{\i}sica da UFRGS, Av. Bento
Gon\c{c}alves 9500, Porto Alegre, RS - Brazil} \thanks{E-mail:
angela@if.ufrgs.br} , and Michael Karowski\thanks{E-mail:
karowski@physik.fu-berlin.de}\\Institut f\"{u}r Theoretische Physik, Freie Universit\"{a}t Berlin,\\Arnimallee 14, 14195 Berlin, Germany }
\date{{\small \today}}
\maketitle

\begin{abstract}
A system of $O(N)$-matrix difference equations is solved by means of the
off-shell version of the nested algebraic Bethe ansatz. In the nesting process
a new object, the $\Pi$-matrix, is introduced to overcome the complexities of
the $O(N)$ group structure. The proof of the main theorem is presented in
detail. In particular, the cancellation of all \textquotedblleft unwanted
terms\textquotedblright\ is shown explicitly. The highest weight property of
the solutions is proved.

\end{abstract}


\section{Introduction}

\label{s1} $O(N)$ Gross-Neveu and $O(N)$ $\sigma-$models are asymptotically
free quantum field theories which attract high interest, since they share some
common features with QCD. Since perturbation theory fails for these models,
exact results, such as exact generalized form factors are desirable and
welcome. The concept of a generalized form factor was introduced in
\cite{KW,BKW}, where several consistency equations were formulated.
Subsequently this approach was developed further and investigated in different
models by Smirnov \cite{Sm}. Generalized form factors are matrix elements of
fields with many particle states. To construct these objects explicitly one
has to solve generalized Watson's equations which are matrix difference
equations. To solve these equations the so called \textquotedblleft off-shell
Bethe ansatz" is applied \cite{BKZ2,BFKZ,BFK1}. The conventional Bethe ansatz
introduced by Bethe \cite{Bethe} is used to solve eigenvalue problems and its
algebraic formulation was developed by Faddeev and coworkers (see e.g.
\cite{TF}). The off-shell Bethe ansatz has been introduced in \cite{B3} to
solve the Knizhnik-Zamolodchikov equations which are differential equations.
In \cite{Re} a variant of this technique has been formulated to solve matrix
difference equations of the form
\[
K(u_{1},\dots,u_{i}+\kappa,\dots,u_{n})=\,K(u_{1}\dots,u_{i},\dots
,u_{n})Q(u_{1},\dots,u_{n},;i)~,~~(i=1,\dots,n)\,,
\]
where $K({\underline{u}})$ is a co-vector valued function, $Q({\underline{u}%
},i)$ are matrix valued functions and $\kappa$ is a constant to be specified.
We use here a co-vector formulation because this is more convenient for the
application to the form factor program. For higher rank internal symmetry
groups the nested version of this Bethe ansatz has to be applied. The nested
Bethe ansatz as a method to solve eigenvalue problems was introduced by Yang
\cite{Yang1} and further developed by Sutherland \cite{Su,sut}.

In this article we will solve the $O(N)$ difference equations combining the
nested Bethe ansatz with the off-shell Bethe ansatz. This procedure is similar
to the $SU(N)$ \cite{BKZ} case, where also a nesting procedure is used.
However, the algebraic formulation for $O(N)$ is much more intricate because
the R-matrix exhibits an extra new term. In addition, for $SU(N)$ we can use
the same R-matrix at every level, while for the group $O(N)$ the R-matrix
changes after each level. Therefore in our construction a new object, called
$\Pi$-matrix, is introduced in order to overcome these difficulties. This
provides a systematic formulation of techniques introduced by Tarasov
\cite{Tar} and also used in \cite{MR}. In \cite{dVK} a different procedure was
used to solve the $O(N)$ on-shell Bethe ansatz for even $N$.

The results of this article will be applied in \cite{BFK7} to calculate exact
form factors of the $O(N)$ $\sigma$- and Gross-Neveu models. We should mention
that the first computation of form factors for $O(3)$ $\sigma$-model is due to
\cite{Sm} (see also \cite{Ba,BH}). There are also new developments concerning
the connection between 2d Conformal Field Theory (CFT) and integrable models
with $N=2$ Super Yang Mills (SYM) theories in different higher dimensions.
First, there is a surprising relation between 2d-conformal blocks and the
instanton partition function in $N=2$, 4d-SYM theory \cite{AGT} (Alday,
Gaiotto, Tachikawa - AGT relation) and this is a particular version of the
AdS/CFT correspondence which is a more general part of the gauge/string
duality. There is also a q-deformation of the AGT relation which connects the
$N=2$ 5d-SYM theory and the q-deformed conformal blocks \cite{MMSS}. This last
relation offers new insights and gives the intriguing hope that the form
factor program can be used to obtain a deeper understanding of this
connection. The solution of the difference equations is the first step to
obtain the exact form factors and therefore important physical relations and
correlation functions for integrable models. In fact, difference equations
play a significant role in various contexts of mathematical physics (see e.g.
\cite{FR} and references therein).

The article is organized as follows. In Section \ref{s2} we recall some
results and fix the notation concerning the $O(N)$ R-matrix, the monodromy
matrix and some commutation rules. We also introduce a new object, which we
call the $\Pi$-matrix and which is a central element in our construction of
the nested off-shell Bethe vector. In Section \ref{s3} we introduce the nested
generalized Bethe ansatz to solve a system of $O(N)$ difference equations and
present the solutions in terms of \textquotedblleft Jackson-type Integrals".
We introduce a new type of monodromy matrix fulfilling a new type of
Yang-Baxter relation and which is adapted to the difference problem. In
particular this yields a relatively simple proof of our main result, which is
Theorem \ref{TN}. In Section \ref{s4} we prove the highest weight property of
the solutions and calculate the weights. The appendices provide the more
complicated proofs of the results we have obtained. In particular, in Appendix
\ref{a2} we determine all \textquotedblleft unwanted terms\textquotedblright%
\ in the Bethe ansatz and show that they cancel.

\section{General setting and notion of the $\Pi$-matrix}

\label{s2}

\subsection{The $O(N)$ - R-matrix}

Let ${V^{1\dots n}}$ be the tensor product space
\begin{equation}
{V^{1\dots n}}=V_{1}\otimes\dots\otimes V_{n}\,, \label{2.1}%
\end{equation}
where the vector spaces $V_{i}\cong\mathbf{C}^{N},~(i=1,\dots,n)$ are copies
of the fundamental vector representation space of $O(N)$ with the (real) basis
vectors%
\[
\,|\,\alpha\,\rangle_{r}\in{V}_{i},~~(\alpha=1,\dots,N)\,.
\]
It is straightforward to generalize the results of this paper to the case
where the $V_{i}$ are vector spaces for other representations. We denote the
canonical basis vectors of ${V^{1\dots n}}$ by
\begin{equation}
\,|\,\alpha_{1},\dots,\alpha_{n}\,\rangle\in{V^{1\dots n}},~~(\alpha
_{i}=1,\dots,N)\,. \label{2.2}%
\end{equation}
A vector $v^{1\dots n}\in{V^{1\dots n}}$ is given in terms of its components
by
\begin{equation}
v^{1\dots n}=\sum_{\underline{\alpha}}\,|\,\alpha_{1},\dots,\alpha
_{n}\,\rangle_{r}\,v^{\alpha_{1},\dots,\alpha_{n}}. \label{2.3}%
\end{equation}
A matrix acting in ${V^{1\dots n}}$ is denoted by
\begin{equation}
{A_{1\dots n}}~:~{V^{1\dots n}}\rightarrow{V^{1\dots n}}. \label{2.4}%
\end{equation}
We will also use the dual space ${V}_{1\dots n}=\left(  {V^{1\dots n}}\right)
^{\dagger}$.

The $O(N)$ spectral parameter dependent R-matrix was found by
Zamolodchikov-Zamolodchikov \cite{ZZ}\footnote{We use here the normalization
$R=S/\sigma_{2}$ and the parameterization $u=\theta/i\pi\nu$ which is more
convenient for our purpose.}. It acts on the tensor product of two
(fundamental) representation spaces of $O(N)$. It may be written as
\begin{equation}
R_{12}(u_{12})=\left(  \mathbf{1}_{12}+c(u_{12})\,\mathbf{P}_{12}%
+d(u_{12})\,\mathbf{K}_{12}\right)  :\,V^{12}\rightarrow V^{21}\,, \label{2.5}%
\end{equation}
where $\mathbf{P}_{12}$ is the permutation operator, $\mathbf{K}_{12}$ the
annihilation-creation operator and $u_{12}=u_{1}-u_{2}$. Here and in the
following we associate to each space $V_{i}$ a variable (spectral parameter)
$u_{i}\in\mathbb{C}$. The components of the R-matrix are
\begin{equation}
R_{\alpha\beta}^{\delta\gamma}(u_{12})=\delta_{\alpha}^{\gamma}\,\delta
_{\beta}^{\delta}+\delta_{\alpha}^{\delta}\,\delta_{\beta}^{\gamma}%
\,c(u_{12})+\delta^{\gamma\delta}\delta_{\alpha\beta}\,d(u_{12})=~%
\begin{array}
[c]{c}%
\unitlength2mm\begin{picture}(6,7) \put(1,1){\line(1,1){4}} \put(5,1){\line(-1,1){4}} \put(.5,-.5){$ \alpha$} \put(5,-.5){$ \beta$} \put(5,5.5){$\gamma$} \put(.5,5.5){$ \delta$} \put(-.7,2){$u_1$} \put(4.8,2){$ u_2$} \end{picture}
\end{array}
~, \label{2.6}%
\end{equation}
from which $\mathbf{P}_{12}$ and $\mathbf{K}_{12}$ can be read off. The
functions
\begin{equation}
c(u)=\frac{-1}{u},~~d(u)=\frac{1}{u-1/\nu},~~\nu=\dfrac{2}{N-2} \label{2.7}%
\end{equation}
are obtained as the rational solution of the Yang-Baxter equation
\begin{gather}
R_{12}(u_{12})\,R_{13}(u_{13})\,R_{23}(u_{23})=R_{23}(u_{23})\,R_{13}%
(u_{13})\,R_{12}(u_{12})\label{1.3}\\%
\begin{array}
[c]{c}%
\unitlength4mm\begin{picture}(9,4) \put(0,1){\line(1,1){3}} \put(0,3){\line(1,-1){3}} \put(2,0){\line(0,1){4}} \put(4.3,2){$=$} \put(6,0){\line(1,1){3}} \put(6,4){\line(1,-1){3}} \put(7,0){\line(0,1){4}} \put(.2,.5){$\scriptstyle 1$} \put(1.3,0){$\scriptstyle 2$} \put(3,.2){$\scriptstyle 3$} \put(5.5,.2){$\scriptstyle 1$} \put(7.2,0){$\scriptstyle 2$} \put(8.4,.4){$\scriptstyle 3$} \end{picture}~~,
\end{array}
\end{gather}
where we have employed the usual notation \cite{Yang1}. We will also use%
\[
\tilde{R}(u)=R(u)/a(u)
\]
with
\[
a(u)=1+c(u)=\frac{u-1}{u}\,.
\]
The \textquotedblleft unitarity" of the R-matrix reads as
\[
\tilde{R}_{21}(u_{21})\,\tilde{R}_{12}(u_{12})=1~:~~~~~%
\begin{array}
[c]{c}%
\unitlength3mm\begin{picture}(9,4) \put(1,0){\line(1,1){2}} \put(3,0){\line(-1,1){2}} \put(1,2){\line(1,1){2}} \put(3,2){\line(-1,1){2}} \put(7,0){\line(0,1){4}} \put(9,0){\line(0,1){4}} \put(4.5,1.7){$=$} \put(.2,0){$\scriptstyle 1$} \put(3.2,0){$\scriptstyle 2$} \put(6.2,0){$\scriptstyle 1$} \put(8.2,0){$\scriptstyle 2$} \end{picture}
\end{array}
~
\]
and the three eigenvalues of the R-matrix are%
\begin{equation}
R_{\pm}(u)=1\pm c(u)=\frac{u\mp1}{u},~R_{0}=1+c(u)+Nd(u)=\frac{u+1}{u}%
\frac{u+1/\nu}{u-1/\nu}\,. \label{EV}%
\end{equation}
The crossing relation may be written as
\begin{gather}
R_{12}(u_{12})=\mathbf{C}^{2\bar{2}}\,R_{\bar{2}1}(\hat{u}_{12})\,\mathbf{C}%
_{\bar{2}2}=\mathbf{C}^{1\bar{1}}\,R_{2\bar{1}}(\hat{u}_{12})\,\mathbf{C}%
^{\bar{1}1}\label{Rc}\\%
\begin{array}
[c]{c}%
\unitlength3mm\begin{picture}(4,5) \put(0,1){\line(1,1){4}} \put(4,1){\line(-1,1){4}} \put(0,-.5){$1$} \put(3.7,-.5){$2$} \end{picture}
\end{array}
\quad=\quad%
\begin{array}
[c]{c}%
\unitlength3mm\begin{picture}(6,5) \put(1,1){\line(1,1){4}} \put(4,1){\line(-1,2){2}} \put(1,5){\oval(2,8)[lb]} \put(5,1){\oval(2,8)[tr]} \put(3.5,-.5){$1$} \put(5.7,-.5){$2$} \end{picture}
\end{array}
\quad=\quad%
\begin{array}
[c]{c}%
\unitlength3mm\begin{picture}(6,5) \put(2,1){\line(1,2){2}} \put(5,1){\line(-1,1){4}} \put(1,1){\oval(2,8)[lt]} \put(5,5){\oval(2,8)[br]} \put(0,-.5){$1$} \put(2,-.5){$2$} \end{picture}
\end{array}
~\,,~~~~
\end{gather}
where $\hat{u}=1/\nu-u$. Here $\mathbf{C}^{1\bar{1}}$ and $\mathbf{C}%
_{1\bar{1}}$ are the charge conjugation matrices. Their matrix elements are
$\mathbf{C}^{\alpha\bar{\beta}}=\mathbf{C}_{\alpha\bar{\beta}}=\delta
_{\alpha\beta}$ where $\bar{\beta}$ denotes the anti-particle of $\beta$. In
the real basis used up to now the particles are chargeless which means that
$\bar{\beta}=\beta$ and $\mathbf{C}$ is diagonal.

In the following we will use instead of the real basis $|\alpha\rangle
_{r}\,,~\left(  \alpha=1,2,\dots,N\right)  \,$the complex basis given by
\begin{align*}
|\alpha\rangle &  =\frac{1}{\sqrt{2}}\left(  |2\alpha-1\rangle_{r}%
+i|2\alpha\rangle_{r}\right) \\
|\bar{\alpha}\rangle &  =\frac{1}{\sqrt{2}}\left(  |2\alpha-1\rangle
_{r}-i|2\alpha\rangle_{r}\right)
\end{align*}
for $\alpha=1,2,\dots,\left[  N/2\right]  $. If $N$ is odd there is in
addition $|0\rangle=|\bar{0}\rangle=|N\rangle_{r}$. The weight vector
$w=\left(  w_{1},\dots,w_{\left[  N/2\right]  }\right)  $ and the charges of
the one-particle states are given by%
\[%
\begin{array}
[c]{llllll}%
\text{for} & |\,\alpha\,\rangle & : & w_{k}=\delta_{k\alpha} & , & Q=1\\
\text{for} & |\,\bar{\alpha}\,\rangle & : & w_{k}=-\delta_{k\alpha} & , &
Q=-1\\
\text{for} & |\,0\,\rangle & : & w_{k}=0 & , & Q=0\,.
\end{array}
\]

\begin{remark}
For even $N$ this means that we consider $O(N)$ as a subgroup of $U(N/2)$ and
the charge $Q$ is its $U(1)$ charge. For $N=3$ we may identify the particles
$1,\bar{1},0$ with the pions $\pi_{\pm},\pi_{0}$.
\end{remark}

The highest weight eigenvalue of the R-matrix is%
\[
R_{11}^{11}(u)=R_{+}(u)=a(u).
\]

We order the states as: $1,2,\dots,(0),\dots,\bar{2},\bar{1}$ ($0$ only for
$N$ odd). Then the charge conjugation matrix in the complex basis is of the
form%
\begin{gather}
\mathbf{C}^{\delta\gamma}=\delta^{\delta\bar{\gamma}}\,,~\mathbf{C}%
_{\alpha\beta}=\delta_{\alpha\bar{\beta}}\label{Cc}\\
\mathbf{C}\mathbf{=}\left(
\begin{array}
[c]{ccccc}%
0 & \cdots & 0 & \cdots & 1\\
\vdots & \ddots & \vdots & \cdot & \vdots\\
0 & \cdots & 1 & \cdots & 0\\
\vdots & \cdot & \vdots & \ddots & \vdots\\
1 & \cdots & 0 & \cdots & 0
\end{array}
\right)  .\nonumber
\end{gather}
The annihilation-creation matrix in (\ref{2.5}) may be written as%
\[
\mathbf{K}_{\alpha\beta}^{\delta\gamma}=\mathbf{C}^{\delta\gamma}%
\mathbf{C}_{\alpha\beta}\,.
\]

\subsection{The monodromy matrix}

We consider a state with $n$ particles and as is usual in the context of the
algebraic Bethe ansatz we define \cite{FST,TF} the monodromy matrix by%
\begin{equation}
T_{1\dots n,0}(\underline{u},u_{0})=R_{10}(u_{10})\,\cdots R_{n0}(u_{n0})=%
\begin{array}
[c]{c}%
\unitlength3mm\begin{picture}(9,4.5) \put(0,2){\line(1,0){9}} \put(2,0){\line(0,1){4}} \put(7,0){\line(0,1){4}} \put(1,0){$1$} \put(5.8,0){$ n$} \put(8.2,.7){$ 0$} \put(3.6,3){$\dots$} \end{picture}
\end{array}
\label{T}%
\end{equation}
with $\underline{u}=u_{1},\dots,u_{n}$. It is a matrix acting in the tensor
product of the \textquotedblleft quantum space\textquotedblright\ $V^{1\dots
n}=V_{1}\otimes\cdots\otimes V_{n}$ and the \textquotedblleft auxiliary
space\textquotedblright\ $V_{0}$. All vector spaces $V_{i}$ are isomorphic to
a space $V$ whose basis vectors label all kinds of particles. Here
$V\cong\mathbb{C}^{N}$ is the space of the vector representation of $O(N)$.

Suppressing the indices $1\ldots n$ we write the monodromy matrix as
(following the notation of Tarasov \cite{Tar})
\begin{equation}
T_{\alpha}^{\alpha^{\prime}}=~\alpha^{\prime}%
\begin{array}
[c]{c}%
\unitlength1.5mm\begin{picture}(10,4) \put(0,2){\line(1,0){10}} \put(2,0){\line(0,1){4}} \put(8,0){\line(0,1){4}} \put(3.5,.8){$\dots$} \end{picture}
\end{array}
\,\alpha~=\left(
\begin{array}
[c]{ccc}%
A_{1} & \left(  B_{1}\right)  _{\mathring{\alpha}} & B_{2}\\
\left(  C_{1}\right)  ^{\mathring{\alpha}^{\prime}} & \left(  A_{2}\right)
_{\mathring{\alpha}}^{\mathring{\alpha}^{\prime}} & \left(  B_{3}\right)
^{\mathring{\alpha}^{\prime}}\\
C_{2} & \left(  C_{3}\right)  _{\mathring{\alpha}} & A_{3}%
\end{array}
\right)  \label{T1}%
\end{equation}
where $\alpha,\alpha^{\prime}$ assume the values $1,2,\dots,(0),\dots,\bar
{2},\bar{1}$ corresponding to the basis vectors of the auxiliary space
$V\cong\mathbb{C}^{N}$ and $\mathring{\alpha},\mathring{\alpha}^{\prime}$
assume the values $2,\dots,(0),\dots,\bar{2}$ corresponding to the basis
vectors of $\mathring{V}\cong\mathbb{C}^{N-2}$. We will also use the notation
$A=A_{1},~B=B_{1},~C=C_{1}$ and $D=A_{2}$ which is an $\left(  N-2\right)
\times\left(  N-2\right)  $ matrix in the auxiliary space. The Yang-Baxter
algebra relation for the R-matrix (\ref{1.3}) yields
\begin{gather}
T_{1\dots n,a}(\underline{u},u_{a})\,T_{1\dots n,b}(\underline{u}%
,u_{b})\,R_{ab}(u_{ab})=R_{ab}(u_{ab})\,T_{1\dots n,b}(\underline{u}%
,u_{b})\,T_{1\dots n,a}(\underline{u},u_{a})\label{TTS}\\
\unitlength4mm\begin{picture}(20,5.5) \put(.3,0){$1$} \put(4.3,0){$n$} \put(0,2.3){$b$} \put(0,4.3){$a$} \put(7.5,2.3){$b$} \put(7.5,.5){$a$} \put(2.5,3){$\dots$} \put(1,0){\line(0,1){5}} \put(0,2){\line(1,0){8}} \put(5,0){\line(0,1){5}} \put(0,2){\oval(14,4)[rt]} \put(8,2){\oval(2,4)[lb]} \put(9.3,2.3){$=$} \put(13.3,0){$1$} \put(17.3,0){$n$} \put(11,2){$b$} \put(11,4.2){$a$} \put(18.5,3.4){$b$} \put(18.5,1.4){$a$} \put(15.5,2){$\dots$} \put(14,0){\line(0,1){5}} \put(18,0){\line(0,1){5}} \put(11,3){\line(1,0){8}} \put(19,3){\oval(14,4)[lb]} \put(11,3){\oval(2,4)[rt]} \end{picture}\,.~~~~\nonumber
\end{gather}

\subsection{A lemma}

In our approach of the algebraic Bethe ansatz the following lemma replaces
commutation rules of the entries of the monodromy matrix. In the conventional
approach one derives them from the Yang-Baxter algebra relations (\ref{TTS})
and uses them for the algebraic Bethe ansatz.

\begin{lemma}
\label{l2}For the monodromy matrix the following identity holds%
\begin{multline}
T_{1\dots n,a}(\underline{u},v)=\mathbf{1}_{1}\dots\mathbf{1}_{n}%
\mathbf{1}_{a}+\sum_{i=1}^{n}c(u_{i}-v)R_{1a}(u_{1i})\dots\mathbf{P}_{ia}\dots
R_{na}(u_{ni})\\
+\sum_{j=1}^{n}d(u_{i}-v)R_{1a}(\hat{u}_{i1})\dots\mathbf{K}_{ia}\dots
R_{na}(\hat{u}_{in}) \label{Sb}%
\end{multline}
with $\hat{u}=1/\nu-u$, or in terms of pictures%
\begin{multline*}%
\begin{array}
[c]{c}%
\unitlength3mm\begin{picture}(10,4) \put(0,2){\line(1,0){10}} \put(2,0){\line(0,1){4}} \put(5,0){\line(0,1){4}} \put(8,0){\line(0,1){4}} \put(1.2,0){$\scriptstyle 1$} \put(4.4,0){$\scriptstyle i$} \put(7,0){$\scriptstyle n$} \put(9,1){$\scriptstyle 0$} \put(3,1){$\scriptstyle \dots$} \put(6,1){$\scriptstyle \dots$} \end{picture}%
\raisebox{0.0524in}{\phantom{\rule{1.3513in}{0.6349in}}}%
\end{array}
=%
\begin{array}
[c]{c}%
{\phantom{\rule{1.0795in}{0.5069in}}}%
\unitlength3mm\begin{picture}(10,4) \put(0,2){\line(1,0){2}} \put(3,0){\line(0,1){4}} \put(8,0){\line(0,1){4}} \put(2.2,0){$\scriptstyle 1$} \put(7,0){$\scriptstyle n$} \put(1,1){$\scriptstyle 0$} \put(5,2){$\dots$} \end{picture}
\end{array}
\\
+\sum_{i=1}^{n}c(u_{i}-v)~%
\begin{array}
[c]{c}%
\raisebox{-0.0166in}{\phantom{\rule{1.3754in}{0.6823in}}}%
\unitlength3mm\begin{picture}(10,4) \put(0,0){\oval(10,4)[rt]} \put(10,4){\oval(10,4)[lb]} \put(2,0){\line(0,1){4}} \put(8,0){\line(0,1){4}} \put(1.2,0){$\scriptstyle 1$} \put(4.4,0){$\scriptstyle i$} \put(7,0){$\scriptstyle n$} \put(9,1){$\scriptstyle 0$} \put(3,1){$\scriptstyle \dots$} \put(6,1){$\scriptstyle \dots$} \end{picture}
\end{array}
+\sum_{i=1}^{n}d(u_{i}-v)~%
\begin{array}
[c]{c}%
{\phantom{\rule{1.3363in}{0.6274in}}}%
\unitlength3mm\begin{picture}(10,4) \put(10,0){\oval(10,4)[lt]} \put(0,4){\oval(10,4)[rb]} \put(2,0){\line(0,1){4}} \put(8,0){\line(0,1){4}} \put(1.2,0){$\scriptstyle 1$} \put(4.4,0){$\scriptstyle i$} \put(7,0){$\scriptstyle n$} \put(9,1){$\scriptstyle 0$} \put(3,1){$\scriptstyle \dots$} \put(6,1){$\scriptstyle \dots$} \end{picture}
\end{array}
.
\end{multline*}

\end{lemma}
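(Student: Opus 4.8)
The plan is to prove the identity \eqref{Sb} by induction on the number $n$ of quantum spaces, peeling off the left-most factor of the monodromy matrix. The base case $n=1$ is just the definition $R_{1a}(u_1-v)=\mathbf 1_{1a}+c(u_1-v)\mathbf P_{1a}+d(u_1-v)\mathbf K_{1a}$ from \eqref{2.5}. For the inductive step I would write $T_{1\dots n,a}(\underline u,v)=R_{1a}(u_1-v)\,T_{2\dots n,a}(u_2,\dots,u_n,v)$ and substitute the $(n-1)$-particle version of \eqref{Sb} for the factor $T_{2\dots n,a}$. The key bookkeeping observation is that the two boundary terms of the claimed formula are themselves \emph{shifted} $(n-1)$-particle monodromy matrices: by the very definition of $T$ one has $\mathbf P_{1a}R_{2a}(u_{21})\cdots R_{na}(u_{n1})=\mathbf P_{1a}\,T_{2\dots n,a}(u_2,\dots,u_n,u_1)$, and, since $\hat u_{1j}=u_j-(u_1-1/\nu)$, likewise $\mathbf K_{1a}R_{2a}(\hat u_{12})\cdots R_{na}(\hat u_{1n})=\mathbf K_{1a}\,T_{2\dots n,a}(u_2,\dots,u_n,u_1-1/\nu)$. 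Thus crossing the auxiliary line through site $1$ by a permutation resets its spectral parameter from $v$ to $u_1$, while crossing it by the annihilation--creation operator resets it to the crossed value $u_1-1/\nu$.

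After this substitution both sides are $\mathbb C$-linear combinations of the same formal operator products $\sigma_{1a}\,\rho$, where $\sigma\in\{\mathbf 1,\mathbf P_{1a},\mathbf K_{1a}\}$ is the left-most operator acting on space $1$ and $\rho$ ranges over $\mathbf 1$ and the $(n-1)$-particle operators occurring in the permutation and annihilation--creation sums (all of which are independent of the auxiliary parameter). Because space $1$ occurs only inside $\sigma$, it suffices to show that the scalar coefficient of each product $\sigma_{1a}\,\rho$ agrees on the two sides; in this approach one never has to simplify the operator products, so the $O(N)$ Brauer-type relations between $\mathbf P$ and $\mathbf K$ are not needed at this stage. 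Collecting coefficients reduces the whole statement to four elementary identities for the functions $c,d$ of \eqref{2.7}, for instance $c(u_1-v)c(u_i-v)=c(u_1-v)c(u_{i1})+c(u_i-v)c(u_{1i})$ in the $\mathbf P_{1a}$ sector and $d(u_1-v)d(u_i-v)=d(u_1-v)d(u_i-u_1+1/\nu)+d(u_i-v)d(\hat u_{i1})$ in the $\mathbf K_{1a}$ sector, together with the two mixed relations. Each follows at once by partial fractions (equivalently, from the rational solution of the Yang--Baxter equation).

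I expect the only genuine subtlety to be the annihilation--creation sector. There the boundary monodromy matrix carries the crossed parameter $u_1-1/\nu$ rather than $u_1$, and one has to check that the hatted arguments $\hat u_{ij}=1/\nu-u_{ij}$ written into the statement are precisely those for which the $d$-identity closes; this is exactly the place where the extra $\mathbf K$-term of the $O(N)$ R-matrix and the crossing parameter $1/\nu$ enter, whereas for a pure permutation model only the $c$-identities would appear. Once the four scalar identities are verified and the two boundary terms are identified as shifted monodromy matrices, matching the coefficients of $\mathbf 1$ and of the two families of $(n-1)$-particle operators in each of the three space-$1$ sectors closes the induction.
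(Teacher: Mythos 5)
Your proof is correct, but it takes a genuinely different route from the paper's. The paper argues analytically: $T_{1\dots n,a}(\underline{u},v)$ is rational in the auxiliary parameter $v$, with only simple poles at $v=u_{i}$ and $v=u_{i}-1/\nu$ coming from $c(u_{i}-v)$ and $d(u_{i}-v)$ in the single factor $R_{ia}$; the residues of $T$ at these poles coincide with the residues of the corresponding $c$- and $d$-terms on the right of (\ref{Sb}), whose operator parts are $v$-independent, and since both sides tend to $\mathbf{1}_{1}\dots\mathbf{1}_{n}\mathbf{1}_{a}$ as $v\rightarrow\infty$, Liouville's theorem gives the identity. You instead induct on $n$, peeling off $R_{1a}(u_{1}-v)$, identifying the two $i=1$ boundary terms as $(n-1)$-particle monodromies with auxiliary parameter reset to $u_{1}$, respectively to the crossed value $u_{1}-1/\nu$ (correct, since $\hat{u}_{1j}=u_{j}-(u_{1}-1/\nu)$), and matching coefficients of the products $\sigma_{1a}\rho$. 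The matching step is sound, though for a simpler reason than the one you give: both expansions are finite linear combinations of one and the same list of products, so equality of all coefficients gives equality of the operators; no linear independence of the products is needed, and your aside that space $1$ occurs only inside $\sigma$ is not actually load-bearing. Your four rational identities are exactly the ones that arise, and they hold (the two you display and the two mixed ones all reduce to partial fractions, using $\hat{u}_{i1}=u_{1}-(u_{i}-1/\nu)$ in the $\mathbf{K}$-sector). As for what each approach buys: yours is elementary and purely algebraic, avoiding any appeal to pole structure or behaviour at infinity, and it pinpoints where crossing enters -- the $\mathbf{K}$-sector identities close only with the hatted arguments; the paper's is much shorter and more conceptual, exhibiting (\ref{Sb}) as nothing but the partial-fraction decomposition of $T$ in $v$, so that your four identities are precisely the algebraic shadow of its two residue computations.
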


\begin{proof}
The R-matrix $R(u)$ (see (\ref{2.6}) and (\ref{2.7})) is meromorphic and has
simple poles at $u=0$ and $u=1/\nu$ due to the form of $c(u)$ and $d(u)$ such
that%
\begin{align*}
\operatorname*{Res}_{v=u_{i}}T_{1\dots n,a}(\underline{u},v)  &
=\operatorname*{Res}_{v=u_{i}}c(u_{i}-v)R_{1a}(u_{1i})\dots\mathbf{P}%
_{ia}\dots R_{na}(u_{ni})\\
\operatorname*{Res}_{v=u_{i}-1/\nu}T_{1\dots n,a}(\underline{u},v)  &
=\operatorname*{Res}_{v=u_{i}-1/\nu}d(u_{j}-v)R_{1a}(\hat{u}_{i1}%
)\dots\mathbf{K}_{ia}\dots R_{ma}(\hat{u}_{in})
\end{align*}
holds. The claim follows by Liouville's theorem because $\lim
\limits_{v\rightarrow\infty}T_{1\dots n,a}(\underline{u},v)=\mathbf{1}%
_{1}\dots\mathbf{1}_{n}\mathbf{1}_{a}$.
\end{proof}

Similarly we have for the crossed monodromy matrix%
\[
T_{a,1\dots n}(v,\underline{u})=R_{an}(v-u_{n})\dots R_{a1}(v-u_{1})
\]
the relation%
\begin{multline}
T_{a,1\dots n}(v,\underline{u})=\mathbf{1}_{a}\mathbf{1}_{1}\dots
\mathbf{1}_{n}+\sum_{i=1}^{n}c(v-u_{i})R_{an}(u_{in})\dots\mathbf{P}_{ai}\dots
R_{a1}(u_{i1})\\
+\sum_{i=1}^{n}d(v-u_{i})R_{am}(\hat{u}_{mi})\dots\mathbf{K}_{ai}\dots
R_{a1}(\hat{u}_{1i})\,. \label{Sa}%
\end{multline}
Note that the crossing relation (\ref{Rc}) implies%
\begin{equation}
T_{a,1\dots n}(v_{a},\underline{u})=\mathbf{C}_{ba}T_{1\dots n,b}%
(\underline{u},v_{b})\mathbf{C}^{ab} \label{Tc}%
\end{equation}
with $v_{b}=v_{a}-1/\nu$.

\subsection{The Matrix $\Pi$}

\label{s2.4}The nested Bethe ansatz relies on the principle that after each
level the rank of the group (or quantum group) is reduced by one. For $SU(N)$
the rank is $N-1$ and for $O(N)$ it is $\left[  N/2\right]  $. This means that
the dimension of the vector representation (where the R-matrix usually acts)
is reduced by $1$ for the case of $SU(N)$ and by $2$ for case of $O(N)$. A
more essential difference is that for $SU(N)$ one can use in every level the
same R-matrix, because (with a suitable normalization and parameterization)
the $SU(N)$ R-matrix does not depend on $N$. In contrast for $O(N)$ the
R-matrix changes after each level, because it depends on $N$. Therefore we
need a new object called matrix $\Pi$, which maps the $O(N)$ R-matrix to the
$O(N-2)$ one. We use the notation%
\begin{align}
\mathring{R}(u)  &  =R(u,N-2)=\mathbf{1}+\mathbf{P}c(u)+\mathbf{K}\mathring
{d}(u)\label{R'}\\
\mathring{d}(u)  &  =\frac{1}{u-1/\mathring{\nu}}=\frac{1}{u-1/\nu+1}\nonumber
\end{align}
with $\mathring{\nu}=2/(N-4)$. The components of the R-matrix $\mathring
{R}(u)$ will be denoted by
\[
\mathring{R}_{\mathring{\alpha}\mathring{\beta}}^{\mathring{\delta}%
\mathring{\gamma}}(u)\,,~\mathring{\alpha},\mathring{\beta},\mathring{\gamma
},\mathring{\delta}=2,3,\dots,(0),\dots,\bar{3},\bar{2}\,.
\]
In addition to ${V^{1\dots m}}=V_{1}\otimes\dots\otimes V_{m}$ (\ref{2.1}) we
introduce
\begin{equation}
{\mathring{V}^{1\dots m}}={\mathring{V}}_{1}\otimes\dots\otimes{\mathring{V}%
}_{m}\,,
\end{equation}
where the vector spaces ${\mathring{V}}_{i}\cong\mathbf{C}^{N-2}%
,~(i=1,\dots,m)$ are considered as fundamental (vector) representation spaces
of $O(N-2)$. The space $V_{i}$ is spanned by the complex basis vectors
$|1\rangle,|2\rangle,\dots,|\bar{2}\rangle,|\bar{1}\rangle$ and ${\mathring
{V}}_{i}$ by $|2\rangle,\dots,|\bar{2}\rangle$ .

\begin{definition}
We define the map%
\[
\Pi_{1\dots m}:{V^{1\dots m}}\rightarrow{\mathring{V}^{1\dots m}}%
\]
recursively by $\Pi_{1}=\pi_{1}$ and%
\begin{equation}
\Pi_{1\dots m}(\underline{u})=\left(  \pi_{1}\Pi_{2\dots m}\right)  \bar
{e}_{a}T_{1\dots m,a}(\underline{u},u_{a})\bar{e}^{a} \label{Pi}%
\end{equation}
with the projector $\pi:V\rightarrow{\mathring{V}\subset V}$, the monodromy
matrix (\ref{T}) and $u_{a}=u_{1}-1$. The vector $\bar{e}^{a}\in V_{a}$
(acting in the auxiliary space of $T_{1\dots m,a}$) and the co-vector $\bar
{e}_{a}\in\left(  V_{a}\right)  ^{\dagger}$ correspond to the state $\bar{1}$
and have the components $\bar{e}^{\alpha}=\delta_{\bar{1}}^{\alpha},~\bar
{e}_{\alpha}=\delta_{\alpha}^{\bar{1}}\,.$This definition may be depicted as%
\[%
{\phantom{\rule{2.307in}{1.3214in}}}%
\begin{array}
[c]{c}%
\unitlength4mm\begin{picture}(6,7.5)\thicklines \put(.5,5){\line(0,1){2.5}} \put(1.5,5){\line(0,1){2.5}} \put(4,5){\line(0,1){2.5}} \put(0,3){\framebox(4.5,2){$\Pi_m$}} \put(.5,0){\line(0,1){3}} \put(1.5,0){\line(0,1){3}} \put(4,0){\line(0,1){3}} \put(0,6.5){$1$} \put(0,0){$1$} \put(1.9,0){$2$} \put(1.9,6.5){$2$} \put(3.1,6.5){$m$} \put(3.1,0){$m$} \end{picture}
\end{array}
=~~~~%
\begin{array}
[c]{c}%
\unitlength4mm\begin{picture}(6,7.5)\thicklines \put(1.5,5){\line(0,1){2.5}} \put(4,5){\line(0,1){3}} \put(-.5,3.3){$\pi\,\bullet$} \put(1,3){$\framebox(3.5,2){$\Pi_{m-1}$}$} \put(.55,0){\line(0,1){7.5}} \put(1.5,0){\line(0,1){3}} \put(4,0){\line(0,1){3}} \put(0,2){\line(4,-1){4.5}} \put(-.5,1.8){$\bar1$} \put(4.8,.5){$\bar1$} \put(0,6.5){$1$} \put(0,0){$1$} \put(1.9,0){$2$} \put(1.9,6.5){$2$} \put(3.1,6.5){$m$} \put(3.1,0){$m$} \put(2.8,1.6){$0$} \end{picture}
\end{array}
\]

\end{definition}

\begin{lemma}
\label{L1}In particular for $m=2$ the matrix $\Pi_{12}(u_{1},u_{2})$ may be
written as%
\begin{equation}
\Pi_{12}(\underline{u})=\pi_{1}\pi_{2}+f(u_{12})\mathbf{\mathring{C}}^{12}%
\bar{e}_{1}e_{2} \label{Pi2}%
\end{equation}
with $e_{2}=\mathbf{C}_{2a}\bar{e}^{a}$ ($e_{a}=\delta_{\alpha}^{1}$) and
$f(u)=-d(1-u)$. It satisfies the fundamental relation%
\begin{equation}
\mathring{R}_{12}(u_{12})\Pi_{12}(u_{1},u_{2})=\Pi_{21}(u_{2},u_{1}%
)R_{12}(u_{12})\,, \label{RPi2}%
\end{equation}
where $\mathring{R}_{12}$ is the $O(N-2)$ R-matrix.
\end{lemma}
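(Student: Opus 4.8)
The plan is to prove the two assertions of Lemma \ref{L1} in turn, both by direct computation starting from the explicit data already available.

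For the closed form (\ref{Pi2}) I would specialize the recursive definition (\ref{Pi}) to $m=2$. Since $\Pi_2=\pi_2$ and $u_a=u_1-1$, one has $\Pi_{12}(\underline{u})=\pi_1\pi_2\,\bar e_a\,T_{12,a}(\underline{u},u_a)\,\bar e^a$ with $T_{12,a}(\underline{u},u_a)=R_{1a}(1)\,R_{2a}(1-u_{12})$, because $u_{1a}=u_1-u_a=1$ and $u_{2a}=u_2-u_a=1-u_{12}$. First I would expand the product of the two R-matrices in the operator form $R=\mathbf{1}+c\,\mathbf{P}+d\,\mathbf{K}$, obtaining nine terms built from $\mathbf{1},\mathbf{P}_{1a},\mathbf{K}_{1a}$ and $\mathbf{1},\mathbf{P}_{2a},\mathbf{K}_{2a}$. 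Then I would sandwich each term between $\bar e^a=|\bar 1\rangle_a$ and $\bar e_a=\langle\bar 1|_a$ (fixing the auxiliary line to the state $\bar 1$ at both ends) and apply $\pi_1\pi_2$ from the left. A short case check shows that seven of the nine terms place a state $1$ or $\bar 1$ in one of the two quantum factors and are therefore annihilated by $\pi_1\pi_2$; only the term $\mathbf{1}\cdot\mathbf{1}$, which yields $\pi_1\pi_2$, and the term $c(1)\,\mathbf{P}_{1a}\cdot d(1-u_{12})\,\mathbf{K}_{2a}$ survive. Evaluating the latter, $\mathbf{K}_{2a}$ forces the space-$2$ input to be $1$ and creates $\sum_\mu|\mu\rangle_2|\bar\mu\rangle_a$, $\mathbf{P}_{1a}$ moves the space-$1$ label onto the auxiliary line so that the final projection onto $\bar 1$ forces it to be $\bar 1$, and $\pi_1\pi_2$ restricts $\mu$ to $\mathring V$, leaving exactly the $O(N-2)$ charge-conjugation state $\mathbf{\mathring{C}}^{12}$. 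Using $c(1)=-1$ this reproduces $f(u_{12})\,\mathbf{\mathring{C}}^{12}\bar e_1 e_2$ with $f(u)=-d(1-u)$.

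For the fundamental relation (\ref{RPi2}) the plan is to substitute the closed form (\ref{Pi2}) for both $\Pi_{12}(u_1,u_2)$ and $\Pi_{21}(u_2,u_1)$, together with the operator forms of $R_{12}(u_{12})$ and $\mathring R_{12}(u_{12})$, and verify the resulting identity of maps $V^{12}\to\mathring V^{21}$ by acting on the canonical basis vectors $|\alpha_1,\alpha_2\rangle$. I would organize the check by the type of the input: (a) both $\alpha_1,\alpha_2\in\mathring V$; (b) $(\alpha_1,\alpha_2)=(\bar 1,1)$; (c) $(\alpha_1,\alpha_2)=(1,\bar 1)$; and (d) all remaining inputs. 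In case (d) both sides vanish, since $\Pi$ projects onto $\mathring V$ while none of the surviving R-matrix channels can feed the $\mathbf{\mathring{C}}$-creation term. Cases (a) and (c) reduce, after matching the coefficients of $\mathbf{\mathring{C}}^{\rho\sigma}\mathbf{\mathring{C}}_{\alpha_1\alpha_2}$ and of $\mathbf{\mathring{C}}^{\rho\sigma}$ respectively, to the scalar identities $\mathring d(u)=d(u)\bigl(1+f(-u)\bigr)$ and $f(-u)=-\mathring d(u)$, which follow at once from $f(u)=-d(1-u)$, $d(u)=1/(u-1/\nu)$ and $\mathring d(u)=1/(u-1/\nu+1)$.

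The main obstacle is case (b), the annihilation--creation channel with input $(\bar 1,1)$. Here the $f$-part of $\Pi_{12}$ produces the $O(N-2)$ singlet $\mathbf{\mathring{C}}^{12}$, on which $\mathring R_{12}(u_{12})$ acts through its singlet eigenvalue $1+c(u_{12})+(N-2)\mathring d(u_{12})$ (the analogue of $R_0$ in (\ref{EV}) for $O(N-2)$), while on the right-hand side the same state is produced via the $\mathbf{K}$-channel of $R_{12}$ and the $f$-part of $\Pi_{21}$. Matching the coefficient of $\mathbf{\mathring{C}}^{\rho\sigma}$ yields the nontrivial relation
\[
f(u)\bigl(1+c(u)+(N-2)\mathring d(u)\bigr)=d(u)+f(-u)\bigl(c(u)+d(u)\bigr),
\]
in which the dimension $N-2$ enters explicitly. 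I expect the crux to be verifying this identity; using $f(-u)=-\mathring d(u)$, $N-2=2/\nu$ and the explicit rational forms, both sides collapse to $(u+1)/\bigl(u\,(u-1/\nu+1)\bigr)$. This is precisely where the shift $1/\mathring\nu=1/\nu-1$ built into $\mathring d$ is essential: it is this shift that makes the $O(N)$ R-matrix intertwine correctly with the $O(N-2)$ one.
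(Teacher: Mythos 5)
Your proposal is correct and follows essentially the same route as the paper's own proof: expand the $m=2$ recursion $\Pi_{12}=\pi_{1}\pi_{2}\,\bar{e}_{a}R_{1a}(1)R_{2a}(1-u_{12})\,\bar{e}^{a}$, keep the two terms that survive the projections, and then verify (\ref{RPi2}) componentwise, which reduces to the paper's scalar identities $(i)$--$(iii)$ (your cases (a), (c), (b)) together with the trivially vanishing remaining inputs (your case (d)). Your explicit observations that $f(-u)=-\mathring{d}(u)$ and that both sides of the case-(b) identity collapse to $(u+1)/\bigl(u\,(u-1/\nu+1)\bigr)$ simply spell out what the paper dismisses as ``easily checked with the amplitudes (\ref{2.7})''.
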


\begin{proof}
Equation (\ref{Pi2}) can be easily derived. We calculate (\ref{Pi}) for $m=2$
with $u_{a}=u_{1}-1$
\begin{align*}
\Pi_{12}(\underline{u})  &  =\pi_{1}\pi_{2}\bar{e}_{a}T_{12,a}(\underline
{u},u_{a})\bar{e}^{a}\\
&  =\pi_{1}\pi_{2}\bar{e}_{a}R_{1a}(u_{1}-u_{a})R_{2a}(u_{2}-u_{a})\bar{e}%
^{a}\\
&  =\pi_{1}\pi_{2}\bar{e}_{a}\left(  \mathbf{1}_{1}\mathbf{1}_{a}%
+c(1)\mathbf{P}_{1a}\right)  \left(  \mathbf{1}_{2}\mathbf{1}_{a}%
+d(u_{21}+1)\mathbf{K}_{2a}\right)  \bar{e}^{a}\\
&  =\pi_{1}\pi_{2}+c(1)d(u_{21}+1)\mathbf{\mathring{C}}^{12}\bar{e}_{1}%
e_{2}\,.
\end{align*}
Use has been made of $c(1)=-1$ and $\pi_{1}\pi_{2}\bar{e}_{a}\mathbf{P}%
_{1a}\mathbf{K}_{2a}\bar{e}^{a}=\mathbf{\mathring{C}}^{12}\bar{e}_{1}e_{2}$.
Equation (\ref{RPi2}) is derived for all components. Obviously%
\[
\mathring{R}_{\mathring{\alpha}\mathring{\beta}}^{\mathring{\beta}^{\prime
}\mathring{\alpha}^{\prime}}(u)\mathbf{\mathring{C}}^{\mathring{\alpha
}\mathring{\beta}}=\mathring{R}_{0}(u)\mathbf{\mathring{C}}^{\mathring{\beta
}^{\prime}\mathring{\alpha}^{\prime}}%
\]
holds, where the scalar R-matrix eigenvalue is (see (\ref{EV}))%
\[
\mathring{R}_{0}(u)=a(u)+\left(  N-2\right)  \mathring{d}(u)\,.
\]
Therefore the relations
\[
\left(  \mathring{R}_{12}(u_{12})\Pi_{12}(u_{12})\right)  _{\alpha\beta
}^{\mathring{\beta}^{\prime}\mathring{\alpha}^{\prime}}=\mathring
{R}\,_{\mathring{\alpha}\mathring{\beta}}^{\mathring{\beta}^{\prime}%
\mathring{\alpha}^{\prime}}(u_{12})\pi_{\alpha}^{\mathring{\alpha}}\pi_{\beta
}^{\mathring{\beta}}+f(u_{12})\mathring{R}_{0}(u_{12})\mathbf{\mathring{C}%
}^{\mathring{\beta}^{\prime}\mathring{\alpha}^{\prime}}\delta_{\alpha}%
^{\bar{1}}\delta_{\beta}^{1}%
\]
and%
\[
\left(  \Pi_{21}(u_{21})R_{12}(u_{12})\right)  _{\alpha\beta}^{\mathring
{\beta}^{\prime}\mathring{\alpha}^{\prime}}=\pi_{\beta^{\prime}}%
^{\mathring{\beta}^{\prime}}\pi_{\alpha^{\prime}}^{\mathring{\alpha}^{\prime}%
}R_{\alpha\beta}^{\beta^{\prime}\alpha^{\prime}}(u_{12})+f(u_{21}%
)\mathbf{\mathring{C}}^{\mathring{\beta}^{\prime}\mathring{\alpha}^{\prime}%
}R_{\alpha\beta}^{\bar{1}1}(u_{12})
\]
are valid. The claim of the lemma is then equivalent to%
\[%
\begin{array}
[c]{lll}%
(i):\mathring{d}(u)=d(u)+f(-u)d(u) & \text{for} & \alpha\text{ or }\beta
\neq1,\bar{1}\\
(ii):0=d(u)+f(-u)\left(  1+d(u)\right)  & \text{for} & \alpha=1,~\text{ }%
\beta=\bar{1}\\
(iii):f(u)\mathring{R}_{0}(u)=d(u)+f(-u)\left(  c(u)+d(u)\right)  & \text{for}
& \alpha=\bar{1},~\text{ }\beta=1.
\end{array}
\]
These equations may be easily checked with the amplitudes (\ref{2.7}).
\end{proof}

The matrix $\Pi_{12}$ may be depicted as%
\[
\Pi_{\alpha\beta}^{\mathring{\alpha}\mathring{\beta}}(u_{1},u_{2})=%
\begin{array}
[c]{c}%
\raisebox{0.0075in}{\phantom{\rule{1.7984in}{0.91in}}}%
\end{array}%
\begin{array}
[c]{c}%
\unitlength3.6mm\begin{picture}(2.5,6)\thicklines \put(0,1){\line(0,1){4}} \put(-.27,2.7){$\bullet$} \put(2,1){\line(0,1){4}} \put(1.73,2.7){$\bullet$} \put(-.2,0){$\alpha$} \put(1.8,0){$\beta$} \put(-.2,5.4){$\mathring{\alpha}$} \put(1.8,5.4){$\mathring{\beta}$} \end{picture}
\end{array}
+f(u_{12})~~%
\begin{array}
[c]{c}%
\unitlength3.6mm\begin{picture}(3.5,6)\thicklines \put(0,1){\line(0,1){2}} \put(2,1){\line(0,1){2}} \put(1,4){\line(1,1){1}} \put(1,4){\line(-1,1){1}} \put(-.2,0){$\alpha$} \put(1.8,0){$\beta$} \put(-.2,3.3){$\bar1$} \put(1.8,3.3){$1$} \put(-.2,5.4){$\mathring{\alpha}$} \put(1.8,5.4){$\mathring{\beta}$} \end{picture}
\end{array}
\]
These results can be extended to general $m$, as presented below.

\begin{lemma}
\label{LM}The matrix $\Pi_{1\dots m}(\underline{u})$ satisfies

\begin{itemize}
\item[(a)] in addition to (\ref{Pi}) the recursion relation%
\begin{equation}
\Pi_{1\dots m}(\underline{u})=\left(  \Pi_{1\dots m-1}\pi_{m}\right)  \bar
{e}_{b}T_{1\dots m,b}(\underline{u},u_{b})\bar{e}^{b} \label{Pi'}%
\end{equation}
with $u_{b}=u_{m}-1/\nu+1$, and

\item[(b)] the fundamental relation%
\begin{equation}
\mathring{R}_{ij}(u_{ij})\Pi_{\dots ij\dots}(\underline{u})=\Pi_{\dots
ji\dots}(\underline{u})R_{ij}(u_{ij})\,. \label{RPi}%
\end{equation}

\item[(c)] The matrix $\bar{e}_{0}T_{1\dots m,0}(\underline{u},u_{0})\bar
{e}^{0}$ acts on $\Pi_{1\dots m}(\underline{u})$ as the unit matrix for
arbitrary $u_{0}$%
\begin{equation}
\Pi_{1\dots m}(\underline{u})\bar{e}_{0}T_{1\dots m,0}(\underline{u}%
,u_{0})\bar{e}^{0}=\Pi_{1\dots m}(\underline{u})\,. \label{PiT}%
\end{equation}

\item[(d)] Special components of $\Pi$ satisfy%
\begin{align}
\Pi_{1\alpha_{2}\dots\alpha_{m}}^{\mathring{\alpha}_{1}\dots\mathring{\alpha
}_{m}}(u_{1},\dots,u_{m})  &  =0\label{m1}\\
\Pi_{\mathring{\alpha}\alpha_{2}\dots\alpha_{m}}^{\mathring{\alpha}_{1}%
\dots\mathring{\alpha}_{m}}(u_{1},\dots,u_{m})  &  =\delta_{\mathring{\alpha}%
}^{\mathring{\alpha}_{1}}\Pi_{\alpha_{2}\dots\alpha_{m}}^{\mathring{\alpha
}_{2}\dots\mathring{\alpha}_{m}}(u_{2},\dots,u_{m})\label{m3}\\
\Pi_{\alpha_{1}\dots\alpha_{m-1}\bar{1}}^{\mathring{\alpha}_{1}\dots
\mathring{\alpha}_{m}}(u_{1},\dots,u_{m})  &  =0\label{m4}\\
\Pi_{\alpha_{1}\dots\alpha_{m-1}\mathring{\alpha}}^{\mathring{\alpha}_{1}%
\dots\mathring{\alpha}_{m}}(u_{1},\dots,u_{m})  &  =\Pi_{\alpha_{1}\dots
\alpha_{m-1}}^{\mathring{\alpha}_{1}\dots\mathring{\alpha}_{m-1}}(u_{1}%
,\dots,u_{m-1})\delta_{\mathring{\alpha}}^{\mathring{\alpha}_{m}}\,.
\label{m5}%
\end{align}
with $\mathring{\alpha}\neq1,\bar{1}$.
\end{itemize}
\end{lemma}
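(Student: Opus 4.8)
The plan is to prove the four assertions simultaneously by induction on $m$, the base case $m=2$ being furnished by Lemma \ref{L1}; inside each induction step I would establish them in the order (a), (d), (c), (b), an ordering that keeps the dependencies pointing strictly downward and so avoids circularity. Two tools do essentially all of the work: the Yang--Baxter equation (\ref{1.3}), which lets one commute an $R_{ij}$ acting in the quantum space past the monodromy $T_{1\dots m,a}$ while transposing the two quantum spaces $i,j$, and the crossing relation (\ref{Tc}), which trades an auxiliary line sitting in front of the chain for one sitting behind it at the cost of the shift $v\mapsto v-1/\nu$. It is exactly this $1/\nu$ that surfaces in the parameter $u_{b}=u_{m}-1/\nu+1$ of the right recursion (\ref{Pi'}), which is the signal that (a) is fundamentally a crossing identity.

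For part (a) I would first verify the claim directly at $m=2$ --- a finite computation with the two $R$-matrices, as in the proof of Lemma \ref{L1} but now evaluating the auxiliary parameter at $u_{b}=u_{2}-1/\nu+1$ --- and then run the induction: starting from the defining recursion (\ref{Pi}), I would use (\ref{Tc}) to carry the auxiliary line from the front to the back of the chain and then the Yang--Baxter equation (\ref{1.3}), together with statement (a) and relation (\ref{RPi}) at level $m-1$, to reorganise the nested monodromy matrices into the form (\ref{Pi'}). Once both recursions are available, part (d) follows by induction and the $m=2$ formula (\ref{Pi2}): the first-index relations (\ref{m1}) and (\ref{m3}) are read off the left recursion (\ref{Pi}), where $\pi_{1}$ annihilates $|1\rangle$ and $\bar e_{1}$ annihilates every basis vector except $|\bar 1\rangle$, while the companion last-index relations (\ref{m4}) and (\ref{m5}) come out in the same way from the right recursion (\ref{Pi'}), the remaining components being reduced to level $m-1$.

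Part (c), the identity (\ref{PiT}), I would prove by a Liouville argument in the auxiliary variable $u_{0}$, in the spirit of Lemma \ref{l2}. By that lemma the operator $\bar e_{0}T_{1\dots m,0}(\underline u,u_{0})\bar e^{0}$ is rational in $u_{0}$, tends to $\bar e_{0}\mathbf 1\bar e^{0}=1$ as $u_{0}\to\infty$, and has only the simple poles at $u_{0}=u_{i}$ and $u_{0}=u_{i}-1/\nu$; it therefore suffices to show that each residue, multiplied by $\Pi_{1\dots m}$ on the left, vanishes. Using (\ref{1.3}) I would move the operator $\mathbf P_{i0}$ (respectively $\mathbf K_{i0}$) produced by the residue to a boundary of the chain, where the contraction with $\bar e_{0},\bar e^{0}$ together with the component relations (\ref{m1}) and (\ref{m4}) of part (d) forces the term to zero; the constant value at $u_{0}\to\infty$ is then $\Pi_{1\dots m}$, as claimed.

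Finally, for the fundamental relation (b) I would induct on $m$ and treat an adjacent transposition according to its position. A swap of two spaces both in positions $\ge 2$ is handled with the left recursion (\ref{Pi}): $\mathring R_{ij}$ then acts inside $\Pi_{2\dots m}$, where the inductive form of (\ref{RPi}) applies, and the $R_{ij}$ it produces is pushed through $T_{1\dots m,a}$ by (\ref{1.3}). The swap of the first two spaces is the genuinely new case --- the one for which the matrix $\Pi$ was introduced --- and here I would instead use the right recursion (\ref{Pi'}) established in part (a), so that $\mathring R_{12}$ acts inside $\Pi_{1\dots m-1}$ and the inductive hypothesis again applies, after which $R_{12}$ is commuted through $T_{1\dots m,b}$ and (\ref{Pi'}) reassembles $\Pi_{21\dots m}$. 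Every adjacent swap is thereby reduced to one at level $m-1$, the recursion terminating at the $m=2$ relation (\ref{RPi2}). I expect the main obstacle to be part (a): making the two recursions coincide forces one to transport the auxiliary space across the whole chain while keeping track of the spectral shifts, and it is this single identity --- rather than any individual Yang--Baxter manipulation --- that underlies both the last-index relations of (d) and the decisive first-space swap in (b).
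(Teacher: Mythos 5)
Your overall architecture coincides with the paper's: induction on $m$, part (a) by substituting the level-$(m-1)$ recursions into each other and reorganising with Yang--Baxter, part (b) split exactly as the paper does (swaps inside positions $2,\dots,m$ via the left recursion (\ref{Pi}), swaps inside positions $1,\dots,m-1$ via the right recursion (\ref{Pi'}), base case (\ref{RPi2})), and part (d) read off from the degenerate R-matrix components at the ends of the chain. The genuine problem is part (c), where you replace the paper's direct inductive computation by a Liouville argument in $u_{0}$ and claim that the order (a), (d), (c), (b) ``keeps the dependencies pointing strictly downward''. It does not: your residue-vanishing step secretly requires part (b).

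Concretely, by Lemma \ref{l2} in the form (\ref{Sb}), the residue of $\bar{e}_{0}T_{1\dots m,0}(\underline{u},u_{0})\bar{e}^{0}$ at $u_{0}=u_{i}$ is $\bar{e}_{0}R_{10}(u_{1i})\cdots\mathbf{P}_{i0}\cdots R_{m0}(u_{mi})\bar{e}^{0}$. Pushing $\mathbf{P}_{i0}$ out of the product (this uses only the permutation property) gives
\[
\bar{e}_{0}R_{10}(u_{1i})\cdots\mathbf{P}_{i0}\cdots R_{m0}(u_{mi})\,\bar{e}^{0}
=\Bigl(R_{i+1,i}(u_{i+1,i})\cdots R_{m,i}(u_{mi})\,\bar{e}^{i}\Bigr)\cdot
\Bigl(\bar{e}_{0}\,R_{10}(u_{1i})\cdots R_{i-1,0}(u_{i-1,i})\Bigr)\,,
\]
where in the second factor the auxiliary in-index is identified with the in-index of space $i$. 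So the output in space $i$ is $\bar{1}$ only up to the string of \emph{quantum-space} R-matrices $R_{ki}(u_{ki})$, $k>i$, which sit between $\Pi_{1\dots m}$ and the vector $\bar{e}^{i}$. For $i=m$ the string is empty and (\ref{m4}) kills the term, but for $i<m$ you must first transport these R-matrices through $\Pi$, and the only tool for that is precisely the fundamental relation (\ref{RPi}) of part (b), which converts them into $\mathring{R}$'s acting on the other side and moves space $i$ into the last slot of $\Pi$; only then does (\ref{m4}) apply. The same happens for the $\mathbf{K}$-residues at $u_{0}=u_{i}-1/\nu$: only the case $i=1$ is killed outright by (\ref{m1}). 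Hence Yang--Baxter plus part (d) alone do not suffice, and (c) as you sketch it needs (b) at level $m$. The repair is cheap, since your proof of (b) nowhere uses (c): prove the parts in the order (a), (d), (b), (c), and your Liouville argument then goes through and is a legitimate alternative to the paper's proof of (c) --- the paper instead avoids (b) altogether there, by commuting the two operators $\bar{e}_{a}T_{1\dots m,a}\bar{e}^{a}$ and $\bar{e}_{0}T_{1\dots m,0}(u_{0})\bar{e}^{0}$ (a consequence of (\ref{TTS})) and peeling off space $1$ by induction, exactly in the spirit of its proof of (a).
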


\noindent The proof of this Lemma is presented in appendix \ref{a1}.

The recursion relations (\ref{Pi}) and (\ref{Pi'}) can be rewritten as (see
also lemma \ref{L1} for $m=2$)%
\begin{align}
\Pi_{1\dots m}(\underline{u})  &  =\pi_{1}\Pi_{2\dots m}+\sum_{j=2}%
^{m}f(u_{1j})\mathring{R}_{jj-1}\cdots\mathring{R}_{j2}\mathbf{\mathring{C}%
}^{1j}\Pi_{2\dots\hat{\jmath}\dots m}\bar{e}_{1}e_{j}R_{jm}\cdots
R_{jj+1}\label{Pia}\\
&  =\Pi_{1\dots m-1}\pi_{m}+\sum_{j=1}^{m-1}f(u_{jm})\mathring{R}_{j+1j}%
\cdots\mathring{R}_{m-1j}\mathbf{\mathring{C}}^{jm}\Pi_{1\dots\hat{\jmath
}\dots m-1}\bar{e}_{j}R_{1j}\cdots R_{j-1j}e_{m}\, \label{Pia'}%
\end{align}
or in terms of pictures%
\begin{align*}%
\begin{array}
[c]{c}%
\unitlength4mm\begin{picture}(6,7.5)\thicklines \put(.5,5){\line(0,1){2.5}} \put(1.5,5){\line(0,1){2.5}} \put(4,5){\line(0,1){2.5}} \put(0,3){\framebox(4.5,2){$\Pi_m$}} \put(.5,0){\line(0,1){3}} \put(1.5,0){\line(0,1){3}} \put(4,0){\line(0,1){3}} \end{picture}
\end{array}
&  =~~~~%
\begin{array}
[c]{c}%
\unitlength4mm\begin{picture}(6,7.5)\thicklines \put(.5,5){\line(0,1){2.5}} \put(1.5,5){\line(0,1){2.5}} \put(4,5){\line(0,1){3}} \put(-.7,3.3){$\pi~\bullet$} \put(1,3){$\framebox(3.5,2){$\Pi_{m-1}$}$} \put(.5,0){\line(0,1){7.5}} \put(1.5,0){\line(0,1){3}} \put(4,0){\line(0,1){3}} \end{picture}
\end{array}
+~~%
{\displaystyle\sum\limits_{j=2}^{m}}
f(u_{1j})~~%
\begin{array}
[c]{c}%
\unitlength4mm\begin{picture}(6,7.5)\thicklines \put(1.5,5){\line(0,1){2.5}} \put(1.1,5.5){\line(1,1){1.8}} \put(1.1,5.5){\line(-1,2){1}} \put(4,5){\line(0,1){2.5}} \put(1,3){$\framebox(3.5,2){$\Pi_{m-2}$}$} \put(.1,3.6){$\bar1$} \put(.25,0){\line(0,1){3.5}} \put(1.5,0){\line(0,1){3}} \put(2.75,0){\line(0,1){1}} \put(2.75,1){\line(2,1){2.5}} \put(5.1,3.6){$1$} \put(5.25,2.25){\line(0,1){1}} \put(4,0){\line(0,1){3}} \put(-.3,6.5){$1$} \put(-.3,0){$1$} \put(.9,0){$2$} \put(2.8,6.5){$j$} \put(2.2,0){$j$} \put(4.4,6.5){$m$} \put(4.4,0){$m$} \end{picture}
\end{array}
\\%
\begin{array}
[c]{c}%
\unitlength4mm\begin{picture}(6,7.5)\thicklines \put(.5,5){\line(0,1){2.5}} \put(3,5){\line(0,1){2.5}} \put(4,5){\line(0,1){2.5}} \put(0,3){\framebox(4.5,2){$\Pi_m$}} \put(.5,0){\line(0,1){3}} \put(3,0){\line(0,1){3}} \put(4,0){\line(0,1){3}} \end{picture}
\end{array}
&  =~~%
\begin{array}
[c]{c}%
\unitlength4mm\begin{picture}(6,7.5)\thicklines \put(.5,5){\line(0,1){2.5}} \put(3,5){\line(0,1){2.5}} \put(3.8,3.3){$\bullet~\pi$} \put(0,3){$\framebox(3.5,2){$\Pi_{m-1}$}$} \put(.5,0){\line(0,1){3}} \put(3,0){\line(0,1){3}} \put(4,0){\line(0,1){7.5}} \end{picture}
\end{array}
+~~\sum_{j=1}^{m-1}f(u_{jm})%
\begin{array}
[c]{c}%
\unitlength4mm\begin{picture}(6,7.5)\thicklines \put(1.5,5){\line(0,1){2.5}} \put(4.5,5.5){\line(-1,1){1.8}} \put(4.5,5.5){\line(1,2){1}} \put(4,5){\line(0,1){2.5}} \put(1,3){$\framebox(3.5,2){$\Pi_{m-2}$}$} \put(.1,3.6){$\bar1$} \put(.25,2.25){\line(0,1){1}} \put(1.5,0){\line(0,1){3}} \put(2.75,0){\line(0,1){1}} \put(.1,3.6){$\bar1$} \put(2.75,1){\line(-2,1){2.5}} \put(5.4,3.6){$1$} \put(5.5,0){\line(0,1){3.5}} \put(4,0){\line(0,1){3}} \put(.8,6.5){$1$} \put(.8,0){$1$} \put(2.2,6.5){$j$} \put(2.1,0){$j$} \put(5.4,6.5){$m$} \put(4.5,0){$m$} \end{picture}
\end{array}
\end{align*}
%

{\phantom{\rule{3.8004in}{1.0056in}}}%
{\phantom{\rule{3.6691in}{0.9973in}}}%

In particular%
\begin{multline}
\Pi_{\bar{1}\alpha_{2}\dots\alpha_{m}}^{\mathring{\alpha}_{1}\dots
\mathring{\alpha}_{m}}(u_{1},\dots,u_{m})=\label{m6}\\
\sum_{j=2}^{m}f(u_{1j})\left(  \mathring{R}_{jj-1}\cdots\mathring{R}%
_{j2}\mathbf{\mathring{C}}^{1j}\Pi_{2\dots\hat{\jmath}\dots m}e_{j}%
R_{jm}\cdots R_{jj+1}\right)  _{\alpha_{2}\dots\alpha_{m}}^{\mathring{\alpha
}_{1}\dots\mathring{\alpha}_{m}}%
\end{multline}%
\begin{multline}
\Pi_{\alpha_{1}\dots\alpha_{m-1}1}^{\mathring{\alpha}_{1}\dots\mathring
{\alpha}_{m}}(u_{1},\dots,u_{m})=\label{m7}\\
\sum_{j=1}^{m-1}f(u_{jm})\left(  \mathring{R}_{j+1j}\cdots\mathring{R}%
_{m-1j}\mathbf{\mathring{C}}^{jm}\Pi_{1\dots\hat{\jmath}\dots m-1}\bar{e}%
_{j}R_{1j}\cdots R_{j-1j}\right)  _{\alpha_{1}\dots\alpha_{m-1}}%
^{\mathring{\alpha}_{1}\dots\mathring{\alpha}_{m}}\,.
\end{multline}

\section{The $O(N)$ - difference equation}

\label{s3} Let $K_{1\dots n}({\underline{u}})\in{V}_{1\dots n}$ be a co-vector
valued function of ${\underline{u}}=u_{1},\dots,u_{n}$ with values in
${V}_{1\dots n}\,$. The components of this vector are denoted by
\[
K_{\alpha_{1}\dots\alpha_{n}}({\underline{u}})~,~~(\alpha_{i}=1,2,\dots
,(0),\dots,\bar{2},\bar{1}).
\]
The following symmetry and periodicity properties of this function are
supposed to be valid:

\begin{condition}
\label{cond}~

\begin{itemize}
\item[\textrm{(i)}] The symmetry property under the exchange of two
neighboring spaces $V_{i}$ and $V_{j}$ and the variables $u_{i}$ and $u_{j}$,
at the same time, is given by
\begin{equation}
K_{\dots ij\dots}(\dots,u_{i},u_{j},\dots)=K_{\dots ji\dots}(\dots,u_{j}%
,u_{i},\dots)\tilde{R}_{ij}(u_{ij})\,, \label{3.1}%
\end{equation}
where $\tilde{R}(u)=R(u)/a(u)$ and $R(u)$ is the $O(N)$ R-matrix.

\item[\textrm{(ii)}] The \textbf{system of matrix difference equations} holds
\begin{equation}
\fbox{\rule[-3mm]{0cm}{8mm} $K_{1\dots n}(\dots,u_{i}^{\prime},\dots
)=K_{1\dots n}(\dots,u_{i},\dots)Q_{1\dots n}({\underline{u}};i)~,~~(i=1,\dots
,n)$ } \label{3.2}%
\end{equation}
with $u_{i}^{\prime}=u_{i}+2/\nu$. The matrix $Q_{1\dots n}({\underline{u}%
};i)\in End({V^{1\dots n}})$ is defined as the trace
\begin{equation}
Q_{1\dots n}({\underline{u}};i)=\operatorname*{tr}\nolimits_{0}\tilde
{T}_{Q,1\dots n,0}(\underline{u},i) \label{3.3}%
\end{equation}
of a modified monodromy matrix%
\[
\tilde{T}_{Q,1\dots n,0}(\underline{u},i)=\tilde{R}_{10}(u_{1}-u_{i}^{\prime
})\,\cdots\mathbf{P}_{i0}\cdots\tilde{R}_{n0}(u_{n}-u_{i})\,.
\]

\end{itemize}
\end{condition}

\noindent The Yang-Baxter equations for the R-matrix guarantee that these
properties are compatible. The shift of $2/\nu$ in eq.~(\ref{3.2}) could be
replaced by an arbitrary $\kappa$. For the application to the form factor
problem, however, it is fixed to be equal to $2/\nu$ in order to be compatible
with crossing symmetry. The properties (i) and (ii) may be depicted as
\begin{align*}
\mathrm{(i)}  &  ~~~~~~%
\begin{array}
[c]{c}%
\unitlength3mm\begin{picture}(8,4) \put(4,3){\oval(8,2)} \put(4,3){\makebox(0,0){$K$}} \put(1,0){\line(0,1){2}} \put(3,0){\line(0,1){2}} \put(5,0){\line(0,1){2}} \put(7,0){\line(0,1){2}} \put(1.5,1){$\scriptstyle \dots$} \put(5.5,1){$\scriptstyle \dots$} \end{picture}
\end{array}
~~=~~%
\begin{array}
[c]{c}%
\unitlength3mm\begin{picture}(8,4) \put(4,3){\oval(8,2)} \put(4,3){\makebox(0,0){$K$}} \put(1,0){\line(0,1){2}} \put(3,0){\line(1,1){2}} \put(5,0){\line(-1,1){2}} \put(7,0){\line(0,1){2}} \put(1.5,1){$\scriptstyle \dots$} \put(5.5,1){$\scriptstyle \dots$} \end{picture}~
\end{array}
,\\
\mathrm{(ii)}  &  ~~~~~~%
\begin{array}
[c]{c}%
\unitlength3mm\begin{picture}(8,5) \put(4,3){\oval(7,2)} \put(4,3){\makebox(0,0){$K$}} \put(2,0){\line(0,1){2}} \put(4,0){\line(0,1){2}} \put(6,0){\line(0,1){2}} \put(2.5,1){$\scriptstyle \dots$} \put(4.5,1){$\scriptstyle \dots$} \end{picture}
\end{array}
~~=~~%
\begin{array}
[c]{c}%
\unitlength3mm\begin{picture}(8,5) \put(4,3){\oval(7,2)} \put(4,3){\makebox(0,0){$K$}} \put(2,0){\line(0,1){2}} \put(6,0){\line(0,1){2}} \put(2.5,.5){$\scriptstyle \dots$} \put(4.5,.5){$\scriptstyle \dots$} \put(6,2){\oval(4,2)[b]} \put(4,2){\oval(8,6)[t]} \put(2,2){\oval(4,2)[lb]} \put(2,0){\oval(4,2)[rt]} \end{picture}
\end{array}
\end{align*}
with the graphical rule that a line changing the "time direction" changes the
spectral parameters $u\rightarrow u\pm1/\nu$ as follows
\[%
\begin{array}
[c]{c}%
\unitlength3mm\begin{picture}(12,2) \put(2,0){\oval(2,4)[t]} \put(9,2){\oval(2,4)[b]} \put(0,0){$\scriptstyle u$} \put(3.3,0){$\scriptstyle u-1/\nu$} \put(7,1){$\scriptstyle u$} \put(10.3,1){$\scriptstyle u+1/\nu$} \end{picture}~.
\end{array}
\]
Instead of the Yang-Baxter relation~(\ref{TTS}) the modified monodromy matrix
${\tilde{T}_{Q}}$ satisfies the Zapletal rules \cite{BKZ,BKZ2}. We have for
$i=1,\dots,n$
\begin{equation}
{\tilde{T}_{Q}}({\underline{u}};i)\,{T_{0}}({\underline{u}}^{\prime
},v)\,R_{i0}(u_{i}-v)=R_{i0}(u_{i}^{\prime}-v)\,{T_{0}}({\underline{u}%
},v)\,{\tilde{T}_{Q}}({\underline{u}};i) \label{3.8}%
\end{equation}
with ${\underline{u}}^{\prime}=u_{1},\dots,u_{i}^{\prime},\dots,u_{n}$ and
$u_{i}^{\prime}=u_{i}+2/\nu$. The $Q_{1\dots n}({\underline{u}};i)$ satisfy
the commutation rules
\begin{multline}
Q_{1\dots n}(\dots u_{i}\dots u_{j}\dots;i)\,Q_{1\dots n}(\dots u_{i}^{\prime
}\dots u_{j}\dots;j)\label{3.4}\\
=Q_{1\dots n}(\dots u_{i}\dots u_{j}\dots;j)\,Q_{1\dots n}(\dots u_{i}\dots
u_{j}^{\prime}\dots;i)\,.
\end{multline}

The following Proposition is obvious

\begin{proposition}
\label{p3.1} Let the vector valued function $K_{1\dots n}({\underline{u}}%
)\in{V}_{1\dots n}$ satisfy (i). Then for all $i=1,\dots,n$ the relations
(3.2) are equivalent to each other and also equivalent to the following
periodicity property under cyclic permutation of the spaces and the variables
\begin{equation}
K_{\alpha_{1}\alpha_{2}\dots\alpha_{n}}(u_{1}^{\prime},u_{2},\dots
,u_{n})=K_{\alpha_{2}\dots\alpha_{n}\alpha_{1}}(u_{2},\dots,u_{n},u_{1})\,.
\label{3.5}%
\end{equation}

\end{proposition}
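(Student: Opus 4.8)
The goal is to show that, assuming condition~(i), the system of difference equations~(\ref{3.2}) for all $i$ is equivalent to the single cyclicity relation~(\ref{3.5}). I would organize the argument around two observations: first, that the definition of $Q_{1\dots n}(\underline u;i)$ as the trace of the modified monodromy matrix can be rewritten, using the symmetry~(\ref{3.1}), as an explicit operator built from R-matrices and a permutation; and second, that this operator is precisely what implements a cyclic shift of one space to the end of the tensor product together with the corresponding spectral-parameter reordering.

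**Step 1: Unwind $Q$ via the trace over the auxiliary space.**
First I would take the case $i=1$ for concreteness and compute
\[
Q_{1\dots n}(\underline u;1)=\operatorname*{tr}\nolimits_{0}\tilde R_{10}(u_{1}-u_{1}')\cdots\tilde R_{n0}(u_{n}-u_{1})\mathbf{P}_{10},
\]
recalling that the factor carrying the permutation $\mathbf{P}_{10}$ sits in the first slot. The key point is that $\mathbf{P}_{10}$ replaces the auxiliary space $0$ by the physical space $1$, so that tracing over $0$ collapses the product of R-matrices into an operator acting only on $V^{1\dots n}$ that moves the first space past all the others. I expect this to produce exactly the chain $\tilde R_{12}\tilde R_{13}\cdots\tilde R_{1n}$ acting with shifted arguments, i.e.\ the object that relates $K(u_1',u_2,\dots)$ to a reordering.

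**Step 2: Match against the cyclicity relation using condition~(i).**
Next I would apply the symmetry property~(\ref{3.1}) repeatedly to the right-hand side of~(\ref{3.5}): moving $\alpha_1$ and $u_1$ from the front to the back one neighbor at a time generates precisely a product of factors $\tilde R_{1j}$, and one checks that the accumulated spectral shifts reproduce the argument $u_i'=u_i+2/\nu$ demanded in~(\ref{3.2}) (note $2/\nu=N-2$ and the crossing shift $1/\nu$ appearing in the ``time-direction'' rule depicted above). Thus~(\ref{3.5}) is equivalent to $K(u_1',\dots)=K(u_1,\dots)Q_{1\dots n}(\underline u;1)$, which is~(\ref{3.2}) for $i=1$. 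The equivalence of the relations~(\ref{3.2}) for different $i$ then follows because, once~(\ref{3.1}) is imposed, each $Q(\underline u;i)$ is conjugate to $Q(\underline u;1)$ by the R-matrices that permute space $i$ to the front; alternatively, one invokes the commutation rules~(\ref{3.4}) together with~(\ref{3.1}) to pass from the $i$-th equation to the first.

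**Main obstacle.**
Since the proposition is asserted to be ``obvious,'' I do not expect a deep difficulty; the real work is bookkeeping. The one step that genuinely requires care is the \emph{spectral-parameter accounting} in Step~2: verifying that the normalization $\tilde R=R/a$, the shift $u_i'=u_i+2/\nu$, and the crossing-related half-shift $1/\nu$ all combine so that the operator emerging from the trace in Step~1 agrees \emph{on the nose} with the reordering operator produced by iterating~(\ref{3.1}), with no leftover scalar factor. I would check this against the $n=1$ or $n=2$ case first to fix conventions, and confirm that the $\mathbf{P}_{i0}$ insertion and the trace reproduce the identity when $K$ is genuinely cyclic. Once the parameters are reconciled, the equivalence is immediate and symmetric in $i$.
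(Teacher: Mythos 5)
The paper gives no proof at all here (the proposition is declared ``obvious''), so the only question is whether your sketch is sound; its two-step strategy --- collapse the trace of $\tilde{T}_{Q}$ using the permutation $\mathbf{P}_{i0}$, then match the resulting chain of R-matrices against iterated applications of (\ref{3.1}) --- is indeed the intended argument and it does work.

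However, two details are wrong as written, and both sit exactly at the spot you flag as the ``main obstacle''. First, your displayed formula for $Q_{1\dots n}(\underline{u};1)$ keeps a factor $\tilde{R}_{10}(u_{1}-u_{1}^{\prime})$ and appends $\mathbf{P}_{10}$ at the end; in the paper's definition the permutation \emph{replaces} the $i$-th factor, so for $i=1$ there is no $\tilde{R}_{10}$ at all:
\begin{equation*}
\tilde{T}_{Q,1\dots n,0}(\underline{u},1)=\mathbf{P}_{10}\,\tilde{R}_{20}(u_{21})\cdots\tilde{R}_{n0}(u_{n1})\,,\qquad
Q_{1\dots n}(\underline{u};1)=\tilde{R}_{21}(u_{21})\cdots\tilde{R}_{n1}(u_{n1})\,.
\end{equation*}
Keeping the spurious factor would not give this clean product, since the partial trace of $\tilde{R}_{10}\,\mathbf{P}_{10}$ over space $0$ produces additional terms. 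Second, the arguments in this chain are \emph{unshifted}: neither the shift $2/\nu$ nor the crossing shift $1/\nu$ enters the R-matrices for $i=1$; the shift $u_{1}^{\prime}=u_{1}+2/\nu$ appears only in the first argument slot of $K$ itself. On the other side, iterating (\ref{3.1}) on the right-hand side of (\ref{3.5}) to carry $\alpha_{1},u_{1}$ from the last position to the first yields exactly $K_{\alpha_{1}\dots\alpha_{n}}(\underline{u})\,\tilde{R}_{21}(u_{21})\cdots\tilde{R}_{n1}(u_{n1})$ with the same unshifted arguments, so (\ref{3.5}) coincides with (\ref{3.2}) for $i=1$ with no parameter reconciliation needed --- your worry about the ``time-direction'' rule is a red herring, as nothing involving crossing occurs in this proposition. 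For $i>1$, the shifted arguments $u_{j}-u_{i}^{\prime}$ ($j<i$) in $Q(\underline{u};i)$ arise automatically when one uses (\ref{3.1}) to move space $i$ to the front, applies the $i=1$ equation, and moves it back (the return trip is made at the already-shifted value $u_{i}^{\prime}$); note this is a twisted relation rather than a literal conjugation of $Q(\underline{u};i)$, and the commutation rules (\ref{3.4}) express compatibility of the flows, not the equivalence you need, so that alternative route should be dropped.
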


\begin{remark}
The equations (\ref{3.1},\ref{3.5}) imply Watson's equations and crossing
relations for the form factors \cite{BFK}.
\end{remark}

Because of proposition \ref{p3.1} we mainly consider $Q_{1\dots n}%
({\underline{u},i})$ for $i=1$%
\begin{equation}
Q_{1\dots n}({\underline{u}})=\operatorname*{tr}\nolimits_{0}{\tilde{T}%
}_{Q,1\dots n,0}(\underline{u})=\prod_{k=2}^{n}\frac{1}{a(v_{ki}%
)}\operatorname*{tr}\nolimits_{0}{T}_{Q,1\dots n,0}(\underline{u})
\label{3.3'}%
\end{equation}
with $T_{Q,1\dots n,0}(\underline{u})=T_{Q,1\dots n,0}(\underline{u},1)$. In
analogy to eq.~(\ref{T1}) we introduce (suppressing the indices $1\dots n$)
\begin{equation}
{T}_{Q}({\underline{u}})\equiv\left(
\begin{array}
[c]{ccc}%
A_{Q}({\underline{u}}) & B_{Q}({\underline{u}}) & B_{Q,2}({\underline{u}})\\
C_{Q}({\underline{u}}) & D_{Q}({\underline{u}}) & B_{Q,3}({\underline{u}})\\
C_{Q,2}({\underline{u}}) & C_{Q,3}({\underline{u}}) & A_{Q,3}({\underline{u}})
\end{array}
\right)  . \label{3.9}%
\end{equation}

\subsection{The off-shell Bethe ansatz}

We will express the co-vector valued function $K_{\underline{\alpha}%
}({\underline{u}})$ in terms of the co-vectors
\begin{equation}
\Psi_{\underline{\alpha}}(\underline{u},\underline{v})=L_{\underline
{\mathring{\beta}}}(\underline{v})\,\Phi_{\underline{\alpha}}^{\underline
{\mathring{\beta}}}(\underline{u},\underline{v})=\left(  L(\underline{v}%
)\Phi(\underline{u},\underline{v})\right)  _{\underline{\alpha}}\,,
\label{Psi}%
\end{equation}
where summation over $\mathring{\beta}_{1},\dots,\mathring{\beta}%
_{m},~\mathring{\beta}_{i}=2,\dots,0,\dots,\bar{2}$ is assumed and
$L_{\underline{\mathring{\beta}}}(\underline{v})$ is a co-vector valued
function with values in ${\mathring{V}}_{1\dots m}\simeq\mathbb{C}%
^{N-2}\otimes\cdots\otimes\mathbb{C}^{N-2}$. We assume that for $L_{\mathring
{\beta}}(\underline{v})$ the higher level conditions of \ref{cond} hold with
$R$ and $Q$ replaced by $\mathring{R}$ and $\mathring{Q}$ (which means $N$ is
replaced by $N-2$)%
\begin{align}
(\mathrm{i})^{(1)}  &  :~~~L_{\dots ij\dots}(\dots,v_{i},v_{j},\dots)=L_{\dots
ji\dots}(\dots,v_{j},v_{i},\dots)\tilde{\mathring{R}}_{ij}(v_{ij}%
)\label{3.1a}\\
(\mathrm{ii})^{(1)}  &  :~~~L_{1\dots m}(\dots,v_{i}^{\prime},\dots)=L_{1\dots
n}(\dots,v_{i},\dots)\mathring{Q}_{1\dots m}({\underline{v}},i)\,.
\label{3.2a}%
\end{align}
with\footnote{Note that the shift $2/\nu$ is the same in the higher level
off-shell Bethe ansatz.} $v_{i}^{\prime}=v_{i}+2/\nu$.

The \textbf{Bethe ansatz states} are\footnote{The $\Phi_{\underline{\alpha}%
}^{\underline{\mathring{\beta}}}$ are generalizations of the states introduced
by Tarasov in \cite{Tar}.}%
\begin{equation}
\Phi_{\underline{\alpha}}^{\underline{\mathring{\beta}}}(\underline
{u},\underline{v})=\Pi_{\underline{\beta}}^{\underline{\mathring{\beta}}%
}(\underline{v})\left(  \Omega T_{1}^{\beta_{m}}(\underline{u},v_{m})\dots
T_{1}^{\beta_{1}}(\underline{u},v_{1})\right)  _{\underline{\alpha}}=~~~%
\begin{array}
[c]{c}%
{\phantom{\rule{1.6563in}{1.4751in}}}%
\unitlength4mm\begin{picture}(10,8.8)(0,.7)
\thicklines\put(3.8,1){$u_1$}
\put(8.3,1){$u_n$}
\put(6,3.8){$v_i$}
\put(5.0,6.2){1}
\put(8.0,6.2){1}
\put(9.5,1.8){1}
\put(9.5,3.2){1}
\put(9.5,4.8){1}
\put(5.8,5.5){$\dots$}
\put(.3,5.5){$_{\dots}$}
\put(1.6,5.5){$_{\dots}$}
\put(5,1){\line(0,1){5}}
\put(8,1){\line(0,1){5}}
\put(9,6){\oval(15.5,5)[lb]}
\put(9,6){\oval(18,8)[lb]}
\put(9,6){\oval(13,2)[lb]}
\put(-.2,6){$\framebox(3,1){$\Pi$}$}
\put(1.,8.6){$\underline{\mathring{\beta}}$}
\put(6,.7){$\underline{\alpha}$}
\put(0,7){\line(0,1){1}}
\put(1.25,7){\line(0,1){1}}
\put(2.5,7){\line(0,1){1}}
\end{picture}%
\end{array}
\,. \label{Phi}%
\end{equation}

\begin{remark}
The condition (\ref{3.1a}) implies the symmetry%
\begin{equation}
\Psi_{\underline{\alpha}}(\underline{u},\dots v_{i},v_{j}\dots)=\Psi
_{\underline{\alpha}}(\underline{u},\dots v_{j},v_{i}\dots)\,. \label{3.1c}%
\end{equation}

\end{remark}

The reference state $\Omega$ (\textquotedblleft
pseudo-vacuum\textquotedblright)\ is the highest weight co-vector (with
weights $w=(n,0,\dots,0)$)
\begin{equation}
\Omega_{\underline{\alpha}}=\delta_{\alpha_{1}}^{1}\cdots\delta_{\alpha_{n}%
}^{1}\,. \label{omega}%
\end{equation}
It satisfies
\begin{gather}
\Omega T({\underline{u},v})=\Omega\left(
\begin{array}
[c]{ccc}%
a_{1}(\underline{u},v) & 0 & 0\\
\ast & a_{2}(\underline{u},v) & 0\\
\ast & \ast & a_{3}(\underline{u},v)
\end{array}
\right)  \,,\label{1.40}\\
a_{1}(\underline{u},v)=\prod_{k=1}^{n}a(u_{k}-v)\,,~a_{2}(\underline
{u},v)=1\,,~a_{3}(\underline{u},v)=\prod_{k=1}^{n}\left(  1+d(u_{k}-v)\right)
\,.\nonumber
\end{gather}
We also have for $T_{Q}({\underline{u}})=T_{Q}({\underline{u},1})$%
\begin{equation}
\Omega T_{Q}({\underline{u}})=\Omega\prod_{k=2}^{n}a(u_{k1})\left(
\begin{array}
[c]{ccc}%
1 & 0 & 0\\
\ast & 0 & 0\\
\ast & 0 & 0
\end{array}
\right)  . \label{1.41}%
\end{equation}

The system of difference equations (\ref{3.2}) can be solved by means of a
nested \textquotedblleft off-shell" Bethe ansatz. The first level is given by
the \textbf{off-shell Bethe ansatz}%
\begin{equation}
\fbox{$\rule[-0.2in]{0in}{0.5in}\displaystyle~K_{\underline{\alpha}%
}(\underline{u})=\sum_{\underline{v}}\,g(\underline{u},\underline{v}%
)\,\Psi_{\underline{\alpha}}(\underline{u},\underline{v})$~,} \label{3.12}%
\end{equation}
where the state $\Psi$ is defined by (\ref{Psi}) and (\ref{Phi}) and the
scalar function $g(\underline{u},\underline{v})$ is%
\begin{equation}
g(\underline{u},\underline{v})=\prod_{i=1}^{n}\prod_{j=1}^{m}\psi(u_{i}%
-v_{j})\prod_{1\leq i<j\leq m}\tau(v_{i}-v_{j})\,. \label{3.15}%
\end{equation}
The functions $\psi(u)$ and $\tau(v)$ satisfy the functional equations%
\begin{equation}
\psi(u^{\prime})=a(u)\psi(u)\,,~\tau(v^{\prime})a(v^{\prime})=a(-v)\tau(v).
\label{3.17}%
\end{equation}
with $u^{\prime}=u+2/\nu$ The summation over {$\underline{v}$} is specified
by
\begin{equation}
{\underline{v}}=(v_{1},\dots,v_{m})=(\tilde{v}_{1}-2l_{1}/\nu,\dots,\tilde
{v}_{m}-2l_{m}/\nu)\,,~l_{i}\in\mathbf{Z\,}, \label{3.13}%
\end{equation}
where the $\tilde{v}_{i}$ are arbitrary constants.

\bigskip

\noindent The sums (\ref{3.12}) are also called \textquotedblleft Jackson-type
Integrals" (see e.g. \cite{Re} and references therein). Solutions of
(\ref{3.17}) are%
\begin{align}
\psi(u)  &  =\frac{\Gamma(-\frac{1}{2}\nu+\frac{u}{2}\nu)}{\Gamma(\frac{u}%
{2}\nu)}\label{3.18}\\
\tau(v)  &  =v\frac{\Gamma(\frac{1}{2}\nu+\frac{v}{2}\nu)}{\Gamma(1-\frac
{1}{2}\nu+\frac{v}{2}\nu)}\,. \label{3.20}%
\end{align}

We are now in a position to formulate the main result of this paper.

\begin{theorem}
\label{TN} Let the co-vector valued function $K_{1\dots n}({\underline{u}}%
)\in{{V_{1\dots n}}}$ be given by the Bethe ansatz (\ref{3.12}) and let
$g({\underline{x}},{\underline{u}})$ be of the form (\ref{3.15}). If in
addition the co-vector valued function ${{L}}_{1\dots m}(${$\underline{v}$%
}$)\in{{\mathring{V}_{1\dots m}}}$ satisfies the properties $(i)^{(1)}%
$\textrm{ }and\textrm{ }$(ii)^{(1)}$, i.e. equations (\ref{3.1}) and
(\ref{3.2}) for $O(N-2)$, then $K_{1\dots n}(\underline{u})$ satisfies the
equations (\ref{3.1}) and (\ref{3.2}) for $O(N)$, i.e. $K_{1\dots
n}(\underline{u})$ is a solution of the set of difference equations.
\end{theorem}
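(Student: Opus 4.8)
The plan is to prove the two properties $(i)$ and $(ii)$ for $K_{1\dots n}(\underline u)$ separately, reducing each to the corresponding higher-level property of $L$ together with the structural identities already established for the monodromy matrix and the $\Pi$-matrix. By Proposition \ref{p3.1} it suffices to verify the symmetry $(i)$, eq.~(\ref{3.1}), and the single cyclic periodicity relation (\ref{3.5}) (equivalently (\ref{3.2}) for $i=1$), since the remaining equations (\ref{3.2}) then follow automatically.

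For the symmetry property $(i)$, I would first examine how the Bethe ansatz state $\Psi_{\underline\alpha}(\underline u,\underline v)$ transforms when two neighbouring spaces $V_i,V_{i+1}$ and their variables $u_i,u_{i+1}$ are interchanged. The state (\ref{Phi}) is built from a product of monodromy-matrix entries $T_1^{\beta_k}(\underline u,v_k)$ acting on $\Omega$, contracted with $\Pi_{\underline\beta}^{\underline{\mathring\beta}}(\underline v)$ and with $L(\underline v)$. The interchange of $u_i,u_{i+1}$ inside each $T_1^{\beta_k}(\underline u,v_k)$ is controlled by the Yang-Baxter / exchange relation (\ref{TTS}) for the $O(N)$ R-matrix, which produces precisely a factor $\tilde R_{i,i+1}(u_{i,i+1})$ acting in the quantum space, matching the right-hand side of (\ref{3.1}); the scalar prefactor $g(\underline u,\underline v)$ in (\ref{3.15}) is manifestly symmetric in the $u_i$ up to the $a$-factors absorbed by the normalization, so this step should be essentially a bookkeeping computation once the R-matrix is pulled through the product of $T$'s.

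The substantial part is property $(ii)$, the difference equation under $u_1\mapsto u_1'=u_1+2/\nu$, which I would attack using the new Yang-Baxter-type relation (\ref{3.8}) for the modified monodromy matrix $\tilde T_Q$ and the behaviour (\ref{1.40})--(\ref{1.41}) of the reference state $\Omega$. Acting with $Q_{1\dots n}(\underline u)=\operatorname{tr}_0\tilde T_Q(\underline u)$ on $\Psi$, I would commute the trace string past the product of $T_1^{\beta_k}(\underline u,v_k)$ using (\ref{3.8}), converting the unshifted arguments into shifted ones and generating a sum of a \emph{wanted} term, which reproduces $K_{\underline\alpha}(\ldots,u_1',\ldots)$ after using the functional equations (\ref{3.17}) for $\psi,\tau$ together with the higher-level difference equation $(ii)^{(1)}$, eq.~(\ref{3.2a}), for $L$, and a collection of \emph{unwanted} terms. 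The reduction of $Q$ acting on the higher level is where the fundamental relation (\ref{RPi}) of the $\Pi$-matrix and part (c), eq.~(\ref{PiT}), enter decisively: they are what allow the $O(N)$ trace to be rewritten as the $O(N-2)$ operator $\mathring Q$ acting inside $L$, so that $(ii)^{(1)}$ can be invoked. The scalar functions $\psi,\tau$ are designed so that the shift $u_1\to u_1'$ exactly compensates the $a$-factors and the $\mathring Q$-eigenvalue produced in this reduction.

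The main obstacle will be the explicit cancellation of all unwanted terms: commuting $\tilde T_Q$ through the $T_1^{\beta_k}$'s via (\ref{3.8}) inevitably generates contributions in which the shifted spectral parameter is resonant with some $v_j$, and these must cancel pairwise after summation over the Jackson lattice (\ref{3.13}). The mechanism is that the summation $\underline v=(\tilde v_j-2l_j/\nu)$ is arranged so that a term with $v_j$ and a term with the shifted partner $v_j'=v_j+2/\nu$ cancel, using again the functional equation for $\tau$ and the explicit form of $g$; the $\Pi$-matrix relations (\ref{RPi}), the special components (\ref{m1})--(\ref{m5}), and Lemma \ref{l2} rewriting the $T$'s via their residues all feed into matching the residues of these paired terms. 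As the introduction signals, this cancellation is the technical heart of the argument and I would relegate the detailed computation to the appendix (Appendix \ref{a2}), presenting here only the organizing identities and the statement that the residues at the offending poles cancel, so that Liouville's theorem forces the unwanted contributions to vanish identically.
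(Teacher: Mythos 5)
Your overall strategy --- reduce to (i) plus the $i=1$ difference equation via Proposition \ref{p3.1}, prove (i) by pulling the R-matrix through the $T$'s onto $\Omega$, and prove (ii) by a wanted/unwanted decomposition with cancellation over the Jackson lattice --- is indeed the paper's strategy. But two of your key mechanisms are misplaced or wrong. First, the accounting of where the hypotheses on $L$ enter: in the paper the wanted term comes \emph{only} from $A_Q$, and it reproduces $K(\underline{u}')$ using nothing but the functional equation (\ref{3.17}) for $\psi$ (the factors $\prod_k a(u_1-v_k)$ are absorbed into $g$); the other two diagonal blocks of the trace, $(D_Q)_{\mathring{\beta}}^{\mathring{\beta}}$ and $A_{3,Q}$, give \emph{zero} wanted contribution because they annihilate $\Omega$ by (\ref{1.41}). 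This three-block decomposition (\ref{3.9}) of $\operatorname{tr}_0\tilde{T}_Q$ --- a specifically $O(N)$ feature, absent for $SU(N)$ --- is missing from your sketch, and your claim that the wanted term needs $(ii)^{(1)}$, with (\ref{RPi}) and (\ref{PiT}) converting the $O(N)$ trace into $\mathring{Q}$ acting inside $L$, is not how the proof goes ((\ref{PiT}) is in fact never invoked in Appendix \ref{a2}). The conditions $(i)^{(1)}$ and $(ii)^{(1)}$ on $L$ are used instead in deriving the \emph{unwanted} terms: (\ref{3.1a}) turns the $\mathring{R}_{ik}(v_{ik})$'s generated by (\ref{RPi}) into scalar factors $a(v_{ik})$, and (\ref{3.2a}) produces the shifted arguments $L(v_i',\underline{v}_i)$ appearing in $uw_C^D$, $uw_{C_2}^D$, $uw_{C_3}^{A_3}$. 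Without this, your unwanted-term analysis has no way to bring the hypothesis on $L$ into play at all.

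Second, your closing claim that ``the residues at the offending poles cancel, so that Liouville's theorem forces the unwanted contributions to vanish identically'' is false, and it contradicts your own (correct) earlier sentence. The unwanted terms do not vanish pointwise; they cancel only after the Jackson summation, through the shift identities (\ref{C1})--(\ref{C3}) combined with $\chi_i(\underline{u},\underline{v})\,g(\underline{u},\underline{v}) = g(\underline{u},\underline{v}^{(i)})$. Note moreover that (\ref{C3}) is a \emph{three}-term relation linking $A$-, $D$- and $A_3$-type terms at three different lattice points $\underline{v}$, $\underline{v}^{(j)}$, $\underline{v}^{(ij)}$, so the cancellation is not purely pairwise; this is precisely the new $O(N)$ complication caused by the $\mathbf{K}$-part of the R-matrix (the $C_{2,Q}$, $C_{3,Q}$ entries of $T_Q$), whose treatment requires Lemma \ref{l1} and is the hardest part of Appendix \ref{a2}, invisible in your plan. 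Liouville's theorem enters the paper only once, in the proof of Lemma \ref{l2}.
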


\noindent The proof of this theorem can be found in appendix \ref{a2}.

Iterating (\ref{3.12}), (\ref{Psi}) and theorem \ref{TN} we obtain the nested
off-shell Bethe ansatz with levels $k=1,\dots,\left[  \left(  N-1\right)
/2\right]  -1$. The ansatz for level $k$ reads
\begin{align}
K_{1\dots n_{k-1}}^{(k-1)}\left(  \underline{u}^{(k-1)}\right)   &
=\sum\limits_{\underline{u}^{(k)}}\,g^{(k-1)}(\underline{u}^{(k-1)}%
,\underline{u}^{(k)})\,\Psi_{1\dots n_{k-1}}^{(k-1)}(\underline{u}%
^{(k-1)},\underline{u}^{(k)})\label{3.19}\\
\Psi_{1\dots n_{k-1}}^{(k-1)}(\underline{u}^{(k-1)},\underline{u}^{(k)})  &
=\left(  K^{(k)}(\underline{u}^{(k)})\Phi^{(k-1)}(\underline{u}^{(k-1)}%
,\underline{u}^{(k)})\right)  _{1\dots n_{k-1}}\,,\nonumber
\end{align}
where $\Phi^{(k)}$ is the Bethe ansatz state (\ref{Phi}) and $g^{(k)}$ the
function (\ref{3.15}) for $O(N-2k)$. The highest levels differ from that of
theorem \ref{TN}, but they are given by the $O(3)$-problem for $N$ odd or the
$O(4)$-problem for $N$ even. These two case are investigated below.

\ifcase0\or

\begin{anote}
The start of the iteration may be given by a $k_{max}$ with $1\leq
k_{max}<\left[  \left(  N-1\right)  /2\right]  $ where $\Psi^{(k_{max}-1)}$
trivial%
\begin{equation}
\Psi_{\alpha_{1}\dots\alpha_{n_{k-1}}}^{(k_{max}-1)}=\delta_{\alpha_{1}%
}^{k_{max}}\dots\delta_{\alpha_{n_{k-1}}}^{k_{max}}~,~\text{and }%
n_{k}=0~\text{for }k\geq k_{max} \label{3.21}%
\end{equation}
which is the reference state of level $k_{max}-1$ and trivially fulfills the
Conditions \ref{cond}. If no state $\Psi^{(k-1)}$ is trivial the nesting
starts for $k=\left[  \left(  N-1\right)  /2\right]  $ with the $O(3)$-problem
for $N$ odd or the $O(4)$-problem for $N$ even. These two case are
investigated below.
\end{anote}

\begin{corollary}
\label{c3.2 copy(1)} The system of $O(N)$ matrix difference equations
(\ref{3.2}) is solved by the nested Bethe ansatz (\ref{3.19}) with
(\ref{3.21}) and ${K}_{1\dots n}({\underline{u}})={K^{(0)}}_{1\dots
n}({\underline{u}})\,$.
\end{corollary}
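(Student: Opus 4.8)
The plan is to prove the corollary by \emph{downward induction on the nesting level} $k$, using Theorem \ref{TN} as the single inductive step and the trivial state (\ref{3.21}) (or, when no level is trivial, the $O(3)$/$O(4)$ problem at $k=[(N-1)/2]$) as the base. The key observation is that each line of the iteration (\ref{3.19}) is literally an instance of the first-level off-shell Bethe ansatz (\ref{3.12}) in which the higher-level co-vector $L(\underline v)$ appearing in (\ref{Psi}) is played by $K^{(k)}(\underline u^{(k)})$: comparing (\ref{3.19}) with (\ref{3.12})--(\ref{Psi}) one sees that $g^{(k-1)}$ has the form (\ref{3.15}) for $O(N-2(k-1))$ and that $\Phi^{(k-1)}$ is the Bethe state (\ref{Phi}) built from the $O(N-2(k-1))$ monodromy matrix. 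Hence Theorem \ref{TN}, applied with $N$ replaced by $N-2(k-1)$, asserts exactly: if $K^{(k)}$ satisfies conditions $(i)^{(1)}$ and $(ii)^{(1)}$ (which at level $k-1$ are the $O(N-2k)$ versions (\ref{3.1a})--(\ref{3.2a}) of Conditions \ref{cond}), then $K^{(k-1)}$ satisfies Conditions \ref{cond} for $O(N-2(k-1))$.

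First I would fix the base case. In the situation covered by the corollary the iteration terminates at some $k_{max}$ where $\Psi^{(k_{max}-1)}$ is the trivial reference co-vector (\ref{3.21}); one checks directly that $\delta_{\alpha_1}^{k_{max}}\cdots\delta_{\alpha_{n}}^{k_{max}}$ is the highest weight state of the reduced group and trivially fulfils both the symmetry relation (\ref{3.1}) and the difference equation (\ref{3.2}) at that level, since the reduced R-matrix and $Q$-matrix act on it through scalar highest-weight eigenvalues (cf. (\ref{1.40})--(\ref{1.41})). Thus the inductive hypothesis holds at level $k_{max}$. When instead no $\Psi^{(k-1)}$ is trivial, the base is the explicitly constructed $O(3)$ (for $N$ odd) or $O(4)$ (for $N$ even) solution, and one verifies Conditions \ref{cond} for that solution by direct computation; this is the content of the low-rank analysis referred to immediately after the corollary.

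With the base in hand the inductive step is immediate: assuming $K^{(k)}$ satisfies the $O(N-2k)$ conditions, Theorem \ref{TN} gives that $K^{(k-1)}$ satisfies the $O(N-2(k-1))$ conditions. Running this from $k=k_{max}$ (or $k=[(N-1)/2]$) down to $k=1$ yields that $K^{(0)}=K_{1\dots n}(\underline u)$ satisfies Conditions \ref{cond} for the full group $O(N)$, i.e. the symmetry (\ref{3.1}) and the system of difference equations (\ref{3.2}). This is precisely the assertion of the corollary.

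The main obstacle---and the only place where genuine work beyond invoking Theorem \ref{TN} is required---is the compatibility of the tower of hypotheses together with the base case. Concretely, one must make sure that the conclusion ``$K^{(k-1)}$ satisfies Conditions \ref{cond} for $O(N-2(k-1))$'' is \emph{identical} to the hypothesis ``$K^{(k-1)}$ satisfies $(i)^{(1)}$, $(ii)^{(1)}$ for $O(N-2(k-1))$'' demanded by the next-lower application of Theorem \ref{TN}; this requires tracking that the shift $2/\nu$, the amplitude functions $\psi,\tau$ in (\ref{3.17}), and the replacement $\nu\to\mathring\nu$ remain mutually consistent at every level, so that the functional equations (\ref{3.17}) close under the tower. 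Equally, one must confirm that the low-rank base genuinely satisfies Conditions \ref{cond}, since Theorem \ref{TN} only reduces $N$ to $N-2$ and cannot by itself reach the bottom of the nesting. Once these bookkeeping points are settled, the induction is purely formal.
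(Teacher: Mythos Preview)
Your proposal is correct and is precisely the argument the paper has in mind: the corollary is stated without proof because it follows immediately by iterating Theorem~\ref{TN} down the nesting levels, with the base case supplied either by the reference co-vector (\ref{3.21}) or by the explicit $O(3)$/$O(4)$ constructions. The bookkeeping points you flag---in particular that the shift $2/\nu$ is level-independent (cf.\ the footnote attached to (\ref{3.2a})) so that the hypotheses $(i)^{(1)},(ii)^{(1)}$ produced at one level match exactly those required at the next---are the only content beyond formal induction, and the paper handles them implicitly rather than spelling them out.
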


\fi

\begin{corollary}
\label{c3.2} The system of $O(N)$ matrix difference equations (\ref{3.2}) is
solved by the nested Bethe ansatz (\ref{3.19}) with ${K}_{1\dots
n}({\underline{u}})={K^{(0)}}_{1\dots n}({\underline{u}})\,$.
\end{corollary}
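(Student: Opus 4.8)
The plan is to deduce the corollary by downward induction on the nesting level, using Theorem \ref{TN} as the inductive step and solving the terminal low-rank problem directly as the induction base. The nested ansatz (\ref{3.19}) is engineered so that at each level $k$ the co-vector $K^{(k-1)}$ is obtained from $K^{(k)}$ --- which plays the role of the higher-level function $L$ --- by precisely the Bethe-ansatz formula (\ref{3.12}), (\ref{Phi}), with the scalar factor $g^{(k-1)}$ of the product form (\ref{3.15}), all for the group $O(N-2(k-1))$. I would take as induction hypothesis that $K^{(k)}$ satisfies Conditions \ref{cond} for $O(N-2k)$, that is (\ref{3.1}) and (\ref{3.2}) with $N$ replaced by $N-2k$; these are exactly the higher-level requirements $(\mathrm{i})^{(1)}$, $(\mathrm{ii})^{(1)}$ imposed on $L$ in (\ref{3.1a}), (\ref{3.2a}). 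Theorem \ref{TN}, invoked at level $k-1$, then gives that $K^{(k-1)}$ satisfies (\ref{3.1}) and (\ref{3.2}) for $O(N-2(k-1))$. Running this from the top of the tower down to $k=1$ yields that $K_{1\dots n}(\underline{u})=K^{(0)}_{1\dots n}(\underline{u})$ solves the $O(N)$ system, which is the claim.

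The induction must be anchored at a base level, because the rank-lowering step of Theorem \ref{TN} degenerates at small rank: the reduced parameter $\mathring{\nu}=2/(M-4)$ attached to the group $O(M)$ is singular for $M=4$, and the group $O(3)$ has rank one so the $\Pi$-construction can reduce it no further. Consequently the recursion terminates at the $O(4)$ problem when $N$ is even and at the $O(3)$ problem when $N$ is odd. At that terminal level the co-vector $K^{(k_{\max})}$ cannot be produced by a further application of the theorem and must be exhibited directly, together with an independent verification that it obeys (\ref{3.1}) and (\ref{3.2}) for $O(4)$, respectively $O(3)$. These two anchoring cases are exactly the $O(3)$- and $O(4)$-problems announced after (\ref{3.19}) and are to be analysed separately.

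I expect the genuine difficulty to lie in the base case rather than in the inductive step. The inductive step is a clean citation of Theorem \ref{TN} once one checks the bookkeeping: that the same shift $2/\nu$ governs every level (as noted below (\ref{3.2a})), that the particle numbers $n_{k-1}\mapsto n_k$ and the weights decrease consistently as the rank drops, and that $g^{(k-1)}$ really has the form (\ref{3.15}) at each level so the hypotheses of the theorem are literally met. The substantive work is the explicit solution of the terminal $O(3)$ and $O(4)$ difference equations and the proof that those solutions satisfy Conditions \ref{cond}; these sit outside the reach of Theorem \ref{TN} and force one to treat the terminal R-matrix on its own terms.
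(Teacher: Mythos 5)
Your proposal follows the paper's own route: Corollary \ref{c3.2} is obtained precisely by iterating Theorem \ref{TN} down the nesting tower (this is the ansatz (\ref{3.19})), with the recursion anchored at the $O(3)$-problem for $N$ odd and the $O(4)$-problem for $N$ even, which the paper then treats separately exactly as you anticipate (the $O(3)$ case by exhibiting the explicit minimal solution of the scalar equations (\ref{3.25}), the $O(4)$ case via the factorization $O(4)\simeq SU(2)\otimes SU(2)$ into two solvable $SU(2)$ problems rather than by attacking the $O(4)$ R-matrix directly). One detail to correct: the recursion stops at $O(4)$ not because Theorem \ref{TN} degenerates there --- the paper notes it could still be applied --- but because the resulting $O(2)$ problem is inaccessible to a further Bethe ansatz, the $O(2)$ R-matrix being diagonal ($R_{1\bar{1}}^{1\bar{1}}=0$).
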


\subsubsection{The off-shell Bethe ansatz for $O(3)$}

The $O(3)$ R-matrix is
\[
R(u)=\mathbf{1}+c(u)\,P+d(u)\,K,~~c(u)=\frac{-1}{u},~d(u)=\frac{1}{u-1/2}\,.
\]
The solution of the difference equations (\ref{3.1})-(\ref{3.3}) is again
given by the off-shell Bethe ansatz (\ref{3.12})-(\ref{3.20}). The Bethe
vector $\Psi$ is expressed in terms of the co-vectors (\ref{Phi})
\[
\Psi_{\underline{\alpha}}(\underline{u},\underline{v})=L(\underline{v}%
)\,\Phi_{\underline{\alpha}}(\underline{u},\underline{v})\,,
\]
where the scalar function $L(\underline{v})$ has to satisfy%
\begin{equation}%
\begin{array}
[c]{c}%
L(\dots,v_{i},v_{j},\dots)=L(\dots,v_{j},v_{i},\dots)\tilde{\mathring{R}%
}(v_{ij})\\
L(v_{1}^{\prime},v_{2},\dots,v_{m})=L(v_{2},\dots,v_{m},v_{1})
\end{array}
\label{3.25}%
\end{equation}
with $\tilde{\mathring{R}}(v)=\frac{\left(  v+1\right)  \left(  v-1/2\right)
}{\left(  v-1\right)  \left(  v+1/2\right)  }$. For $N=3$ i.e. $\nu=2$ we have%
\[
\psi(u)=\frac{1}{u-1}\,,~\tau(v)=v^{2}.
\]
and $v^{\prime}=v+1$. The the minimal solution of the equations (\ref{3.25})
is%
\begin{align*}
L(\underline{v})  &  =\prod\limits_{1\leq i<j\leq m}L(v_{ij})\\
L(v)  &  =\frac{\pi}{4}\frac{\left(  v-1/2\right)  }{v\left(  v-1\right)
}\tan\pi v\,.
\end{align*}
The $O(3)$ weight of the state $\Psi_{\underline{\alpha}}(\underline
{u},\underline{v})$ is%
\begin{equation}
w=n-m\,. \label{w3}%
\end{equation}

\subsubsection{The off-shell Bethe ansatz for $O(4)$}

The $O(4)$ R-matrix is
\[
R(u)=\mathbf{1}+c(u)\,P+d(u)\,K,~~c(u)=\frac{-1}{u},~~d(u)=\frac{1}{u-1}\,.
\]
We could apply theorem \ref{TN} and write the off-shell Bethe ansatz for
$O(4)$ in terms of an $O(2)$ problem. However, the latter cannot be solved by
the Bethe ansatz because the R-matrix is diagonal (note that $R_{1\bar{1}%
}^{1\bar{1}}=0$). But there is another way to solve the $O(4)$ problem. The
group isomorphism $O(4)\simeq SU(2)\otimes SU(2)$ reflects itself in terms of
the corresponding R-matrices. Indeed, the $O(4)$ R-matrix can be written as a
tensor product of two $SU(2)$ R-matrices, or more precisely%
\begin{gather*}
\left(  \tilde{R}^{O(4)}\right)  _{\alpha\beta}^{\delta\gamma}\Gamma
_{AB}^{\alpha}\Gamma_{CD}^{\beta}=\Gamma_{C^{\prime}D^{\prime}}^{\delta}%
\Gamma_{A^{\prime}B^{\prime}}^{\gamma}\left(  \tilde{R}_{+}^{SU(2)}\right)
_{AC}^{C^{\prime}A^{\prime}}\left(  \tilde{R}_{-}^{SU(2)}\right)
_{BD}^{D^{\prime}B^{\prime}}\\%
\begin{array}
[c]{c}%
\unitlength1mm\begin{picture}(30,23)(3,0) \put(13,12){\line(1,1){9}} \put(22,12){\line(-1,1){9}} \put(13,12){\line(0,-1){8}} \put(13,12){\line(-2,-1){8}} \put(22,12){\line(0,-1){9}} \put(22,12){\line(5,-2){8}} \put(2,4){$A$} \put(11,0){$B$} \put(31,4){$D$} \put(21,0){$C$} \put(11,14){$\alpha$} \put(22,14){$\beta$} \put(11,22){$\delta$} \put(22,22){$\gamma$} \end{picture}
\end{array}
=%
\begin{array}
[c]{c}%
\unitlength1mm\begin{picture}(27,26)(0,0) \put(5,6){\line(1,1){12}} \put(17,18){\line(1,0){5}} \put(22,18){\line(0,-1){5}} \put(22,13){\line(-1,-1){12}} \put(17,1){\line(-1,1){12}} \put(5,13){\line(0,1){5}} \put(5,18){\line(1,0){5}} \put(10,18){\line(1,-1){12}} \put(22,18){\line(1,1){5}} \put(5,18){\line(-1,1){5}} \put(2,4){$A$} \put(7,0){$B$} \put(23,4){$D$} \put(19,0){$C$} \put(7,19){$D'$} \put(19,19){$A'$} \put(.8,13){$C'$} \put(23,13){$B'$} \put(0,24){$\delta$} \put(27,24){$\gamma$} \put(7.7,8.4){$\bullet$} \put(17.7,8.5){$\bullet$} \end{picture}
\end{array}
.
\end{gather*}
The $SU(2)$ R-matrices $\tilde{R}_{\pm}^{SU(2)}(u)=R_{\pm}^{SU(2)}(u)/a(u)$
correspond to the spinor representations of $O(4)$ with positive (negative)
chirality%
\[
R_{\pm}^{SU(2)}=\mathbf{1}+c(u)\,\mathbf{P\,,}%
\]
where the amplitude $c(u)=-1/u$ is again given by (\ref{2.7}). The relative
R-matrix for states of different chirality is trivial $\tilde{R}=\mathbf{1}$.
The intertwiners are%
\[
\Gamma_{AB}^{\alpha}=\left(  \gamma_{+}\gamma^{\alpha}\mathbf{C}\right)  _{AB}%
\]
with the $O(4)$ gamma matrices $\gamma^{\alpha}$, $\gamma_{+}=\tfrac{1}%
{2}\left(  1+\gamma^{5}\right)  $ and the charge conjugation matrix
$\mathbf{C}$. For more details see \cite{ShWi,KT1}. In the complex basis of
the $O(4)$ and the fundamental representations of the $SU(2)$ the states have
the $O(4)$ weights
\begin{equation}%
\begin{tabular}
[c]{cc}%
vector states & $O(4)$ weights\\\hline
$1$ & $(1,0)$\\
$2$ & $(0,1)$\\
$\bar{2}$ & $(0,-1)$\\
$\bar{1}$ & $(-1,0)$%
\end{tabular}
\ \ \ \ \ \ \ \ \qquad%
\begin{tabular}
[c]{cc}%
spinor states & $O(4)$ weights\\\hline
$\uparrow_{+}$ & $(\frac{1}{2},\frac{1}{2})$\\
$\downarrow_{+}$ & $(-\frac{1}{2},-\frac{1}{2})$\\
$\uparrow_{-}$ & $(\frac{1}{2},-\frac{1}{2})$\\
$\downarrow_{-}$ & $~~(-\frac{1}{2},\frac{1}{2})~.$%
\end{tabular}
\ \ \ \ \ \ \ \ \label{spin}%
\end{equation}
Because of weight conservation the intertwiner matrix is diagonal in this
basis and is calculated to be%
\begin{equation}
\Gamma_{AB}^{\alpha}=\left(
\begin{array}
[c]{cccc}%
\Gamma_{\uparrow_{+}\uparrow_{-}}^{1} & 0 & 0 & 0\\
0 & \Gamma_{\uparrow_{+}\downarrow_{-}}^{2} & 0 & 0\\
0 & 0 & \Gamma_{\downarrow_{+}\uparrow_{-}}^{\bar{2}} & 0\\
0 & 0 & 0 & \Gamma_{\downarrow_{+}\downarrow_{-}}^{\bar{1}}%
\end{array}
\right)  =\left(
\begin{array}
[c]{cccc}%
-1 & 0 & 0 & 0\\
0 & 1 & 0 & 0\\
0 & 0 & 1 & 0\\
0 & 0 & 0 & 1
\end{array}
\right)  . \label{int}%
\end{equation}
We also use the dual intertwiner $\Gamma_{\alpha}^{AB}$ with%
\begin{equation}
\sum_{A,B}\Gamma_{AB}^{\alpha^{\prime}}\Gamma_{\alpha}^{AB}=\delta_{\alpha
}^{\alpha^{\prime}}\,,~\sum_{\alpha}\Gamma_{\alpha}^{A^{\prime}B^{\prime}%
}\Gamma_{AB}^{\alpha}=\delta_{A}^{A^{\prime}}\delta_{B}^{B^{\prime}}.
\label{oc}%
\end{equation}
We write the co-vector valued function $K_{\underline{\alpha}}(\underline{u})$
as%
\begin{align}
K_{\underline{\alpha}}(\underline{u})  &  =K_{\underline{A}}^{(+)}%
(\underline{u})K_{\underline{B}}^{(-)}(\underline{u})\Gamma_{\underline
{\alpha}}^{\underline{A}\,\underline{B}}\label{K4}\\%
\begin{array}
[c]{c}%
\unitlength1mm\begin{picture}(17,17)(0,0) \put(8,13){\oval(17,6)[]} \put(3,1){\line(0,1){9}} \put(13,1){\line(0,1){9}} \put(8,13){\makebox(0,0)[cc]{$K$}} \put(8,2){\makebox(0,0)[cc]{$\underline\alpha$}} \end{picture}
\end{array}
&  =%
\begin{array}
[c]{c}%
\unitlength1.5mm\begin{picture}(26,14)(0,3) \put(5.5,14){\oval(9,4)[]} \put(21.5,14){\oval(9,4)[]} \put(3,12){\line(4,-3){8}} \put(8,12){\line(4,-3){8}} \put(11,6){\line(4,3){8}} \put(16,6){\line(4,3){8}} \put(11,3){\line(0,1){3}} \put(16,3){\line(0,1){3}} \put(5.5,14){\makebox(0,0)[cc]{$K^{(+)}$}} \put(21.5,14){\makebox(0,0)[cc]{$K^{(-)}$}} \put(9.5,9){\makebox(0,0)[cc]{$\underline A$}} \put(17.6,9){\makebox(0,0)[cc]{$\underline B$}} \put(13.6,3){\makebox(0,0)[cc]{$\underline\alpha$}} \end{picture}
\end{array}
\,,
\end{align}
where $\Gamma_{\underline{\alpha}}^{\underline{A}\,\underline{B}}=\prod
_{i=1}^{n}\Gamma_{\alpha_{i}}^{A_{i}B_{i}}$. The transfer matrix
$\operatorname*{tr}\tilde{T}^{O(4)}(\underline{u},v)$ can also be decomposed
such that%
\begin{align*}
K_{\underline{\alpha}}(\underline{u})\left(  \tilde{T}^{O(4)}\right)
_{\gamma}^{\gamma}(\underline{u},v)  &  =K_{\underline{\alpha}}(\underline
{u})\Gamma_{A^{\prime}B^{\prime}}^{\gamma}\left(  \tilde{T}_{+}^{SU(2)}%
\right)  _{A}^{A^{\prime}}(\underline{u},v)\left(  \tilde{T}_{-}%
^{SU(2)}\right)  _{B}^{B^{\prime}}(\underline{u},v)\Gamma_{\gamma}^{AB}\\
&  =\left(  K_{\underline{A}}^{(+)}(\underline{u})\left(  \tilde{T}%
_{+}^{SU(2)}\right)  _{A}^{A}(\underline{u},v)\right)  \left(  K_{\underline
{B}}^{(-)}(\underline{u})\left(  \tilde{T}_{-}^{SU(2)}\right)  _{B}%
^{B}(\underline{u},v)\right)  \Gamma_{\underline{\alpha}}^{\underline
{A}\,\underline{B}}\\%
\begin{array}
[c]{c}%
\unitlength1mm\begin{picture}(17,17)(0,0) \put(8,13){\oval(17,6)[]} \put(3,1){\line(0,1){9}} \put(13,1){\line(0,1){9}} \put(8,13){\makebox(0,0)[cc]{$K$}} \put(1,5){\line(1,0){13}} \put(-1,5){$\gamma$} \put(16,5){$\gamma$} \end{picture}
\end{array}
~~~~  &  =~~~%
\begin{array}
[c]{c}%
\unitlength1.5mm\begin{picture}(26,14)(0,3) \put(5.5,14){\oval(9,4)[]} \put(21.5,14){\oval(9,4)[]} \put(3,12){\line(4,-3){8}} \put(8,12){\line(4,-3){8}} \put(11,6){\line(4,3){8}} \put(16,6){\line(4,3){8}} \put(11,3){\line(0,1){3}} \put(16,3){\line(0,1){3}} \put(5.5,14){\makebox(0,0)[cc]{$K^{(+)}$}} \put(21.5,14){\makebox(0,0)[cc]{$K^{(-)}$}} \put(0,9){$A$} \put(11.2,10){$A$} \put(14,10){$B$} \put(25,9){$B$} \put(3,10.5){\line(1,0){8}} \put(16,10.5){\line(1,0){8}} \end{picture}
\end{array}
~~\,,
\end{align*}
where (\ref{oc}) has been used. Therefore $K_{\underline{\alpha}}%
(\underline{u})$ satisfies the $O(4)$ symmetry relation (\ref{3.1}) and the
difference equation (\ref{3.2}) if the $K_{\underline{A}}^{(\pm)}%
(\underline{u})$ satisfy the corresponding $SU(2)$ relations.

The $SU(2)$ on-shell Bethe ansatz is well known and the off-shell case has
been solved in \cite{BKZ,BFK1,BFK3}
\begin{align}
K_{\underline{A}}(\underline{u})  &  =\sum_{\underline{v}}\,g(\underline
{u},\underline{v})\,\Psi_{\underline{A}}(\underline{u},\underline
{v})\label{KSU2}\\
\Psi(\underline{u},\underline{v})  &  =\Omega C(\underline{u},v_{m})\dots
C(\underline{u},v_{1})\,, \label{psiSU2}%
\end{align}
where $\sum_{\underline{v}}$ and $g(\underline{u},\underline{v})$ are given by
(\ref{3.15})-(\ref{3.20}). For $N=4$ i.e. $\nu=1$ we have%
\[
\psi(u)=\frac{\Gamma(-\frac{1}{2}+\frac{u}{2})}{\Gamma(\frac{u}{2})}%
\,,~\tau(v)=v\,.
\]
The $SU(2)$ weights of the state (\ref{psiSU2}) are $w=(n-m,m)$ and due to
(\ref{spin}) the $O(4)$ weights are%
\[
w=\left\{
\begin{array}
[c]{lll}%
(n-m,-m) &  & \text{for positive chirality spinors}\\
(n-m,m) &  & \text{for negative chirality spinors\thinspace.}%
\end{array}
\right.
\]
Therefore the $O(4)$ weights of (\ref{K4}) are (see also \cite{dVK})
\begin{equation}
w=(n-n_{-}-n_{+},n_{-}-n_{+})\,, \label{w4}%
\end{equation}
where $n_{\pm}$ are the numbers of positive (negative) chirality $C$-operators.

\section{Weights of off-shell $O(N)$ Bethe vectors}

\label{s4}

In this section we analyze some group theoretical properties of off-shell
Bethe states. We show that they are highest weight states and we calculate the
weights. The first result is not only true for the conventional Bethe ansatz,
which solves an eigenvalue problem and which is well known, but it is also
true, as we will show, for the off-shell one which solves a difference
equation (or a differential equation).

By the asymptotic expansion of the R-matrix (\ref{2.5}) and the monodromy
matrix (\ref{T}) we get for $u\rightarrow\infty$%
\begin{align}
R_{ab}(u)  &  =\mathbf{1}_{ab}-\frac{1}{u}M_{ab}+O(u^{-2})\label{4.1}\\
M_{ab}  &  =\mathbf{P}_{ab}-\mathbf{K}_{ab}\,.
\end{align}
More explicitly eq.~(\ref{T}) gives%
\begin{align}
{T_{1\dots n,a}}({\underline{u}},u)  &  =\mathbf{1}_{1\dots n,a}+\frac{1}%
{u}\,M_{1\dots n,a}+O(u^{-2})\label{4.3}\\
M_{1\dots n,a}  &  =\left(  \mathbf{P}_{1a}-\mathbf{K}_{1a}\right)
+\dots+\left(  \mathbf{P}_{na}-\mathbf{K}_{na}\right)  \,.
\end{align}
The matrix elements of $M_{1\dots n,a}$, as a matrix in the auxiliary space,
are the $O(N)$ Lie algebra generators. In the following we will consider only
operators acting in the fixed tensor product space ${V^{1\dots n}}$ of
(\ref{2.1}). Therefore we will omit the indices $1\dots n$. In terms of the
matrix elements in the auxiliary space $V_{a}$ the generators act on the basis
states as
\begin{equation}
\langle\,\alpha_{1},\dots,\alpha_{i},\dots,\alpha_{n}\,|\,M_{\alpha}%
^{\alpha^{\prime}}=\sum_{i=1}^{n}\Big(\delta_{\alpha\alpha_{i}}\,\langle
\,\alpha_{1},\dots,\alpha^{\prime},\dots,\alpha_{n}\,|-\delta_{\alpha^{\prime
}\bar{\alpha}_{i}}\,\langle\,\alpha_{1},\dots,\bar{\alpha},\dots,\alpha
_{n}\,|\Big). \label{4.4}%
\end{equation}
The diagonal elements of $M_{\alpha}^{\alpha^{\prime}}$ are the the weight
operators $W_{\alpha}$ with%
\begin{align*}
\langle\,\alpha_{1},\dots,\alpha_{i},\dots,\alpha_{n}\,|\,W_{\alpha}  &
=w_{\alpha}\langle\,\alpha_{1},\dots,\alpha_{i},\dots,\alpha_{n}\,|\\
w_{\alpha}  &  =n_{\alpha}-n_{\bar{\alpha}}%
\end{align*}
where $n_{\alpha}$ is the number of particles $\alpha$ in the state. It is
sufficient to consider only the weights%
\begin{equation}
w=\left(  w_{1},\dots,w_{\left[  N/2\right]  }\right)  \label{w}%
\end{equation}
because of $W_{\alpha}=-W_{\bar{\alpha}}\,$and $\langle\,\underline{\alpha
}\,|W_{0}=0$ for $N$ odd.

The Yang-Baxter relations (\ref{TTS}) yield for $u_{a}\rightarrow\infty$
\begin{equation}
\lbrack M_{a}+M_{ab},T_{b}(u_{b})]=0 \label{4.5}%
\end{equation}
and if additionally $u_{b}\rightarrow\infty$, we get
\begin{equation}
\lbrack M_{a}+M_{ab},M_{b}]=0, \label{4.6}%
\end{equation}
or for the matrix elements (in the real basis)%
\begin{align}
\lbrack M_{\alpha}^{\alpha^{\prime}},T_{\beta}^{\beta^{\prime}}(u)]  &
=-\delta_{\alpha}^{\beta^{\prime}}T_{\beta}^{\alpha^{\prime}}(u)+\delta
^{\alpha^{\prime}\beta^{\prime}}T_{\beta}^{\alpha}(u)+T_{\alpha}%
^{\beta^{\prime}}(u)\delta_{\beta}^{\alpha^{\prime}}-T_{\alpha^{\prime}%
}^{\beta^{\prime}}(u)\delta_{\alpha\beta}\label{4.7}\\
\lbrack M_{\alpha}^{\alpha^{\prime}},M_{\beta}^{\beta^{\prime}}]  &
=-\delta_{\alpha}^{\beta^{\prime}}M_{\beta}^{\alpha^{\prime}}+\delta
^{\alpha^{\prime}\beta^{\prime}}M_{\beta}^{\alpha}+M_{\alpha}^{\beta^{\prime}%
}\delta_{\beta}^{\alpha^{\prime}}-M_{\alpha^{\prime}}^{\beta^{\prime}}%
\delta_{\alpha\beta}. \label{4.8}%
\end{align}
Equation (\ref{4.8}) represents the structure relations of the $O(N)$ Lie
algebra and (\ref{4.7}) the $O(N)$-covariance of $T$. In particular the
transfer matrix is invariant
\begin{equation}
\lbrack M_{\alpha}^{\alpha^{\prime}},\operatorname*{tr}T(u)]=0. \label{4.9}%
\end{equation}

\begin{theorem}
\label{HW}

\begin{enumerate}
\item If the co-vector valued function
\[
K_{\underline{\alpha}}(\underline{u})=\sum_{\underline{v}}\,g(\underline
{u},\underline{v})\,L_{\underline{\mathring{\beta}}}(\underline{v}%
)\,\Phi_{\underline{\alpha}}^{\underline{\mathring{\beta}}}(\underline
{u},\underline{v})
\]
is given by the nested off-shell Bethe ansatz (\ref{3.19}) the weights
(\ref{w}) are $w=$%
\[
(w_{1},\dots,w_{\left[  N/2\right]  })=\left\{
\begin{array}
[c]{lll}%
\left(  n-n_{1},\dots,n_{\left[  N/2\right]  -1}-n_{\left[  N/2\right]
}\right)  & \text{for} & N~\text{odd}\\
\left(  n-n_{1},\dots,n_{\left[  N/2\right]  -2}-n_{-}-n_{+},n_{-}%
-n_{+}\right)  & \text{for} & N~\text{even\thinspace.}%
\end{array}
\right.
\]

\item If $K_{\underline{\alpha}}(\underline{u})$ satisfies the conditions of
theorem \ref{TN} and if $L_{\underline{\mathring{\beta}}}(\underline{v})$ is a
highest weight state, then $K_{\underline{\alpha}}(\underline{u})$ is a
highest weight state:%
\[
K(\underline{u})M_{\alpha}^{\alpha^{\prime}}=0\,~\text{for }\alpha^{\prime
}<\alpha\,.
\]
(Recall that $\alpha^{\prime}<\alpha$ is to be understood corresponding to the
ordering\newline$1,2,\dots,\left[  N/2\right]  ,(0),\overline{\left[
N/2\right]  },\dots,\bar{2},\bar{1}$ .)

\item The weights satisfy the highest weight condition%
\[
\left\{
\begin{array}
[c]{lll}%
w_{1}\geq w_{2}\geq\dots\geq w_{\left[  N/2\right]  }\geq0 & \text{for} &
N~\text{odd}\\
w_{1}\geq w_{2}\geq\dots\geq|w_{\left[  N/2\right]  }| & \text{for} &
N~\text{even.}%
\end{array}
\right.
\]

\end{enumerate}
\end{theorem}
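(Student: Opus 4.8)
The plan is to exploit that the $O(N)$ generators $M_\alpha^{\alpha'}$ arise as the $1/u$ coefficient in the asymptotic expansion (\ref{4.3}) of the monodromy matrix, so that their action on the Bethe vectors can be controlled entirely through the covariance relation (\ref{4.7}) together with the observation that the reference state $\Omega$ of (\ref{omega}) is itself highest weight. Indeed, from (\ref{4.4}) one checks at once that $\Omega M_\alpha^{\alpha'}=0$ for $\alpha'<\alpha$: raising the weight of a configuration of pure particles $1$ would require either $\alpha=1$ with $\alpha'<1$, or $\alpha'=\bar1$ with $\bar1<\alpha$, both impossible. This is the seed of the whole argument, and the three parts are treated in turn: the weights by tracking the diagonal (Cartan) generators, the highest-weight property by commuting the raising generators to the left onto $\Omega$, and the dominance inequalities as a consequence of the first two.

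For part 1 I would compute the weight of $\Phi^{\underline{\mathring\beta}}_{\underline\alpha}(\underline u,\underline v)$ of (\ref{Phi}) by particle counting. The operators building $\Phi$ on top of $\Omega$ are the entries $T_1^{\beta_i}$ with $\beta_i\in\{2,\dots,\bar2\}$; each has lower index $1$, so it removes one particle $1$ and produces an index in $\mathring V$, all of whose states carry vanishing $w_1$-weight. Since $\Omega$ has weight $(n,0,\dots,0)$ by (\ref{1.40}), the $m$ creation operators give $w_1=n-m=n-n_1$. The remaining components are carried entirely by the $\mathring V$-indices, i.e. by the higher-level covector $L$, and the intertwining property (\ref{RPi}) of $\Pi$ guarantees these indices transform exactly in the $O(N-2)$ vector representation; hence $w_k=n_{k-1}-n_k$ (with $n_0=n$) follows by applying the same bookkeeping recursively to $L$, the recursion terminating at the $O(3)$ weight (\ref{w3}) for $N$ odd and the $O(4)$ weight (\ref{w4}) for $N$ even, which produces the two final components. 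This yields the stated formula.

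For part 2, since every positive-root generator is a nested commutator of simple ones, the structure relations (\ref{4.8}) reduce the claim to $KM_\alpha^{\alpha'}=0$ for the simple roots. The simple roots internal to $\mathring V$ (both indices in $\{2,\dots,\bar2\}$) generate the $\mathfrak{o}(N-2)$ subalgebra; acting from the right they leave $\Omega$ invariant and, by (\ref{4.7}), rotate only the upper $\mathring V$-index of each creation operator, with no anomalous term involving the index $1$. Thus $K$ is $O(N-2)$-covariant and built from $L$ through the intertwiner $\Pi$, so it is $O(N-2)$-highest-weight precisely because $L$ is assumed to be. The genuinely new condition is the single boundary root $M_2^1$ coupling the outer index to the nested level. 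Here I would push $M_2^1$ leftward through $T_1^{\beta_m}\cdots T_1^{\beta_1}$ using (\ref{4.7}); the term reaching $\Omega$ vanishes, while the commutators $[M_2^1,T_1^{\mathring\beta}]=-\delta_2^{\mathring\beta}A+T_2^{\mathring\beta}$ split into \emph{wanted} contributions, reproducing a Bethe state with one shifted spectral parameter, and \emph{unwanted} contributions. The decisive point is that the unwanted terms cancel only after the Jackson sum over $\underline v$ in (\ref{3.12}) is carried out, the cancellation being enforced by the functional equations (\ref{3.17}) for $\psi$ and $\tau$ in the weight $g(\underline u,\underline v)$ of (\ref{3.15}). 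This is the same mechanism, and essentially the same computation, that proves Theorem \ref{TN} in Appendix \ref{a2}, and it is the main obstacle of the entire proof.

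Finally, part 3 is a corollary of parts 1 and 2. A nonzero covector annihilated from the right by all raising generators is a highest-weight vector of the finite-dimensional $O(N)$-module ${V}_{1\dots n}$, and the highest weight of any such module is dominant; for $B_{[N/2]}$ ($N$ odd) and $D_{N/2}$ ($N$ even) dominance is exactly the stated chain $w_1\geq\dots\geq w_{[N/2]}\geq0$, respectively $w_1\geq\dots\geq|w_{[N/2]}|$. Equivalently, inserting the explicit weights of part 1 shows these inequalities amount to $n\geq n_1\geq n_2\geq\dots$, so that any choice of rapidity numbers violating them must force $K(\underline u)=0$.
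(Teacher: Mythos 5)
Your overall architecture coincides with the paper's proof in Appendix \ref{a5}: part 1 by weight bookkeeping on $\Omega$ and the creation operators, with the recursion terminating in the $O(3)$ and $O(4)$ weights (\ref{w3}), (\ref{w4}); part 2 by letting raising generators act on $\Psi$ and cancelling the resulting terms pairwise under the Jackson sum via $\chi_{i}(\underline{u},\underline{v})\,g(\underline{u},\underline{v})=g(\underline{u},\underline{v}^{(i)})$; part 3 from the highest-weight property (the paper derives dominance by hand from $0\leq M_{\alpha}^{\beta}\left(M_{\alpha}^{\beta}\right)^{\dag}$ and (\ref{4.8}), which is exactly the standard proof of the general fact you cite, so that step is equivalent in substance). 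Two of your choices genuinely differ. Your reduction to simple roots --- the single boundary root $M_{2}^{1}$ plus the roots internal to $\mathring{V}$ --- is a real economy: the paper instead treats \emph{all} raising operators directly in four classes, $M_{\mathring{\gamma}}^{1}$, $M_{\bar{1}}^{\mathring{\gamma}}$, $M_{\bar{1}}^{1}$ and $M_{\mathring{\gamma}}^{\mathring{\gamma}^{\prime}}$ (its items i)--iv), the last being your internal-root argument), so your route would spare the double-shift cancellations involving $X^{(ij)}$ that the paper needs for $M_{\bar{1}}^{1}$. Your internal-root argument and your part 1 both rely, correctly, on the intertwining property (\ref{RPi}) of $\Pi$ in the $u\rightarrow\infty$ limit, which is how the paper's item iv) transfers the claim to the assumed highest-weight property of $L$.

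The gap is in your treatment of $M_{2}^{1}$. The commutator $[M_{2}^{1},T_{1}^{\mathring{\beta}}]=-\delta_{2}^{\mathring{\beta}}A+T_{2}^{\mathring{\beta}}$ is right, but the insertions $A(\underline{u},v_{i})$ and $T_{2}^{\beta_{i}}(\underline{u},v_{i})$ must then be carried through the remaining $T_{1}^{\beta_{j}}(\underline{u},v_{j})$ to reach $\Omega$, and that requires the commutation rules of the $O(N)$ Yang--Baxter algebra --- precisely what this paper deliberately avoids (Lemma \ref{l2} is introduced as a \emph{replacement} for commutation rules), because the $\mathbf{K}$-term of the R-matrix makes them unmanageable and generates the $C_{2}$/$C_{3}$-type structures. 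The paper's actual device is different: it starts from exact identities at finite spectral parameter, e.g. $0=\left(L\Pi\right)_{\underline{\beta}}\Omega B_{\mathring{\gamma}}(v)T_{1}^{\beta_{m}}\cdots T_{1}^{\beta_{1}}$, moves $B_{\mathring{\gamma}}(v)$ around by Yang--Baxter moves and Lemma \ref{l2}, uses the $\Pi$-matrix identities (\ref{m1}), (\ref{m3}), (\ref{m6}) to isolate the surviving components, and only then extracts the $1/v$ coefficient, obtaining $0=\Psi M_{\mathring{\gamma}}^{1}-\sum_{i}X_{\mathring{\gamma}}^{(i)}(\underline{u},\underline{v})+\sum_{i}X_{\mathring{\gamma}}^{(i)}(\underline{u},\underline{v}^{(i)})\chi_{i}(\underline{u},\underline{v})$. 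Note also a conceptual slip: for the highest-weight property there is no surviving ``wanted'' term at all --- unlike Theorem \ref{TN}, where the $A_{Q}$ contribution reproduces $K(\underline{u}^{\prime})$, here every term must cancel under the Jackson sum. So you have identified the correct cancellation mechanism, but the step you defer to ``the same computation as Appendix \ref{a2}'' is the entire content of the proof, and the route you sketch for it (naive commutator pushing via (\ref{4.7})) is not the one that makes that computation feasible; without Lemma \ref{l2} and the $\Pi$-matrix identities entering explicitly, the cancellation cannot actually be exhibited.
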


The proof of this theorem can be found in appendix \ref{a5}. We mention that
for $N$ even the highest weight property was already discussed in appendix B
of \cite{dVK}.

\section{Conclusion}

\label{s6}

In this article we solved the $O(N)$ -matrix difference equations by means of
the off-shell algebraic nested Bethe ansatz. We introduced a new object called
$\Pi$-matrix to overcome the difficulties connected to the special
peculiarities of the $O(N)$ symmetric R-matrix structure. The highest weights
properties of\ the solutions were analyzed. We believe that our construction
can also be applied to the cases with similar group theoretical complexities,
such as $B_{n}$, $C_{n}$, $D_{n}$ Lie algebras and superalgebra $Osp(n|2m)$
(see \cite{MR}).

\paragraph{\textbf{Acknowledgment:}}

The authors have profited from discussions with A.~Fring, R. Schra\-der,
F.~Smirnov and A.~Belavin. H.B. thanks R.~Flume, R.~Poghossian and P.~Wiegmann
for valuable discussions. H.B. and M.K. were supported by Humboldt Foundation
and H.B. is also supported by the Armenian grant 11-1\_c028. A.F. acknowledges
support from DAAD (Deutscher Akademischer Austausch Dienst) and CNPq (Conselho
Nacional de Desenvolvimento Cient\'{\i}fico e Tecnol\'{o}gico).

\appendix

\section*{Appendix}

\renewcommand{\theequation}{\mbox{\Alph{section}.\arabic{equation}}} \setcounter{equation}{0}

\section{Proof of Lemma
\protect\ref{LM}%
}

\label{a1}

\begin{proof}
\begin{itemize}
\item[(a)] We prove (\ref{Pi'}) by induction: It is true for $m=2$, because
similar to (\ref{Pi2})%
\[
\pi_{1}\pi_{2}\bar{e}_{b}T_{12,b}(\underline{u},u_{b})\bar{e}^{b}=\pi_{1}%
\pi_{2}+f(u_{12})\mathbf{\mathring{C}}^{12}\bar{e}_{1}e_{2}\,.
\]
We assume (\ref{Pi'}) for $m-1$ and replace in the definition (\ref{Pi})
$\Pi_{2\dots m}$ as given by (\ref{Pi'})%
\begin{align*}
\Pi_{1\dots m}  &  =\pi_{1}\Pi_{2\dots m}\bar{e}_{a}T_{1\dots m,a}\bar{e}%
^{a}\\
&  =\pi_{1}\left(  \Pi_{2\dots m-1}\pi_{m}\bar{e}_{b}T_{2\dots m,b}\bar{e}%
^{b}\right)  \bar{e}_{a}T_{1\dots m,a}\bar{e}^{a}\\
&  =\left(  \pi_{1}\Pi_{2\dots m-1}\pi_{m}\right)  \bar{e}_{b}T_{1\dots
m,b}\bar{e}^{b}\bar{e}_{a}T_{1\dots m,a}\bar{e}^{a}\\
&  =\left(  \pi_{1}\Pi_{2\dots m-1}\pi_{m}\right)  \bar{e}_{a}T_{1\dots
m,a}\bar{e}^{a}\bar{e}_{b}T_{1\dots m,b}\bar{e}^{b}\\
&  =\left(  \pi_{1}\Pi_{2\dots m-1}\pi_{m}\right)  \bar{e}_{a}T_{1\dots
m-1,a}\bar{e}^{a}\bar{e}_{b}T_{1\dots m,b}\bar{e}^{b}\\
&  =\left(  \Pi_{1\dots m-1}\pi_{m}\right)  \bar{e}_{b}T_{1\dots m,b}\bar
{e}^{b}\,.
\end{align*}
for $u_{a}=u_{1}-1$ and $u_{b}=u_{m}-1/\nu+1$. Going from equality 2 to
equality 3
\[%
\begin{array}
[c]{c}%
\unitlength4mm\begin{picture}(6,7.5)\thicklines \put(.5,6){\line(0,1){1.5}} \put(1.5,6){\line(0,1){1.5}} \put(3,6){\line(0,1){1.5}} \put(4,6){\line(0,1){1.5}} \put(0,4){\framebox(4.5,2){$\Pi_m$}} \put(.5,0){\line(0,1){4}} \put(1.5,0){\line(0,1){4}} \put(3,0){\line(0,1){4}} \put(4,0){\line(0,1){4}} \end{picture}
\end{array}%
{\phantom{\rule{3.4255in}{1.2393in}}}%
=~~~~%
\begin{array}
[c]{c}%
\unitlength4mm\begin{picture}(6,7.5)\thicklines \put(1.5,6){\line(0,1){1.5}} \put(3,6){\line(0,1){1.5}} \put(-.5,4.3){$\pi\,\bullet$} \put(3.8,4.3){$\bullet\,\pi$} \put(1,4){$\framebox(2.5,2){$\Pi_{m-2}$}$} \put(.55,0){\line(0,1){7.5}} \put(4.,0){\line(0,1){7.5}} \put(1.5,0){\line(0,1){4}} \put(3,0){\line(0,1){4}} \put(4,0){\line(0,1){4}} \put(1.2,2.8){\line(4,-1){3.3}} \put(.7,2.6){$\bar1$} \put(4.7,1.6){$\bar1$} \put(2.1,1.6){$b$} \put(0,1.5){\line(4,-1){4.5}} \put(-.5,1.3){$\bar1$} \put(4.7,0){$\bar1$} \put(2.1,.1){$a$} \end{picture}
\end{array}
=~~~~%
\begin{array}
[c]{c}%
\unitlength4mm\begin{picture}(6,7.5)\thicklines \put(1.5,6){\line(0,1){1.5}} \put(3,6){\line(0,1){1.5}} \put(-.5,4.3){$\pi\,\bullet$} \put(3.8,4.3){$\bullet\,\pi$} \put(1,4){$\framebox(2.5,2){$\Pi_{m-2}$}$} \put(.55,0){\line(0,1){7.5}} \put(4.,0){\line(0,1){7.5}} \put(1.5,0){\line(0,1){4}} \put(3,0){\line(0,1){4}} \put(4,0){\line(0,1){4}} \put(0,2.8){\line(4,-1){4.5}} \put(-.5,2.6){$\bar1$} \put(4.7,1.6){$\bar1$} \put(2.1,1.6){$b$} \put(0,1.5){\line(4,-1){4.5}} \put(-.5,1.3){$\bar1$} \put(4.7,0){$\bar1$} \put(2.1,.1){$a$} \end{picture}
\end{array}
\]
we have $T_{2\dots m,b}$ replaced by $T_{1\dots m,b}$, which means
$R_{1b}(u_{1b})$ may be replaced by $\mathbf{1}_{1b}$, because%
\[
\bar{e}_{b}R_{1b}(u_{1b})\bar{e}_{a}R_{1a}(u_{1a})=\bar{e}_{b}\bar{e}%
_{a}R_{1a}(1)+c(u_{1b})\bar{e}_{1}\bar{e}_{a}R_{1a}(1)=\bar{e}_{b}\bar{e}%
_{a}R_{1a}(1)
\]
holds, where $\bar{e}_{1}\bar{e}_{a}R_{1a}(1)=\bar{e}_{1}\bar{e}_{a}a(1)=0$
has been used. Similarly, equality 5 holds. Equality 4 holds because the
Yang-Baxter equation for $R$ implies that $\bar{e}_{a}T_{1\dots m,a}\bar
{e}^{a}$ and $\bar{e}_{b}T_{1\dots m,b}\bar{e}^{b}$ commute.

\item[(b)] Again we prove (\ref{RPi}) by induction. For $m=2$ the claim was
proved in section \ref{s2.4}, for $m>2$ it follows for $1<i<j$ from (\ref{Pi})
and for $i<j<m$ from (\ref{Pi'}).

\item[(c)] The proof of equation (\ref{PiT}) is similar to that of a). We
commute $T(u_{a})$ and $T(u_{0})$, use $\bar{e}_{0}\pi_{1}R_{10}(u_{10}%
)R_{1a}(1)=\bar{e}_{0}\pi_{1}$ and apply induction:%
\begin{align*}
\Pi_{1\dots m}\bar{e}_{0}T_{1\dots m,0}(u_{0})\bar{e}^{0}  &  =\left(  \pi
_{1}\Pi_{2\dots m}\right)  \bar{e}_{a}T_{1\dots m,a}\bar{e}^{a}\bar{e}%
_{0}T_{1\dots m,0}(u_{0})\bar{e}^{0}\\
&  =\left(  \pi_{1}\Pi_{2\dots m}\right)  \bar{e}_{0}T_{1\dots m,0}(u_{0}%
)\bar{e}^{0}\bar{e}_{a}T_{1\dots m,a}\bar{e}^{a}\\
&  =\pi_{1}\left(  \Pi_{2\dots m}\right)  \bar{e}_{0}T_{2\dots m,0}(u_{0}%
)\bar{e}^{0}\bar{e}_{a}T_{1\dots m,a}\bar{e}^{a}\\
&  =\left(  \pi_{1}\Pi_{2\dots m}\right)  \bar{e}_{a}T_{1\dots m,a}\bar{e}%
^{a}=\Pi_{1\dots m}\,.
\end{align*}

\item[(d)] Equations (\ref{m1}) and (\ref{m3}) follow from (\ref{Pi}) and
$R_{1\alpha}^{\bar{1}\mathring{\alpha}_{1}}(1)=0,~R_{\mathring{\alpha}\alpha
}^{\bar{1}\mathring{\alpha}_{1}}(1)=\delta_{\mathring{\alpha}}^{\mathring
{\alpha}_{1}}\delta_{\alpha}^{\bar{1}}$ and analogously (\ref{m4}) and
(\ref{m5}).
\end{itemize}
\end{proof}

\section{Proof of the main theorem
\protect\ref{TN}%
}

\label{a2}

In the following we use the convention that $\alpha,\beta$ etc. take the
values $1,2,\dots,(0),\dots,\bar{2},\bar{1}$ and $\mathring{\alpha}%
,\mathring{\beta}$ etc. take the values $2,\dots,(0),\dots,\bar{2}$.%
\begin{align*}
K_{\underline{\alpha}}(\underline{u})  &  =\sum_{\underline{v}}\,g(\underline
{u},\underline{v})\,\Psi_{\underline{\alpha}}(\underline{u},\underline
{v}),~\Psi_{\underline{\alpha}}(\underline{u},\underline{v})=L_{\underline
{\mathring{\beta}}}(\underline{v})\Phi_{\underline{\alpha}}^{\underline
{\mathring{\beta}}}(\underline{u},\underline{v})\\
\Phi_{\underline{\alpha}}^{\underline{\mathring{\beta}}}(\underline
{u},\underline{v})  &  =\Pi_{\underline{\beta}}^{\underline{\mathring{\beta}}%
}(\underline{v})\left(  \Omega T_{1}^{\beta_{m}}(\underline{u},v_{m})\dots
T_{1}^{\beta_{1}}(\underline{u},v_{1})\right)  _{\underline{\alpha}}\,.
\end{align*}

\paragraph{(i)}

\begin{proof}
Property (i) in the form of (\ref{3.1}) follows directly from the Yang-Baxter
equations and the action of the R-matrix on the pseudo-ground state $\Omega$
\begin{align*}
\left(  T_{\dots ji\dots}\right)  _{1}^{\beta}(\dots u_{j},u_{i}\dots
)\,R_{ij}(u_{ij})  &  =\,R_{ij}(u_{ij})\,\left(  T_{\dots ij\dots}\right)
_{1}^{\beta}(\dots u_{i},u_{j}\dots)\\
\Omega_{\dots ij\dots}R_{ij}(u_{ij})  &  =a(u_{ij})\Omega_{\dots ij\dots}\,.
\end{align*}

\end{proof}

\paragraph{(ii)}

\begin{proof}
We prove
\begin{equation}
K_{1\dots n}(\underline{u}^{\prime})=K_{1\dots n}(\underline{u})\,Q_{1\dots
n}(\underline{u})\,, \label{3.34}%
\end{equation}
where $\underline{u}^{\prime}=(u_{1}+2/\nu,u_{2}\dots,u_{n})$. The matrix
$Q(\underline{u})=Q(\underline{u},1)$ is given by (\ref{3.3'}). Note that we
assign to the auxiliary space of $\tilde{T}_{Q}(\underline{u})$ the spectral
parameter $u_{1}$ on the right hand side and $u_{1}^{\prime}=$ $u_{1}+2/\nu$
on the left hand side. The difference equation (\ref{3.34}) may be depicted
as
\[%
\begin{array}
[c]{c}%
\unitlength4mm\begin{picture}(5,4) \put(2.5,2){\oval(5,2)} \put(2.5,2){\makebox(0,0){$K$}} \put(1,0){\line(0,1){1}} \put(2,0){\line(0,1){1}} \put(4,0){\line(0,1){1}} \put(2.4,.5){$\dots$} \end{picture}
\end{array}
~~=~~%
\begin{array}
[c]{c}%
\unitlength4mm\begin{picture}(7,5) \put(4.5,2){\oval(5,2)[b]} \put(3.5,2){\oval(7,6)[t]} \put(1,2){\oval(2,2)[lb]} \put(1,0){\oval(2,2)[rt]} \put(3.5,3){\oval(5,2)} \put(3.5,3){\makebox(0,0){$K$}} \put(3,0){\line(0,1){2}} \put(5,0){\line(0,1){2}} \put(3.4,.5){$\dots$} \end{picture}
\end{array}
\]
where we use the rule that the rapidity of a line changes by $2/\nu$ if the
line bends by $360^{0}$ in the positive sense. In the following we will
suppress the indices $1\dots n$. We are now going to prove (\ref{3.34}) in the
form%
\begin{equation}
K(\underline{u})\left(  A_{Q}(\underline{u})+D{_{Q}{}}_{\mathring{\beta}%
}^{\mathring{\beta}}(\underline{u})+A_{3,Q}(\underline{u})\right)
=\prod_{k=2}^{n}a(u_{k1})K(\underline{u}^{\prime})\,, \label{3.42}%
\end{equation}
where $K(\underline{u})$ is a co-vector valued function as given by
eq.~(\ref{3.12}) and the Bethe ansatz state (\ref{Psi}). To analyze the left
hand side of eq.~(\ref{3.42}) we proceed as follows: We apply the trace of
$T_{Q}$ to the co-vector $K(\underline{u})$. In particular we calculate
$\Phi^{\underline{\mathring{\beta}}}(\underline{u},\underline{v}%
)T_{Q}(\underline{u})$%
\[
\Pi_{\underline{\beta}}^{\underline{\mathring{\beta}}}\Omega\,T_{1}^{\beta
_{m}}(\underline{u},v_{m})\cdots T_{1}^{\beta_{1}}(\underline{u},v_{1})\left(
T{_{Q}}(\underline{u})\right)  _{\gamma}^{\gamma^{\prime}}=~%
\begin{array}
[c]{c}%
{\phantom{\rule{1.9397in}{1.382in}}}%
\unitlength5mm\begin{picture}(11,7.2)\thicklines\put(10,5){\oval(17,2)[lb]}
\put(10,5){\oval(20,6)[lb]}
\put(4,0){\line(0,1){5}}
\put(9,0){\line(0,1){5}}
\put(10,5){\oval(14,8)[lb]}
\put(2,0){\oval(2,2)[rt]}
\put(0,6){\line(0,1){.5}}
\put(1.5,6){\line(0,1){.5}}
\put(-.3,5){\framebox(2,1){$\Pi_m$}}
\put(-.4,6.8){$\mathring{\beta}_1$}
\put(1,6.8){$\mathring{\beta}_m$}
\put(2.8,5.3){1}
\put(3.8,5.3){1}
\put(8.8,5.3){1}
\put(1.4,.8){$\gamma'$}
\put(10.2,.8){$\gamma$}
\put(10.2,1.8){1}
\put(10.2,3.8){1}
\put(4.3,0){$u_{2}$}
\put(9.2,0){$u_{n}$}
\put(7.5,2.3){$v_1$} \put(7.5,4.3){$v_m$}
\put(5.5,3){$\dots$}
\put(9.5,2.7){$\vdots$}
\put(7.5,1.2){$u_1$}
\put(2,0){$u_1'$} \end{picture}%
\end{array}
.
\]
We now proceed as usual in the algebraic Bethe ansatz and push $A_{Q}%
(\underline{u}),\,D{_{Q}{}}_{\mathring{\beta}}^{\mathring{\beta}}%
(\underline{u})$ and $A_{3,Q}(\underline{u})$ through all the $T_{1}%
^{\beta_{i}}$-operators. As usual we obtain wanted terms and unwanted terms.
We first find that the wanted contribution from $A_{Q}(\underline{u})$ already
gives the result we are looking for. Secondly the wanted contributions from
$D{_{Q}{}}_{\mathring{\beta}}^{\mathring{\beta}}(\underline{u})$ and
$A_{3,Q}(\underline{u})$ applied to $\Omega$ give zero because of
(\ref{1.41}). Thirdly the unwanted contributions from $A_{Q}(\underline{u}),$
$D{_{Q}{}}_{\mathring{\beta}}^{\mathring{\beta}}(\underline{u})$ and
$A_{3,Q}(\underline{u})$ can be written as differences which cancel after
summation over the $v_{j}$. All these three facts can be seen as follows.

The \textquotedblleft wanted terms" from $\,{A_{Q}}$ are obtained if one
writes the Zapletal commutation rule (\ref{3.8}) as%
\[
{T}_{1}^{\beta_{k}}({\underline{u}},v_{k})\,{A_{Q}}({\underline{u}})={A_{Q}%
}({\underline{u}})\,{T}_{1}^{\beta_{i}}({\underline{u}}^{\prime}%
,v_{k})\,a(u_{1}-v_{k})+uw\,.
\]
Therefore we obtain%
\[
\left(  L(\underline{v})\Pi\underline{u},\underline{v}\right)  _{\underline
{\beta}}\Omega T_{1}^{\beta_{m}}(\underline{u},v_{m})\cdots T_{1}^{\beta_{1}%
}(\underline{u},v_{1})A_{Q}(\underline{u})=w^{A}+uw^{A}%
\]
with%
\begin{align}
w^{A}(\underline{u},\underline{v})  &  =\left(  L(\underline{v})\Pi
\underline{u},\underline{v}\right)  _{\underline{\beta}}\Omega A_{Q}%
(\underline{u})\,T_{1}^{\beta_{m}}(\underline{u}^{\prime},v_{m})\cdots
T_{1}^{\beta_{1}}(\underline{u}^{\prime},v_{1})\prod_{k=1}^{m}a(u_{1}%
-v_{k})\label{wa}\\
&  =\prod_{k=2}^{n}a(u_{k1})\prod_{k=1}^{m}a(u_{1}-v_{k})L_{\mathring{\beta}%
}(\underline{v})\Phi^{\underline{\mathring{\beta}}}(\underline{u}^{\prime
},\underline{v})\nonumber
\end{align}
and similarly $w^{D}=w^{A_{3}}=0$, because of (\ref{1.41}). Inserted into
(\ref{3.12}) this yields%
\begin{align*}
\left(  K(\underline{u})Q\right)  ^{wanted}(\underline{u})  &  =\sum
_{\underline{v}}g(\underline{u},\underline{v})\prod_{k=1}^{m}a(u_{1}%
-v_{k})L_{\mathring{\beta}}(\underline{v})\Phi^{\underline{\mathring{\beta}}%
}(\underline{u}^{\prime},\underline{v})\\
&  =\sum_{\underline{v}}g(\underline{u}^{\prime},\underline{v})L_{\mathring
{\beta}}(\underline{v})\Phi^{\underline{\mathring{\beta}}}(\underline
{u}^{\prime},\underline{v})=K(\underline{u}^{\prime})
\end{align*}
because
\[
g(\underline{u},\underline{v})\prod_{k=1}^{m}a(u_{1}-v_{k})=g(\underline
{u}^{\prime},\underline{v})\,,
\]
where (\ref{3.17}) has been used. Therefore it remains to prove that the
unwanted terms cancel. This will follow from the lemma below.
\end{proof}

We apply $Q_{1\dots n}({\underline{u}})$ to the state $\Psi_{1\dots
n}(\underline{u},\underline{v})$ (suppressing the quantum space indices)%
\[
\Psi(\underline{u},\underline{v})\left(  A_{Q}(\underline{u})+D{_{Q}{}%
}_{\mathring{\beta}}^{\mathring{\beta}}(\underline{u})+A_{3,Q}(\underline
{u})\right)  =wanted+unwanted\,.
\]
The wanted contribution has been calculated above and the unwanted terms may
be written as (see subsection \ref{a3})%
\begin{align*}
unwanted  &  =uw^{A}+uw^{D}+uw^{A_{3}}\\
uw^{A}  &  =\sum_{i=1}^{m}\left(  uw_{C}^{A,i}\right)  _{\mathring{\gamma}%
}C_{Q}^{\mathring{\gamma}}+\sum_{i=1}^{m}\sum_{\underset{j\neq i}{j=1}}%
^{m}uw_{C_{2}}^{A,ij}C_{2,Q}\\
uw^{D}  &  =\sum_{i=1}^{m}\left(  uw_{C}^{D,i}\right)  _{\mathring{\gamma}%
}C_{Q}^{\mathring{\gamma}}+\sum_{i=1}^{m}\sum_{\underset{j\neq i}{j=1}}%
^{m}uw_{C_{2}}^{D,ij}C_{2,Q}+\sum_{i=1}^{m}\left(  uw_{C_{3}}^{D,i}\right)
_{\mathring{\gamma}}\left(  C_{3,Q}\right)  _{\mathring{\gamma}^{\prime}%
}\mathbf{\mathring{C}}^{\mathring{\gamma}^{\prime}\mathring{\gamma}}\\
uw^{A_{3}}  &  =\sum_{i=1}^{m}\sum_{\underset{j\neq i}{j=1}}^{m}uw_{C_{2}%
}^{A_{3},ij}C_{2,Q}+\sum_{i=1}^{m}\left(  uw_{C_{3}}^{A_{3},i}\right)
_{\mathring{\gamma}}\left(  C_{3,Q}\right)  _{\mathring{\gamma}^{\prime}%
}\mathbf{\mathring{C}}^{\mathring{\gamma}^{\prime}\mathring{\gamma}}\,.
\end{align*}

\begin{lemma}
The unwanted terms satisfy the relations%
\begin{align}
\left(  uw_{C}^{D,i}\right)  _{\mathring{\gamma}}(\underline{u},\underline
{v})\,g(\underline{u},\underline{v})  &  =-\left(  uw_{C}^{A,i}\right)
_{\mathring{\gamma}}(\underline{u},\underline{v}^{(i)})\,g(\underline
{u},\underline{v}^{(i)})\label{C1}\\
\left(  uw_{C_{3}}^{A_{3},i}\right)  _{\mathring{\gamma}}(\underline
{u},\underline{v})\,g(\underline{u},\underline{v})  &  =-\left(  uw_{C_{3}%
}^{D,i}\right)  _{\mathring{\gamma}}(\underline{u},\underline{v}%
^{(i)})\,g(\underline{u},\underline{v}^{(i)})\label{C2}\\
uw_{C_{2}}^{D,ij}(\underline{u},\underline{v}^{(j)})\,g(\underline
{u},\underline{v}^{(j)})  &  =-uw_{C_{2}}^{A,ij}(\underline{u},\underline
{v}^{(ij)})\,g(\underline{u},\underline{v}^{(ij)})-uw_{C_{2}}^{A_{3}%
,ij}(\underline{u},\underline{v})\,g(\underline{u},\underline{v}) \label{C3}%
\end{align}
with the notation%
\begin{align*}
\underline{v}  &  =v_{1},\dots,v_{m}\\
\underline{v}^{(i)}  &  =v_{1},\dots,v_{i}^{\prime},\dots,v_{m}\\
\underline{v}^{(ij)}  &  =v_{1},\dots,v_{i}^{\prime},\dots,v_{j}^{\prime
},\dots,v_{m}\\
v^{\prime}  &  =v+2/\nu\,.
\end{align*}
Equations (\ref{C1}) - (\ref{C3}) imply that all unwanted terms cancel after
summation over the $\underline{v}$ in the Jackson-type Integral (\ref{3.12}).
\end{lemma}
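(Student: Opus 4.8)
The plan is to split the proof into two parts: first, the \emph{implication} that the three relations (\ref{C1})--(\ref{C3}) force all unwanted terms to vanish under the Jackson-type summation, and second, the \emph{derivation} of those relations from the explicit form of the unwanted terms. The first part is purely combinatorial and rests on the structure of the summation (\ref{3.13}): since each $v_j$ runs over $\tilde v_j-2l_j/\nu$ with $l_j\in\mathbf{Z}$, the shift $v_i\mapsto v_i'=v_i+2/\nu$ is absorbed by relabelling $l_i\mapsto l_i-1$, so that $\sum_{\underline v}F(\underline u,\underline v^{(i)})=\sum_{\underline v}F(\underline u,\underline v)$ for any summable $F$, and likewise for $\underline v^{(ij)}$.

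First I would carry out the cancellation assuming (\ref{C1})--(\ref{C3}). Collecting all $C_Q^{\mathring\gamma}$-terms, their total coefficient is $\sum_{\underline v}\big((uw_C^{A,i})_{\mathring\gamma}+(uw_C^{D,i})_{\mathring\gamma}\big)g$; using (\ref{C1}) to rewrite the $D$-contribution at $\underline v$ as minus the $A$-contribution at $\underline v^{(i)}$, and then applying the relabelling invariance, the two sums coincide and cancel. The $C_{3,Q}$-terms cancel by the same argument applied to (\ref{C2}), pairing the $A_3$-contribution against the shifted $D$-contribution. For the $C_{2,Q}$-terms one sums (\ref{C3}) over $i\neq j$: the relabelling turns the left side into $\sum_{\underline v}uw_{C_2}^{D,ij}g$ and the first term on the right into $-\sum_{\underline v}uw_{C_2}^{A,ij}g$, leaving $\sum_{\underline v}\big(uw_{C_2}^{A,ij}+uw_{C_2}^{D,ij}+uw_{C_2}^{A_3,ij}\big)g=0$. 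Thus all unwanted terms cancel once (\ref{C1})--(\ref{C3}) are established.

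The substance of the lemma is therefore part two: producing the explicit coefficients and verifying the relations. I would obtain them exactly as in the standard algebraic Bethe ansatz, but using the Zapletal relation (\ref{3.8}) in place of the ordinary Yang--Baxter algebra (\ref{TTS}) to commute each diagonal block $A_Q,\,D_Q,\,A_{3,Q}$ through the product $T_1^{\beta_m}(\underline u,v_m)\cdots T_1^{\beta_1}(\underline u,v_1)$. Each commutation splits into a \emph{wanted} piece (the block passes with the scalar factor $a(u_1-v_k)$ and its argument shifted to $\underline u'$) and \emph{unwanted} pieces in which the block gets ``stuck'' at one rapidity $v_i$ (producing a $C_Q$ or $C_{3,Q}$ operator) or at two rapidities $v_i,v_j$ (producing $C_{2,Q}$). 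Because the $O(N)$ R-matrix carries the extra annihilation--creation term $\mathbf K$, the blocks $D_Q$ and $A_{3,Q}$ generate unwanted contributions absent for $SU(N)$, which is precisely why all three operator types occur. The $\Pi$-matrix is essential here: its fundamental relation (\ref{RPi}) and the component identities (\ref{m1})--(\ref{m7}) of Lemma \ref{LM} are used to reorder the $\mathring R$-factors produced by the stuck operators and to collapse the contractions with $\bar e,\,e$ into the lower-level $\Pi$-objects appearing in the stated coefficients.

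The matching of coefficients at shifted arguments is where the functional equations (\ref{3.17}) enter and where I expect the main difficulty. Comparing, say, $(uw_C^{D,i})(\underline u,\underline v)\,g(\underline u,\underline v)$ with $(uw_C^{A,i})(\underline u,\underline v^{(i)})\,g(\underline u,\underline v^{(i)})$ requires the ratio $g(\underline u,\underline v^{(i)})/g(\underline u,\underline v)$, which by (\ref{3.15}) and (\ref{3.17}) is a product of $a$-factors from the $\psi(u_k-v_i)$ and of factors from the $\tau(v_i-v_j)$; these must reproduce exactly the ratio of the algebraic amplitudes $c,d$ carried by the two coefficients. The single-rapidity relations (\ref{C1}) and (\ref{C2}) are comparatively direct, but the two-rapidity relation (\ref{C3}) is the hard part: it is a genuine three-term identity, it involves the double shift $\underline v^{(ij)}$ and the delicate equation $\tau(v')a(v')=a(-v)\tau(v)$, and it mixes contributions from $A_Q$, $D_Q$ and $A_{3,Q}$ that are individually reorganized by different $\Pi$-matrix identities. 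Establishing (\ref{C3}) thus amounts to checking that the $O(N)$-specific $\mathbf K$-generated pieces assemble, after use of (\ref{RPi}), into precisely the combination dictated by the $g$-ratios, and this is the step I would expect to demand the most careful bookkeeping.
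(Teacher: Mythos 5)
Your first part --- deducing the cancellation from (\ref{C1})--(\ref{C3}) by absorbing the shifts $v_i\mapsto v_i'$ into the relabelling $l_i\mapsto l_i-1$ of the Jackson sum (\ref{3.13}) --- is correct and is exactly the paper's own closing argument.

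The gap is in your second part, which you yourself identify as the substance of the lemma: it is a plan, not a proof. The relations (\ref{C1})--(\ref{C3}) are never verified; you only describe the bookkeeping that would produce them and explicitly defer the hardest case (\ref{C3}). What the paper actually does is compute closed-form expressions for all seven unwanted coefficients, (\ref{uwa1})--(\ref{uwa32}), in terms of two universal objects, $X^{(i)}_{\mathring{\gamma}}$ of (\ref{Xi}) and $X^{(ij)}$ of (\ref{Xij}), together with the factor $\chi_i$ of (\ref{chi}). The decisive structural fact, which your sketch never isolates, is that each $D$- or $A_3$-type coefficient equals the corresponding $A$- or $D$-type coefficient evaluated at the shifted argument $\underline{v}^{(i)}$, multiplied by $\chi_i$, with $c(u_1-v_i)$, $f(u_1-v_i)$ replacing $c(u_1'-v_i)$, $f(u_1'-v_i)$. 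Once this is established, (\ref{C1})--(\ref{C3}) follow in one line from $\chi_i(\underline{u},\underline{v})\,g(\underline{u},\underline{v})=g(\underline{u},\underline{v}^{(i)})$ and the trivial invariance $c(u_1'-v_i')=c(u_1-v_i)$; without it there is nothing against which to match your $g$-ratios. Moreover, the derivation route you propose --- commutation rules for the blocks of $T_Q$ extracted from the Zapletal relation (\ref{3.8}) --- is precisely the route the paper avoids: appendix \ref{a3} states that lemma \ref{l2}, i.e.\ the pole (Liouville) decomposition (\ref{Sa}), (\ref{Sb}) of the monodromy matrix, \emph{replaces} the commutation rules, because the extra $\mathbf{K}$ term makes the full set of $O(N)$ block exchange relations unmanageable. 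This is not a cosmetic difference: the worst coefficient, $uw_{C_2}^{A_3,ij}$, cannot be read off from any single commutation step; the paper needs a separate lemma (lemma \ref{l1}), an insertion of intermediate states $\gamma=1,0,\bar{1}$, the iterated identity (\ref{a.66}), and a symmetry argument fixing $f_{12}$ versus $f_{21}$ before (\ref{uwa32}) emerges. Your proposal acknowledges this difficulty but supplies no mechanism to resolve it, so the lemma remains unproven.
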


\begin{proof}
The first relation (\ref{C1}) is the same as the relation of the unwanted
terms for $SU(N)$. The two others are new and more complicated to derive. We
can calculate the following unwanted contributions explicitly (see subsection
\ref{a3})
\begin{align}
\left(  uw_{C}^{A,i}\right)  _{\mathring{\gamma}}(\underline{u},\underline
{v})  &  =-c(u_{1}^{\prime}-v_{i})X_{\mathring{\gamma}}^{(i)}(\underline
{u},\underline{v})\label{uwa1}\\
\left(  uw_{C}^{D,i}\right)  _{\mathring{\gamma}}(\underline{u},\underline
{v})  &  =c(u_{1}-v_{i})X_{\mathring{\gamma}}^{(i)}(\underline{u}%
,\underline{v}^{(i)})\,\chi_{i}(\underline{u},\underline{v})\label{uwd1}\\
\left(  uw_{C_{3}}^{D,i}\right)  _{\mathring{\gamma}}(\underline{u}%
,\underline{v})  &  =-f(u_{1}^{\prime}-v_{i})X_{\mathring{\gamma}}%
^{(i)}(\underline{u},\underline{v})\label{uwd3}\\
\left(  uw_{C_{3}}^{A_{3},i}\right)  _{\mathring{\gamma}}(\underline
{u},\underline{v})  &  =f(u_{1}-v_{i})X_{\mathring{\gamma}}^{(i)}%
(\underline{u},\underline{v}^{(i)})\,\chi_{i}(\underline{u},\underline{v})
\label{uwa3}%
\end{align}
with
\begin{equation}
X_{\mathring{\gamma}}^{(i)}(\underline{u},\underline{v})=L(v_{i},\underline
{v}_{i})_{\mathring{\gamma}\underline{\mathring{\beta}}_{i}}\Phi
^{\underline{\mathring{\beta}}_{i}}(\underline{u},\underline{v}_{i}%
)\prod_{\underset{k\neq i}{k=1}}^{m}a(v_{ik})a_{1}(\underline{u},v_{i})
\label{Xi}%
\end{equation}
and%
\begin{equation}
\chi_{i}(\underline{u},\underline{v})=\frac{a_{2}(\underline{u},v_{i})}%
{a_{1}(\underline{u},v_{i}^{\prime})}\prod_{\underset{k\neq i}{k=1}}^{m}%
\frac{a(v_{ki})}{a(v_{ik}^{\prime})}\,, \label{chi}%
\end{equation}
where $a_{1}$ and $a_{2}$ are defined in (\ref{1.40}). We use the short
notations $\underline{v}_{i}$ and $\underline{\mathring{\beta}}_{i}$ which
means that $v_{i}$ and $\mathring{\beta}_{i}$ are missing, respectively. The
remaining unwanted terms are
\begin{align}
uw_{C_{2}}^{A,ij}(\underline{u},\underline{v})  &  =-c(u_{1}^{\prime}%
-v_{i})X^{(ij)}(\underline{v})\label{uwa2}\\
uw_{C_{2}}^{D,ij}(\underline{u},\underline{v})  &  =\left(  c(u_{1}%
-v_{i})+f(u_{1}^{\prime}-v_{j})\right)  X^{(ij)}(\underline{v}^{(i)}%
)\,\chi_{i}(\underline{u},\underline{v})\label{uwd2}\\
uw_{C_{2}}^{A_{3},ij}(\underline{u},\underline{v})  &  =-f(u_{1}%
-v_{j})X^{(ij)}(v_{i}^{\prime},v_{j}^{\prime},\underline{v}_{ij})\,\chi
_{i}(\underline{u},\underline{v}^{(j)})\,\chi_{j}(\underline{u},\underline{v})
\label{uwa32}%
\end{align}
with%
\begin{equation}
X^{(ij)}(\underline{u},\underline{v})=f(v_{ij})a(v_{ij})L(v_{i},v_{j}%
,\underline{v}_{ij})\mathbf{\mathring{C}}^{ij}\Phi(\underline{u},\underline
{v}_{ij})\prod_{\underset{k\neq i,j}{k=1}}^{m}\left(  a(v_{ik})a(v_{jk}%
)\right)  a_{1}(v_{i})a_{1}(v_{j})\,, \label{Xij}%
\end{equation}
where again $\underline{v}_{ij}$ means that $v_{i}$ and $v_{j}$ are missing.
The claims (\ref{C1}) - (\ref{C3}) follow from the shift property of the
function $g(\underline{u},\underline{v})$ defined in (\ref{3.15})%
\[
\chi_{i}(\underline{u},\underline{v})g(\underline{u},\underline{v}%
)=g(\underline{u},\underline{v}^{(i)})
\]
which is due to the shift properties (\ref{3.17}) of the functions $\psi(v)$
and $\tau(v)$.

The states $X_{\mathring{\gamma}}^{(1)}$ and $X^{(12)}$ may be depicted as%
\[
X_{\mathring{\gamma}}^{(1)}(\underline{u},\underline{v})=%
\begin{array}
[c]{c}%
{\phantom{\rule{1.4826in}{1.1784in}}}%
\end{array}%
\begin{array}
[c]{c}%
\unitlength3.8mm\begin{picture}(10,9)(0,0) \thicklines\put(3.3,0){$u_1$} \put(5.8,0){$u_n$} \put(2.3,3.8){$v_m$} \put(1.3,2.2){$v_2$} \put(-1.1,5){$\mathring \gamma$} \put(-2.4,6.8){$v_1$} \put(5.3,5.3){$v_1$} \put(3.4,4,8){1} \put(4.4,6.2){1} \put(6.9,6.2){1} \put(8.,0){1} \put(9.2,1){1} \put(9.2,2){1} \put(9.2,3.7){1} \put(5.3,2.4){$\dots$} \put(1.,5){${\dots}$} \put(4.5,0){\line(0,1){6}} \put(7,0){\line(0,1){6}} \put(9,6){\oval(17,9)[lb]} \put(9,6){\oval(13,4)[lb]} \put(4,5){\line(1,0){3}} \put(.7,8.2){$L$} \put(.8,8.6){\oval(4.8,1.2)} \put(7,1){\oval(2,8)[rt]} \put(0,6){\framebox(2.8,1){$\Pi$}} \put(-1,6){\line(0,1){2}} \put(.5,7){\line(0,1){1}} \put(2.5,7){\line(0,1){1}} \end{picture}
\end{array}
,~~\frac{X^{(12)}(\underline{u},\underline{v})}{f(v_{12})a(v_{12})}=%
\begin{array}
[c]{c}%
{\phantom{\rule{1.5424in}{1.1734in}}}%
\end{array}%
\begin{array}
[c]{c}%
\unitlength3.8mm\begin{picture}(10,9)(0,0) \thicklines\put(3.3,0){$u_1$} \put(5.8,0){$u_n$} \put(2.3,3.8){$v_m$} \put(1.3,2.2){$v_3$} \put(5.3,6.2){$v_1$} \put(5.3,5.2){$v_2$} \put(3.4,4.8){1} \put(3.4,5.8){1} \put(4.4,7.2){1} \put(6.9,7.2){1} \put(8.,0){1} \put(9.,0){1} \put(9.9,1){1} \put(9.9,3.7){1} \put(5.3,2.4){$\dots$} \put(1.,5){${\dots}$} \put(4.5,0){\line(0,1){7}} \put(7,0){\line(0,1){7}} \put(9.5,6){\oval(18,9)[lb]} \put(9.5,6){\oval(14,4)[lb]} \put(.2,8.2){$L$} \put(.5,8.6){\oval(5.2,1.2)} \put(-1,8){\oval(1,2)[b]} \put(4,1){\oval(8,8)[rt]} \put(4,1){\oval(10,10)[rt]} \put(0,6){\framebox(2.8,1){$\Pi$}} \put(.5,7){\line(0,1){1}} \put(2.5,7){\line(0,1){1}} \end{picture}
\end{array}
\]

\end{proof}

Note that for the on-shell Bethe ansatz the equations for the unwanted terms
look quite similar as (\ref{uwa1}) - and (\ref{uwa32}) (apart from the fact
that there are no shifts) and their cancellation is equivalent to the Bethe
ansatz equations
\[
\chi_{i}(\underline{u},\underline{v})=\frac{a_{2}(\underline{u},v_{i})}%
{a_{1}(\underline{u},v_{i})}\prod_{\underset{k\neq i}{k=1}}^{m}\frac
{a(v_{ki})}{a(v_{ik})}=\prod_{k=1}^{n}\frac{\left(  v_{i}+\frac{1}{2}\right)
-u_{k}-\frac{1}{2}}{\left(  v_{i}+\frac{1}{2}\right)  -u_{k}+\frac{1}{2}}%
\prod_{\underset{k\neq i}{k=1}}^{m}\frac{v_{ik}+1}{v_{ik}-1}=1\,.
\]

\subsection{The unwanted terms\label{a3}}

In our approach of the algebraic Bethe ansatz lemma \ref{l2} replaces
commutation rules of the entries of the monodromy matrix. In the conventional
approach one derives them from the Yang-Baxter algebra relations (\ref{TTS})
and uses them for the algebraic Bethe ansatz.

\paragraph{The unwanted terms $uw^{A}$:}

We start with (\ref{wa}) and use Yang-Baxter relations and lemma \ref{l2} in
the form of (\ref{Sa})

\noindent$w^{A}=L(\underline{v})%
\begin{array}
[c]{c}%
{\phantom{\rule{1.6172in}{1.35in}}}%
\unitlength4mm\begin{picture}(10,8)
\thicklines\put(3.3,0){$u_1'$}
\put(5.8,0){$u_n$}
\put(5.3,2.8){$v_i$}
\put(5.3,5.3){$u_1$}
\put(3.4,6.2){1}
\put(4.4,6.2){1}
\put(6.9,6.2){1}
\put(8.,0){1}
\put(9.2,1){1}
\put(9.2,2){1}
\put(9.2,3.7){1}
\put(5.4,1.8){$\dots$}
\put(.3,5.5){$_{\dots}$}
\put(1.6,5.5){$_{\dots}$}
\put(4.5,0){\line(0,1){4}}
\put(7,0){\line(0,1){6}}
\put(9,6){\oval(15.5,7)[lb]}
\put(9,6){\oval(18,9)[lb]}
\put(9,6){\oval(13,4)[lb]}
\put(7,6){\oval(5,2)[lb]}
\put(4,6){\oval(1,2)[lb]}
\put(4,4){\oval(1,2)[rt]}
\put(7,1){\oval(2,8)[rt]}
\put(-.2,6){$\framebox(3,1){$\Pi$}$}
\put(0,7){\line(0,1){1}}
\put(1.25,7){\line(0,1){1}}
\put(2.5,7){\line(0,1){1}}
\end{picture}%
\end{array}
=L(\underline{v})%
\begin{array}
[c]{c}%
{\phantom{\rule{1.6717in}{1.35in}}}%
\unitlength4mm\begin{picture}(10,8)\thicklines\put(3.9,0){$u_1'$}
\put(6.8,0){$u_n$}
\put(5.9,3.8){$v_i$}
\put(5.3,5.3){$u_1$}
\put(3.4,6.2){1}
\put(5.0,6.2){1}
\put(8.0,6.2){1}
\put(9.5,.7){1}
\put(9.5,1.8){1}
\put(9.5,3.2){1}
\put(9.5,4.8){1}
\put(5.8,2.5){$\dots$}
\put(.3,5.5){$_{\dots}$}
\put(1.6,5.5){$_{\dots}$}
\put(8,0){\line(0,1){6}}
\put(9,6){\oval(15.5,5)[lb]}
\put(9,6){\oval(18,8)[lb]}
\put(9,6){\oval(13,2)[lb]}
\put(4.5,6){\oval(1.5,10)[lb]}
\put(4.5,0){\oval(1.5,2)[rt]}
\put(9,6){\oval(8,10)[lb]}
\put(-.2,6){$\framebox(3,1){$\Pi$}$}
\put(0,7){\line(0,1){1}}
\put(1.25,7){\line(0,1){1}}
\put(2.5,7){\line(0,1){1}}
\end{picture}%
\end{array}
$ \newline$=L(\underline{v})%
\begin{array}
[c]{c}%
\raisebox{-0.0069in}{\phantom{\rule{1.6717in}{1.35in}}}%
\unitlength4mm\begin{picture}(10,8)\thicklines\put(3.9,0){$u_1'$}
\put(6.8,0){$u_n$}
\put(5.9,3.8){$v_i$}
\put(5.3,5.3){$u_1$}
\put(2.8,.7){1}
\put(5.0,6.2){1}
\put(8.0,6.2){1}
\put(9.5,.7){1}
\put(9.5,1.8){1}
\put(9.5,3.2){1}
\put(9.5,4.8){1}
\put(5.8,2.5){$\dots$}
\put(.3,5.5){$_{\dots}$}
\put(1.6,5.5){$_{\dots}$}
\put(8,0){\line(0,1){6}}
\put(9,6){\oval(15.5,5)[lb]}
\put(9,6){\oval(18,8)[lb]}
\put(9,6){\oval(13,2)[lb]}
\put(3.5,0){\oval(3.5,2)[rt]}
\put(9,6){\oval(8,10)[lb]}
\put(-.2,6){$\framebox(3,1){$\Pi$}$}
\put(0,7){\line(0,1){1}}
\put(1.25,7){\line(0,1){1}}
\put(2.5,7){\line(0,1){1}}
\end{picture}%
\end{array}
+%
{\displaystyle\sum\limits_{i=1}^{m}}
c(u_{1}^{\prime}-v_{i})L(\underline{v})%
\begin{array}
[c]{c}%
{\phantom{\rule{1.6717in}{1.3102in}}}%
\unitlength4mm\begin{picture}(10,8)\thicklines\put(3.2,.4){$v_i$}
\put(6,3.8){$v_i$}
\put(6,.2){$u_1$}
\put(3.4,6.2){1}
\put(5.0,6.2){1}
\put(8.0,6.2){1}
\put(9.5,.7){1}
\put(9.5,1.8){1}
\put(9.5,3.2){1}
\put(9.5,4.8){1}
\put(5.8,5.5){$\dots$}
\put(.3,5.5){$_{\dots}$}
\put(1.6,5.5){$_{\dots}$}
\put(8,0){\line(0,1){6}}
\put(2.1,6){\oval(1.8,5)[lb]}
\put(2.1,2.5){\oval(3.,2)[rt]}
\put(9,6){\oval(18,8)[lb]}
\put(9,6){\oval(13,2)[lb]}
\put(9,6){\oval(11,5)[lb]}
\put(4.2,2.5){\oval(1.2,3)[lb]}
\put(4,0){\oval(2,2)[rt]}
\put(9,6){\oval(8,10)[lb]}
\put(-.2,6){$\framebox(3,1){$\Pi$}$}
\put(0,7){\line(0,1){1}}
\put(1.25,7){\line(0,1){1}}
\put(2.5,7){\line(0,1){1}}
\end{picture}%
\end{array}
\newline=\Psi A_{Q}-uw^{A}$.

Note that the $d(u_{1}^{\prime}-v_{i})$ contributions vanish because they
produce terms with $\Pi_{\dots\bar{1}}^{\dots}=0$ (see (\ref{m4})). This reads
in terms of formulas as
\begin{align*}
w^{A}  &  =\left(  L\Pi\right)  _{\underline{\beta}}\Omega A_{Q}\,T_{1}%
^{\beta_{m}}(\underline{u}^{\prime},v_{m})\cdots T_{1}^{\beta_{1}}%
(\underline{u}^{\prime},v_{1})a(u_{1}-v_{m})\dots a(u_{1}-v_{1})\\
&  =\left(  L\Pi\right)  _{\underline{\beta}}\Omega\,\left(  R_{0m}%
(u_{1}^{\prime}-v_{m})\dots R_{01}(u_{1}^{\prime}-v_{1})\right)
_{\gamma,\underline{\beta}^{\prime}}^{\underline{\beta},1}T_{1}^{\beta
_{m}^{\prime}}(\underline{u},v_{m})\cdots T_{1}^{\beta_{1}^{\prime}%
}(\underline{u},v_{1})\left(  T_{Q}\right)  _{1}^{\gamma}\\
&  =\left(  L\Pi\right)  _{\underline{\beta}}\Omega T_{1}^{\beta_{m}}%
(z_{m})\cdots T_{1}^{\beta_{1}}(z_{1})A_{Q}-uw^{A}\,
\end{align*}
with%
\begin{align*}
uw^{A}  &  =-\sum_{i=1}^{m}c(u_{1}^{\prime}-v_{i})\left(  L\Pi\right)
_{\underline{\beta}}\Omega\left(  R_{0m}(v_{im})\dots\mathbf{P}_{0i}\dots
R_{01}(v_{i1})\right)  _{\gamma,\underline{\beta}^{\prime}}^{\underline{\beta
},1}\,\\
&  \times T_{1}^{\beta_{m}^{\prime}}(\underline{u},v_{m})\cdots T_{1}%
^{\beta_{1}^{\prime}}(\underline{u},v_{1})\left(  T_{Q}\right)  _{1}^{\gamma
}\,,
\end{align*}
where lemma \ref{l2} in the form of (\ref{Sa}) has been used. We commute the
$R_{ik}(v_{ik})$ (for $k<i)$ with $\Pi$ using Yang-Baxter relations and
(\ref{RPi}) and apply the $\mathring{R}_{ik}(v_{ik})$ to $L$ using
(\ref{3.1a}) such that $\mathring{R}_{ik}(v_{ik})\rightarrow a(v_{ik})$.
Further we use Yang-Baxter relations to the $R_{ik}(v_{ik})$ (for $k>i)$ and
note that for $R_{0i}(v_{ik})$ only $R_{11}^{11}(v_{ik})=a(v_{ik})$ contribute%
\begin{align}
uw^{A}  &  =-\sum_{i=1}^{m}c(u_{1}^{\prime}-v_{i})\left(  L(v_{i}%
,\underline{v}_{i})\Pi(v_{i},\underline{v}_{i})\right)  _{\gamma
\underline{\beta}_{i}}\nonumber\\
&  \times\prod_{k=1,k\neq i}^{m}a(v_{ik})\Omega A(v_{i})\left[  T_{1}%
^{\beta_{m}}(v_{m})\cdots T_{1}^{\beta_{1}}(v_{1})\right]  _{i}\left(
T_{Q}\right)  _{1}^{\gamma}\label{AC}\\
&  =\left(  uw_{C}^{A}\right)  _{\mathring{\gamma}}C_{Q}^{\mathring{\gamma}%
}+uw_{C_{2}}^{A}C_{2,Q}\,,
\end{align}
where we use the short notations $\underline{v}_{i},~\underline{\beta}_{i}$
and $\left[  T_{1}^{\beta_{m}}(v_{m})\cdots T_{1}^{\beta_{1}}(v_{1})\right]
_{i}$ which means that $v_{i},$~$\beta_{i}$ and $T_{1}^{\beta_{i}}(v_{i})$ are
missing, respectively. We can apply (\ref{1.40}) $\Omega A(v_{i}%
)=a_{1}(\underline{u},v_{i})\Omega$. The different unwanted terms are due to
different values of $\gamma$ in (\ref{AC}). For $\gamma=1$ the right hand side
in (\ref{AC}) vanishes because of $\Pi_{1\dots}^{\dots}=0$ (see (\ref{m1})),
for $\gamma=\mathring{\gamma}\neq1,\bar{1}$ this gives $uw_{C}^{A}$ and for
$\gamma=\bar{1}$ this gives $uw_{C_{2}}^{A}$.

First we calculate $\left(  uw_{C}^{A}\right)  _{\mathring{\gamma}}$ using
(\ref{m3}) for $\gamma=\mathring{\gamma}$
\begin{align*}
\left(  uw_{C}^{A}\right)  _{\mathring{\gamma}}C_{Q}^{\mathring{\gamma}}  &
=-\sum_{i=1}^{m}c(u_{1}^{\prime}-v_{i})\left(  L(v_{i},\underline{v}_{i}%
)\Pi(v_{i},\underline{v}_{i})\right)  _{\mathring{\gamma}\underline{\beta}%
_{i}}\\
&  \times\prod_{\underset{k\neq i}{k=1}}^{m}a(v_{ik})a_{1}(\underline{u}%
,v_{i})\Omega\left[  T_{1}^{\beta_{m}}(v_{m})\cdots T_{1}^{\beta_{1}}%
(v_{1})\right]  _{i}\left(  T_{Q}\right)  _{1}^{\mathring{\gamma}}\\
&  =-\sum_{i=1}^{m}c(u_{1}^{\prime}-v_{i})X_{\mathring{\gamma}}^{(i)}%
(\underline{u},\underline{v})C_{Q}^{\mathring{\gamma}}%
\end{align*}
with $X_{\mathring{\gamma}}^{(i)}$ defined by (\ref{Xi}).%
\begin{equation}
X_{\mathring{\gamma}}^{(i)}(\underline{u},\underline{v})=L(v_{i},\underline
{v}_{i})_{\mathring{\gamma}\underline{\mathring{\beta}}_{i}}\Phi
^{\underline{\mathring{\beta}}_{i}}(\underline{u},\underline{v}_{i}%
)\prod_{\underset{k\neq i}{k=1}}^{m}a(v_{ik})a_{1}(\underline{u},v_{i})
\end{equation}

The remaining unwanted term $uw_{C_{2}}^{A}$ comes from $\gamma=\bar{1}$ in
(\ref{AC}). Using (\ref{Pia}) we get%
\begin{align*}
uw_{C_{2}}^{A}C_{2,Q}  &  =-\sum_{i=1}^{m}c(u_{1}^{\prime}-v_{i})\left(
L(v_{i},\underline{v}_{i})\Pi(v_{i},\underline{v}_{i})\right)  _{\bar
{1}\underline{\beta}_{i}}\\
&  \times\prod_{\underset{k\neq i}{k=1}}^{m}a(v_{ik})a_{1}(\underline{u}%
,v_{i})\Omega\left[  T_{1}^{\beta_{m}}(v_{m})\cdots T_{1}^{\beta_{1}}%
(v_{1})\right]  _{i}\left(  T_{Q}\right)  _{1}^{\bar{1}}\\
&  =-\sum_{i=1}^{m}c(u_{1}^{\prime}-v_{i})\sum_{\underset{j\neq i}{j=1}}%
^{m}f(v_{ij})\left(  L(v_{i},v_{j},\underline{v}_{ij})\mathbf{\mathring{C}%
}^{ij}\Pi(\underline{v}_{ij})\right)  _{\underline{\beta}_{ij}}\\
&  \times\prod_{\underset{k\neq i}{k=1}}^{m}a(v_{ik})a_{1}(\underline{u}%
,v_{i})\prod_{\underset{k\neq i,j}{k=1}}^{m}a(v_{jk})a_{1}(\underline{u}%
,v_{j})\Omega\left[  T_{1}^{\beta_{m}}(v_{m})\dots T_{1}^{\beta_{1}}%
(v_{1})\right]  _{ij}C_{2,Q}\\
&  =-\sum_{i=1}^{m}c(u_{1}^{\prime}-v_{i})\sum_{\underset{j\neq i}{j=1}}%
^{m}X^{(ij)}(\underline{v})C_{2,Q}%
\end{align*}
with $X^{(ij)}$ defined by (\ref{Xij}). Therefore we obtain $\left(
uw_{C}^{A,i}\right)  _{\mathring{\gamma}}(\underline{u},\underline{v})$ and
$uw_{C_{2}}^{A,ij}(\underline{u},\underline{v})$ in the form of (\ref{uwa1})
and (\ref{uwa2}).

\paragraph{The unwanted terms $uw^{D}$:}

For convenience we add an extra line to $\Pi$ and consider $\Pi_{\gamma
^{\prime}\underline{\beta^{\prime}}}^{\mathring{\gamma}\underline
{\mathring{\beta}}}(u_{1}^{\prime},\underline{v})$. Using $\Omega T{_{Q}{}%
}_{\mathring{\gamma}}^{\gamma}=0$ (see (\ref{1.41}) which also implies that
the $D$-wanted term vanishes) , Yang-Baxter and (\ref{Sb}) we derive \\[2mm]
$0=L(\underline{v})%
\begin{array}
[c]{c}%
{\phantom{\rule{1.4926in}{1.4502in}}}%
\unitlength4mm\begin{picture}(9,8)(1,0)\thicklines\put(3.9,0){$u'_1$}
\put(8.3,0){$u_n$}
\put(6,2.6){$v_i$}
\put(6,4.){$u_1$}
\put(9.5,4.4){$\mathring\gamma$}
\put(1.3,7.6){$\mathring\gamma$}
\put(5.8,5.5){$\dots$}
\put(2.2,4.5){$_{..}$}
\put(2.8,4.5){$_{\dots}$}
\put(8,0){\line(0,1){6}}
\put(9,6){\oval(8,2)[lb]}
\put(4.25,6){\oval(5.5,2)[lb]}
\put(4.25,0){\oval(1.5,10)[rt]}
\put(9,6){\oval(11,5)[lb]}
\put(9,6){\oval(12.5,7)[lb]}
\put(9,6){\oval(14,9)[lb]}
\put(1.,6){\framebox(3,1){$\Pi$}}
\put(1.5,7){\line(0,1){.5}}
\put(2.,7){\line(0,1){.5}}
\put(2.8,7){\line(0,1){.5}}
\put(3.5,7){\line(0,1){.5}}
\put(4.8,6.2){1}
\put(7.8,6.2){1}
\put(9.5,3.2){1}
\put(9.5,2.2){1}
\put(9.5,1.2){1}
\end{picture}%
\end{array}
=L(\underline{v})%
\begin{array}
[c]{c}%
{\phantom{\rule{1.8001in}{1.4776in}}}%
\unitlength4mm\begin{picture}(12,8)\thicklines\put(3.9,0){$u'_1$}
\put(5.5,0){$u_n$}
\put(7.3,0){1}
\put(8.3,0){1}
\put(9.3,0){1}
\put(5,3.9){$v_i$}
\put(4.5,2.3){$u_1$}
\put(10.2,1.4){$\mathring\gamma$}
\put(-.3,7.7){$\mathring\gamma$}
\put(7.5,3){$_{\dots}$}
\put(8.8,3){$_{\dots}$}
\put(4.5,6){$\dots$}
\put(6.5,0){\line(0,1){6}}
\put(3.5,1){\oval(12,9)[rt]}
\put(3.5,6){\oval(2,1)[lb]}
\put(3.5,1){\oval(10,7)[rt]}
\put(3.5,6){\oval(3.5,3)[lb]}
\put(3.5,1){\oval(8,5)[rt]}
\put(3.5,6){\oval(5,5)[lb]}
\put(10,6){\oval(13,8)[lb]}
\put(2,6){\oval(4,8)[lb]}
\put(2,0){\oval(3,4)[rt]}
\put(-.2,6){\framebox(3,1){$\Pi$}}
\put(0,7){\line(0,1){.5}}
\put(.9,7){\line(0,1){.5}}
\put(1.7,7){\line(0,1){.5}}
\put(2.5,7){\line(0,1){.5}}
\end{picture}%
\end{array}
\newline=L(\underline{v})%
\begin{array}
[c]{c}%
{\phantom{\rule{1.3795in}{1.4527in}}}%
\unitlength4mm\begin{picture}(8,8)\thicklines\put(3.9,0){$u'_1$}
\put(6.3,0){$u_n$}
\put(4.5,3,9){$v_i$}
\put(4.2,1.2){$u_1$}
\put(7.5,1.4){$\mathring\gamma$}
\put(-.3,7.6){$\mathring\gamma$}
\put(4.2,6){$\dots$}
\put(6,.5){\line(0,1){6}}
\put(7,6){\oval(9,1)[lb]}
\put(7,6){\oval(10.5,3)[lb]}
\put(7,6){\oval(12,5)[lb]}
\put(7,6.5){\oval(7,9)[lb]}
\put(2,6){\oval(4,8)[lb]}
\put(2,.5){\oval(3,3)[rt]}
\put(-.2,6){\framebox(3,1){$\Pi$}}
\put(0,7){\line(0,1){.5}}
\put(.9,7){\line(0,1){.5}}
\put(1.7,7){\line(0,1){.5}}
\put(2.5,7){\line(0,1){.5}}
\end{picture}%
\end{array}
+%
{\displaystyle\sum\limits_{i=1}^{m}}
c(v_{i}-u_{1})L(\underline{v})%
\begin{array}
[c]{c}%
{\phantom{\rule{1.8425in}{1.4502in}}}%
\unitlength4mm\begin{picture}(11,8)\thicklines\put(3.9,0){$u'_1$}
\put(6.3,0){$u_n$}
\put(4.5,3,9){$v_i$}
\put(4.2,1.2){$u_1$}
\put(10.5,1.4){$\mathring\gamma$}
\put(-.3,7.6){$\mathring\gamma$}
\put(8.2,.3){$v_i$}
\put(6.5,1.7){$\bullet$}
\put(7.5,3){$_{\dots}$}
\put(8.8,3){$_{\dots}$}
\put(4.2,6){$\dots$}
\put(6,.5){\line(0,1){6}}
\put(3.5,1){\oval(12,9)[rt]}
\put(3.5,6){\oval(2,1)[lb]}
\put(3.5,3){\oval(10,3)[rt]}
\put(10,3){\oval(3,2)[lb]}
\put(3.5,6){\oval(3.5,3)[lb]}
\put(3.5,1){\oval(8,5)[rt]}
\put(3.5,6){\oval(5,5)[lb]}
\put(7,1){\oval(3,2)[rt]}
\put(2,6){\oval(4,8)[lb]}
\put(2,.5){\oval(3,3)[rt]}
\put(7,6.5){\oval(7,9)[lb]}
\put(-.2,6){\framebox(3,1){$\Pi$}}
\put(0,7){\line(0,1){.5}}
\put(.9,7){\line(0,1){.5}}
\put(1.7,7){\line(0,1){.5}}
\put(2.5,7){\line(0,1){.5}}
\end{picture}%
\end{array}
$\newline$=\Psi\left(  D_{Q}\right)  _{\mathring{\gamma}}^{\mathring{\gamma}%
}-uw^{D}$,

\noindent where the dot in the last picture means the the spectral parmeter on
the left of it is $u_{1}$ and on the right of it is $v_{i}$. Again the
$d(v_{i}-u_{1})$-terms do not contribute because they produce terms with
$\Pi_{\dots\bar{1}}^{\dots}=0$ (see (\ref{m4})). In terms of formulas this
reads as
\begin{align*}
0  &  =L_{\underline{\mathring{\beta}}}(\underline{v})\left(  \Pi
(u_{1}^{\prime},\underline{v})R_{10}(v_{1}-u_{1}^{\prime})\cdots R_{m0}%
(v_{m}-u_{1}^{\prime})\right)  _{\underline{\beta^{\prime}}\gamma^{\prime}%
}^{\mathring{\gamma}\underline{\mathring{\beta}}}\Omega\left(  T_{Q}\right)
_{\mathring{\gamma}}^{\gamma^{\prime}}T_{1}^{\beta_{m}^{\prime}}(v_{m})\cdots
T_{1}^{\beta_{1}^{\prime}}(v_{1})\\
&  =L_{\underline{\mathring{\beta}}}(\underline{v})\Pi_{\gamma^{\prime
}\underline{\beta^{\prime}}}^{\mathring{\gamma}\underline{\mathring{\beta}}%
}(u_{1}^{\prime},\underline{v})\Omega T_{\gamma_{m}}^{\beta_{m}^{\prime}%
}(v_{m})\cdots T_{\gamma_{1}}^{\beta_{1}^{\prime}}(v_{1})\left(  T_{Q}\right)
_{\gamma^{\prime\prime}}^{\gamma^{\prime}}\left(  R_{10}(v_{1}-u_{1})\cdots
R_{m0}(v_{m}-u_{1})\right)  _{1\cdots1,\mathring{\gamma}}^{\gamma
^{\prime\prime},\underline{\gamma}}\\
&  =L_{\underline{\mathring{\beta}}}(\underline{v})\Pi_{\underline
{\beta^{\prime}}}^{\underline{\mathring{\beta}}}(\underline{z})\Omega
T_{1}^{\beta_{m}^{\prime}}(v_{m})\cdots T_{1}^{\beta_{1}^{\prime}}%
(v_{1})\left(  D_{Q}\right)  _{\mathring{\gamma}}^{\mathring{\gamma}}-uw^{D}.
\end{align*}
with%
\begin{align}
uw^{D}  &  =L_{\underline{\mathring{\beta}}}(\underline{v})\Pi_{\underline
{\beta^{\prime}}}^{\underline{\mathring{\beta}}}(\underline{v})\Omega
T_{1}^{\beta_{m}^{\prime}}(v_{m})\cdots T_{1}^{\beta_{1}^{\prime}}%
(v_{1})\left(  D_{Q}\right)  _{\mathring{\gamma}}^{\mathring{\gamma}%
}\nonumber\\
&  -L_{\underline{\mathring{\beta}}}(\underline{v})\Pi_{\gamma^{\prime
}\underline{\beta^{\prime}}}^{\mathring{\gamma}\underline{\mathring{\beta}}%
}(u_{1}^{\prime},\underline{v})\Omega T_{1}^{\beta_{m}^{\prime}}(v_{m})\cdots
T_{1}^{\beta_{1}^{\prime}}(v_{1})\left(  T_{Q}\right)  _{\mathring{\gamma}%
}^{\gamma^{\prime}}-\sum_{i=1}^{m}c(v_{i}-u_{1})\label{DC}\\
&  \times\Pi_{\gamma^{\prime}\underline{\beta^{\prime}}}^{\underline
{\mathring{\gamma}\mathring{\beta}}}(u_{1}^{\prime},\underline{v})\Omega
T_{\gamma_{m}}^{\beta_{m}^{\prime}}(v_{m})\cdots T_{\gamma_{1}}^{\beta
_{1}^{\prime}}(v_{1})\left(  T_{Q}\right)  _{\gamma^{\prime\prime}}%
^{\gamma^{\prime}}\left(  R_{10}(v_{1i})\cdots\mathbf{P}_{i0}\cdots
R_{m0}(v_{mi})\right)  _{1\cdots1,\mathring{\gamma}}^{\gamma^{\prime\prime
},\underline{\gamma}}\label{DC'}\\
&  =\left(  uw_{C}^{D}\right)  _{\mathring{\gamma}}C_{Q}^{\mathring{\gamma}%
}+uw_{C_{2}}^{D}C_{2,Q}+\left(  uw_{C_{3}}^{D}\right)  _{\mathring{\gamma
}^{\prime}}\mathbf{\mathring{C}}^{\mathring{\gamma}\mathring{\gamma}^{\prime}%
}\left(  C_{3,Q}\right)  _{\mathring{\gamma}}\,,\nonumber
\end{align}
where lemma \ref{l2} in the form of (\ref{Sb}) has been used. The different
unwanted terms are due to different values of $\gamma^{\prime}$ in (\ref{DC})
and (\ref{DC'}). For $\gamma^{\prime}\neq\bar{1}$ in (\ref{DC}) the second
term cancels the first one. For $\gamma^{\prime}=\bar{1}$ in (\ref{DC}) we get
$\left(  uw_{C_{3}}^{D}\right)  _{\mathring{\gamma}}$%

\[
\left(  uw_{C_{3}}^{D}\right)  _{\mathring{\gamma}^{\prime}}\mathbf{\mathring
{C}}^{\mathring{\gamma}\mathring{\gamma}^{\prime}}\left(  C_{3,Q}\right)
_{\mathring{\gamma}}=-L_{\underline{\mathring{\beta}}}(\underline{v})\Pi
_{\bar{1}\underline{\beta^{\prime}}}^{\mathring{\gamma}\underline
{\mathring{\beta}}}(u_{1}^{\prime},\underline{v})\Omega T_{1}^{\beta
_{m}^{\prime}}(v_{m})\cdots T_{1}^{\beta_{1}^{\prime}}(v_{1})\left(
T_{Q}\right)  _{\mathring{\gamma}}^{\bar{1}}\,.
\]
Using (\ref{m6}) we get%
\begin{align*}
&  \left(  uw_{C_{3}}^{D}\right)  _{\mathring{\gamma}^{\prime}}%
\mathbf{\mathring{C}}^{\mathring{\gamma}\mathring{\gamma}^{\prime}}\\
&  =-\sum_{i=1}^{m}f(u_{1}^{\prime}-v_{i})\left(  L(\underline{v})\mathring
{R}_{ii-1}\cdots\mathring{R}_{i1}\mathbf{\mathring{C}}^{0i}\Pi(\underline
{v}_{i})e_{i}R_{im}\cdots R_{ii+1}\right)  _{\underline{\beta}^{\prime}%
}^{\mathring{\gamma}}\Omega T_{1}^{\beta_{m}^{\prime}}(v_{m})\cdots
T_{1}^{\beta_{1}^{\prime}}(v_{1})\\
&  =-\sum_{i=1}^{m}f(u_{1}^{\prime}-v_{i})\left(  L(v_{i},\underline{v}%
_{i})\mathbf{1}_{i}\Pi(\underline{v}_{i})\right)  _{\mathring{\gamma
}\underline{^{\prime}\beta^{\prime}}_{i}}\mathbf{\mathring{C}}^{\mathring
{\gamma}\mathring{\gamma}^{\prime}}\prod_{\underset{k\neq i}{k=1}}^{m}%
a(v_{ik})\Omega A(v_{i})\left[  T_{1}^{\beta_{m}^{\prime}}(v_{m})\cdots
T_{1}^{\beta_{1}^{\prime}}(v_{1})\right]  _{i}\,,
\end{align*}
where as above we have replaced the R-matrices by $a(v_{ik})$. Finally we
obtain with (\ref{Xi})%
\[
\left(  uw_{C_{3}}^{D}\right)  _{\mathring{\gamma}}=-\sum_{i=1}^{m}%
f(u_{1}^{\prime}-v_{i})X_{\mathring{\gamma}}^{(i)}(\underline{u},\underline
{v})\,.
\]
Therefore we obtain $\left(  uw_{C_{3}}^{D,i}\right)  _{\mathring{\gamma}%
}(\underline{u},\underline{v})$ in the form of (\ref{uwd3}). The remaining
unwanted terms are due to (\ref{DC'})%
\begin{multline*}
\left(  uw_{C}^{D}\right)  _{\mathring{\gamma}}C_{Q}^{\mathring{\gamma}%
}+uw_{C_{2}}^{D}C_{2,Q}=-\sum_{i=1}^{m}c(v_{i}-u_{1})\\
\times L_{\underline{\mathring{\beta}}}(\underline{v})\Pi_{\gamma^{\prime
}\underline{\beta^{\prime}}}^{\underline{\mathring{\gamma}\mathring{\beta}}%
}(u_{1}^{\prime},\underline{v})\Omega T_{\gamma_{m}}^{\beta_{m}^{\prime}%
}(v_{m})\cdots T_{\gamma_{1}}^{\beta_{1}^{\prime}}(v_{1})\left(  T_{Q}\right)
_{\gamma^{\prime\prime}}^{\gamma^{\prime}}\left(  R_{10}(v_{1i})\cdots
\mathbf{P}_{i0}\cdots R_{m0}(v_{mi})\right)  _{1\cdots1,\mathring{\gamma}%
}^{\gamma^{\prime\prime},\underline{\gamma}}\\
=-\sum_{i=1}^{m}c(v_{i}-u_{1})\left(  L(v_{i}^{\prime},\underline{v}_{i}%
)\Pi(u_{1}^{\prime},\underline{v}_{i})\right)  _{\gamma^{\prime}%
\underline{\beta}_{i}}\Omega\left[  T_{1}^{\beta_{m}^{\prime}}(v_{m})\cdots
T_{1}^{\beta_{1}^{\prime}}(v_{1})\right]  _{i}\prod_{\underset{k\neq i}{k=1}%
}^{m}a(v_{ki})a_{2}(v_{i})\left(  T_{Q}\right)  _{1}^{\gamma^{\prime}}%
\end{multline*}
where (\ref{RPi}), (\ref{3.1a}), (\ref{3.2a}) and $\Omega T_{\mathring{\gamma
}}^{\beta_{i}^{\prime}}(v_{i})=\delta_{\mathring{\gamma}}^{\beta_{i}^{\prime}%
}a_{2}(v_{i})\Omega$ have been used. For $\gamma^{\prime}=1$ this vanishes
because of (\ref{m1}). For $\gamma^{\prime}=\mathring{\gamma}\neq1,\bar{1}$
this gives%
\begin{align*}
\left(  uw_{C}^{D}\right)  _{\mathring{\gamma}}  &  =-\sum_{i=1}^{m}%
c(v_{i}-u_{1})\left(  L(v_{i}^{\prime},\underline{v}_{i})\mathbf{1}_{i}%
\Pi(\underline{v}_{i})\right)  _{\mathring{\gamma}\underline{\beta}_{i}}%
\Omega\left[  T_{1}^{\beta_{m}^{\prime}}(v_{m})\cdots T_{1}^{\beta_{1}%
^{\prime}}(v_{1})\right]  _{i}\prod_{\underset{k\neq i}{k=1}}^{m}%
a(v_{ki})a_{2}(v_{i})\\
&  =-\sum_{i=1}^{m}c(v_{i}-u_{1})X_{\mathring{\gamma}}^{(i)}(\underline
{u},\underline{v}^{(i)})\chi_{i}(\underline{u},\underline{v})
\end{align*}
where $\underline{v}^{(i)}$ means that $v_{i}$ is replaced by $v_{i}^{\prime
}=v_{i}+2/\nu$ and $\chi_{i}(\underline{u},\underline{v})$ is defined by
(\ref{chi}). For $\gamma^{\prime}=\bar{1}$ we get using (\ref{m6}) and
$c(v_{i}-u_{1})f(u_{1}^{\prime}-v_{j})=\left(  c(u_{1}-v_{i})+f(u_{1}^{\prime
}-v_{j})\right)  f(v_{ij}^{\prime})$%
\begin{align*}
uw_{C_{2}}^{D}  &  =-\sum_{i=1}^{m}c(v_{i}-u_{1})\left(  L(v_{i}^{\prime
},\underline{v}_{i})\Pi(u_{1}^{\prime},\underline{v}_{i})\right)  _{\bar
{1}\underline{\beta}_{i}}\Omega\left[  T_{1}^{\beta_{m}^{\prime}}(v_{m})\cdots
T_{1}^{\beta_{1}^{\prime}}(v_{1})\right]  _{i}\prod_{\underset{k\neq i}{k=1}%
}^{m}a(v_{ki})a_{2}(v_{i})\\
&  =-\sum_{i=1}^{m}c(v_{i}-u_{1})\sum_{\underset{j\neq i}{j=1}}^{m}%
f(u_{1}^{\prime}-v_{j})\left(  L(v_{i}^{\prime},v_{j},\underline{v}%
_{ij})\mathbf{\mathring{C}}^{ij}\Pi(\underline{v}_{ij})\right)  _{\underline
{\beta}_{i}}\\
&  \times\Omega\left[  T_{1}^{\beta_{m}^{\prime}}(v_{m})\cdots T_{1}%
^{\beta_{1}^{\prime}}(v_{1})\right]  _{ij}\prod_{\underset{k\neq i}{k=1}}%
^{m}a(v_{ki})\prod_{\underset{k\neq i,j}{k=1}}^{m}a(v_{jk})a_{2}(v_{i}%
)a_{1}(v_{j})\\
&  =-\sum_{i=1}^{m}\sum_{\underset{j\neq i}{j=1}}^{m}\left(  c(u_{1}%
-v_{i})+f(u_{1}^{\prime}-v_{j})\right)  X^{(ij)}(\underline{v}^{(i)})\chi
_{i}(\underline{u},\underline{v})
\end{align*}
with $X^{(ij)}$ given by (\ref{Xij}). Therefore we obtain $\left(
uw_{C}^{D,i}\right)  _{\mathring{\gamma}}(\underline{u},\underline{v})$ and
$uw_{C_{2}}^{D,ij}(\underline{u},\underline{v})$ in the form of (\ref{uwd1})
and (\ref{uwd2}).

\paragraph{The unwanted terms $uw^{A_{3}}$:}

Using $\Omega T{_{Q}{}}_{\bar{1}}^{\gamma}=0$ (see (\ref{1.41}) which also
implies that the $A_{3}$-wanted term vanishes) and Yang-Baxter relations we derive

\noindent$0=%
\begin{array}
[c]{c}%
{\phantom{\rule{1.4676in}{1.2825in}}}%
\unitlength4mm\begin{picture}(10,7)\thicklines\put(3.9,0){$u'_1$}
\put(8.3,0){$u_n$}
\put(6,4.2){$u_1$}
\put(6,2.7){$v_i$}
\put(9.5,4.6){$\bar1$}
\put(9.5,3.1){1}
\put(9.5,2.1){1}
\put(9.5,1.1){1}
\put(4.8,6.1){1}
\put(7.8,6.1){1}
\put(5.8,5.5){$\dots$}
\put(2.8,4.5){$_{\dots}$}
\put(9,6){\oval(8,2)[lb]}
\put(1.5,0){\oval(7,10)[rt]}
\put(8,0){\line(0,1){6}}
\put(1,4.6){$\bar1$}
\put(9,6){\oval(11,5)[lb]}
\put(9,6){\oval(12.5,7)[lb]}
\put(9,6){\oval(14,9)[lb]}
\put(1.8,6){$\framebox(2,1){$\Pi$}$}
\put(2.,7){\line(0,1){.5}}
\put(2.8,7){\line(0,1){.5}}
\put(3.5,7){\line(0,1){.5}}
\end{picture}%
\end{array}
=%
\begin{array}
[c]{c}%
{\phantom{\rule{1.6754in}{1.2699in}}}%
\unitlength4mm\begin{picture}(12,8)(1,0)\thicklines\put(2.9,0){$u'_1$}
\put(5.6,0){$u_n$}
\put(7.4,0){$1~\,1~\,1$}
\put(10,6.5){\oval(13,9)[lb]}
\put(2,1){\oval(3,2)[rt]}
\put(7.6,4.4){$v_i$}
\put(4.5,1.2){$u_1$}
\put(6.5,1.2){$\gamma$}
\put(10.6,1.8){$\bar1$}
\put(7.5,3){$_{\dots}$}
\put(8.8,3){$_{\dots}$}
\put(4.2,6){$\dots$}
\put(6,1){\line(0,1){5.5}}
\put(3.5,1){\oval(12,9)[rt]}
\put(3.5,6){\oval(2,1)[lb]}
\put(3.5,1){\oval(10,7)[rt]}
\put(3.5,6){\oval(3.5,3)[lb]}
\put(3.5,1){\oval(8,5)[rt]}
\put(3.5,6){\oval(5,5)[lb]}
\put(1.4,1.7){$\bar1$}
\put(.8,6){$\framebox(2,1){$\Pi$}$}
\put(.9,7){\line(0,1){.5}}
\put(1.7,7){\line(0,1){.5}}
\put(2.5,7){\line(0,1){.5}}
\end{picture}%
\end{array}
$

\noindent or in terms of formulas (with $T_{\underline{\beta},\gamma}%
^{\gamma^{\prime},\underline{\beta}^{\prime}}(\underline{v},u)=\left(
R_{10}(v_{1}-u)\dots R_{m0}(z_{m}-u)\right)  _{\underline{\beta},\gamma
}^{\gamma^{\prime},\underline{\beta}^{\prime}}$\newline and $T_{\beta}%
^{\beta^{\prime}}(\underline{u},v)=\left(  R_{10}(u_{1}-v)\dots R_{n0}%
(u_{n}-v)\right)  _{\beta}^{\beta^{\prime}}$ where the quantum space indices
are suppressed)%
\begin{align}
0  &  =\left(  L(\underline{v})\Pi(\underline{v})\right)  _{\underline{\beta
}^{\prime}}T_{\underline{\beta},\gamma}^{\bar{1},\underline{\beta}^{\prime}%
}(\underline{v},u_{1}^{\prime})\Omega T_{Q,\bar{1}}^{\gamma}(\underline
{u})T_{1}^{\beta_{m}}(\underline{u}^{\prime},v_{m})\cdots T_{1}^{\beta_{1}%
}(\underline{u}^{\prime},v_{1})\nonumber\\
&  =\left(  L(\underline{v})\Pi(\underline{v})\right)  _{\underline{\beta}%
}\Omega T_{\beta_{m}^{\prime}}^{\beta_{m}}(\underline{u},v_{m})\cdots
T_{\beta_{1}^{\prime}}^{\beta_{1}}(\underline{u},v_{1})T_{Q,\gamma}^{\bar{1}%
}(\underline{u})T_{1\cdots1,\bar{1}}^{\gamma,\underline{\beta}^{\prime}%
}(\underline{v},u_{1})\label{A3}\\
&  =\prod_{k=1}^{m}\left(  1+d(v_{k}-u_{1})\right)  \left(  \left(
L(\underline{v})\Pi(\underline{v})\right)  _{\underline{\beta}}\Omega
T_{1}^{\beta_{m}}(\underline{u},v_{m})\cdots T_{1}^{\beta_{1}}(\underline
{u},v_{1})A_{3,Q}(\underline{u})-uw^{A_{3}}\right)  \,,\nonumber
\end{align}
where the term written down comes from $\gamma=\bar{1}$ in (\ref{A3}). It has
been used that
\[
T_{1\cdots1,\bar{1}}^{\bar{1},\underline{\beta}^{\prime}}(\underline{v}%
,u_{1})=\left(  R_{10}(v_{1}-u_{1})\dots R_{m0}(v_{m}-u_{1})\right)
_{1\cdots1,\bar{1}}^{\bar{1},\underline{\beta}^{\prime}}=\prod_{k=1}%
^{m}\left(  1+d(v_{k}-u_{1})\right)  \delta_{1}^{\beta_{k}^{\prime}}\,.
\]
The different unwanted terms are due to different values of $\gamma$ in
(\ref{A3}): for $\gamma=\mathring{\gamma}\neq1,\bar{1}$ we get $\left(
uw_{C_{3}}^{A_{3}}\right)  _{\mathring{\gamma}^{\prime}}\mathbf{C}%
^{\mathring{\gamma}\mathring{\gamma}^{\prime}}=\left(  uw_{C_{3}}^{A_{3}%
}\right)  ^{\mathring{\gamma}}$%
\[
\left(  uw_{C_{3}}^{A_{3}}\right)  ^{\mathring{\gamma}}\prod_{k=1}^{m}\left(
1+d(v_{k}-u_{1})\right)  =-\left(  L(\underline{v})\Pi(\underline{v})\right)
_{\underline{\beta}}\Omega T_{\beta_{m}^{\prime}}^{\beta_{m}}(\underline
{u},v_{m})\cdots T_{\beta_{1}^{\prime}}^{\beta_{1}}(\underline{u}%
,v_{1})T_{1\cdots1,\bar{1}}^{\mathring{\gamma},\underline{\beta}^{\prime}%
}(\underline{v},u_{1})
\]
and for $\gamma=1$ we get $uw_{C_{2}}^{A_{3}}$
\[
\left(  uw_{C_{2}}^{A_{3}}\right)  \prod_{k=1}^{m}\left(  1+d(v_{k}%
-u_{1})\right)  =-\left(  L(\underline{v})\Pi(\underline{v})\right)
_{\underline{\beta}}\Omega T_{\beta_{m}^{\prime}}^{\beta_{m}}(\underline
{u},v_{m})\cdots T_{\beta_{1}^{\prime}}^{\beta_{1}}(\underline{u}%
,v_{1})T_{1\cdots1,\bar{1}}^{1,\underline{\beta}^{\prime}}(\underline{v}%
,u_{1})\,.
\]
The determination of these unwanted terms is not so direct compared to those
of the $A$- and $D$-unwanted terms. In particular for $uw_{C_{2}}^{A_{3}}$ we
use more complicated arguments.

To calculate $uw_{C_{3}}^{A_{3}}$ we use special components of the Yang Baxter
relation (\ref{TTS})
\[
R_{\alpha\beta}^{\bar{1}1}(1/\nu-1)T_{\underline{\beta}^{\prime}\bar{1}%
}^{\beta\underline{\beta}}(\underline{v},u)T_{1\cdots1,\mathring{\alpha}%
}^{\alpha\underline{\beta}^{\prime}}(\underline{v},u+1/\nu-1)=T_{\underline
{\beta}^{\prime}\alpha}^{1\underline{\beta}}(\underline{v},u+1/\nu
-1)T_{1\cdots1,\beta}^{\bar{1}\underline{\beta}^{\prime}}(\underline
{v},u)R_{\mathring{\alpha}\bar{1}}^{\beta\alpha}(1/\nu-1).
\]
Using $d(1/\nu-1)=-1,~T_{1\cdots1,\mathring{\alpha}}^{\alpha\underline{\beta
}^{\prime}}(u^{\prime})=\delta_{\mathring{\alpha}}^{\alpha}\mathbf{1}%
_{1\cdots1}^{\underline{\beta}^{\prime}}$ for $\alpha\neq1$ and \newline%
$T_{1\cdots1,\beta}^{\bar{1}\underline{\beta}^{\prime}}(u)=\delta_{\beta
}^{\bar{1}}\mathbf{1}_{1\cdots1}^{\underline{\beta}^{\prime}}\prod_{k=1}%
^{m}\left(  1+d(v_{k}-u)\right)  $ we derive%
\begin{multline*}
T_{1\cdots1\bar{1}}^{\mathring{\gamma}\underline{\beta}}(\underline
{v},u)=-T_{1\cdots1\mathring{\gamma}^{\prime}}^{1\underline{\beta}}%
(\underline{v},u+1/\nu-1)\prod_{k=1}^{m}\left(  1+d(v_{k}-u)\right)
\mathbf{C}^{\mathring{\gamma}\mathring{\gamma}^{\prime}}\\
=-\prod_{k=1}^{m}\left(  1+d(v_{k}-u)\right)  \sum_{i=1}^{m}f(u-v_{i})\left(
R_{1a}(v_{1i})\dots\mathbf{P}_{ia}\dots R_{1a}(v_{mi})\right)  _{1\cdots
1\mathring{\gamma}^{\prime}}^{1\underline{\beta}}\mathbf{C}^{\mathring{\gamma
}\mathring{\gamma}^{\prime}}%
\end{multline*}
For the last equality (\ref{Sb}) and $c(v-\left(  u+1/\nu-1\right)  )=f(u-v)$
have been used. Therefore%
\begin{align*}
\left(  uw_{C_{3}}^{A_{3}}\right)  _{\mathring{\gamma}}  &  =\sum_{i=1}%
^{m}f(u_{1}-v_{i})\left(  L(\underline{v})\Pi(\underline{v})\right)
_{\underline{\beta}}\Omega T_{\beta_{m}^{\prime}}^{\beta_{m}}(\underline
{u},v_{m})\cdots T_{\beta_{1}^{\prime}}^{\beta_{1}}(\underline{u},v_{1})\\
&  \times\left(  R_{1a}(v_{1i})\dots\mathbf{P}_{ia}\dots R_{1a}(v_{mi}%
)\right)  _{1\cdots1,\mathring{\gamma}}^{1\underline{\beta}}\\
&  =\sum_{i=1}^{m}f(u_{1}-v_{i})\left(  L(v_{i}^{\prime},\underline{v}_{i}%
)\Pi(\underline{v}_{i})\right)  _{\mathring{\gamma}\underline{\beta}_{i}}%
\prod_{\underset{k\neq i}{k=1}}^{m}a(v_{ki})a_{2}(v_{i})\Omega\left[
T_{1}^{\beta_{m}}(v_{m})\cdots T_{1}^{\beta_{1}}(v_{1})\right]  _{i}\\
&  =\sum_{i=1}^{m}f(u_{1}-v_{i})X_{\mathring{\gamma}}^{(i)}(\underline
{u},\underline{v}^{(i)})\chi_{i}(\underline{u},\underline{v})
\end{align*}
We obtain $\left(  uw_{C_{3}}^{A_{3},i}\right)  _{\mathring{\gamma}%
}(\underline{u},\underline{v})$ in the form of (\ref{uwa3}). In order to
calculate%
\begin{equation}
\left(  uw_{C_{2}}^{A_{3}}\right)  \prod_{k=1}^{m}\left(  1+d(v_{k}%
-u_{1})\right)  =-\left(  L(\underline{v})\Pi(\underline{v})\right)
_{\underline{\beta}}\Omega T_{\beta_{m}^{\prime}}^{\beta_{m}}(\underline
{u},v_{m})\cdots T_{\beta_{1}^{\prime}}^{\beta_{1}}(\underline{u}%
,v_{1})T_{1\cdots1,\bar{1}}^{1,\underline{\beta}^{\prime}}(\underline{v}%
,u_{1}) \label{A32}%
\end{equation}
we prove

\begin{lemma}
\label{l1}The unwanted term $uw_{C_{2}}^{A_{3}}$ is of the form%
\[
uw_{C_{2}}^{A_{3}}=\sum_{i=1}^{m}\sum_{j=i,j\neq i}^{m}\left(  f_{ij}%
(\underline{v})\Pi(\underline{v}_{ij})\right)  _{\underline{\beta}_{ij}}%
a_{2}(v_{i})a_{2}(v_{j})\Omega\left[  T_{1}^{\beta_{m}}(v_{m})\dots
T_{1}^{\beta_{1}}(v_{1})\right]  _{ij}%
\]

\end{lemma}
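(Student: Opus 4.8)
The plan is to reduce the claim to an analysis of the crossed monodromy component $T_{1\cdots1,\bar 1}^{1,\underline\beta'}(\underline{v}, u_1)=\bigl(R_{10}(v_1-u_1)\cdots R_{m0}(v_m-u_1)\bigr)_{1\cdots1,\bar 1}^{1,\underline\beta'}$ that stands on the right of (\ref{A32}). Its distinctive feature is that the auxiliary line carries charge $-1$ at the bottom (index $\bar 1$) and $+1$ at the top (index $1$); since all bottom quantum indices equal $1$ and the R-matrix conserves charge, the top multi-index $\underline\beta'$ must carry two fewer units of charge. This charge deficit of $2$ is the structural reason the answer is a double sum over positions.

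First I would evaluate this component with Lemma \ref{l2} in the form (\ref{Sb}), exactly as was done a few lines above for $uw_{C_3}^{A_3}$, using a special component of the Yang-Baxter relation (\ref{TTS}) at the distinguished rapidity where $d(1/\nu-1)=-1$. The key point I expect to verify is that the $\bar 1\to 1$ transition is generated by a permutation insertion $\mathbf{P}_{i0}$ at a single position $i$: the permutation swaps the auxiliary charge $-1$ with a quantum $1$, turning $\beta_i'$ into $\bar 1$ and leaving the other $\beta_k'=1$, while the R-matrices standing away from the insertion collapse to the scalar amplitudes $a(v_{ki})$ and the factors $(1+d(v_k-u_1))$ — the latter accounting precisely for the prefactor $\prod_k(1+d(v_k-u_1))$ on the left of (\ref{A32}).

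The index $\bar 1$ produced at position $i$ then feeds the $\Pi$-recursion. Using (\ref{m6}) (or (\ref{m7})) the slot carrying $\bar 1$ is paired with a second slot $j$ through the charge-conjugation matrix $\mathbf{\mathring{C}}^{ij}$, and $\Pi(\underline{v})$ collapses to $\mathbf{\mathring{C}}^{ij}\,\Pi(\underline{v}_{ij})$ with both $v_i$ and $v_j$ deleted; this is the mechanism already used for $uw_{C_2}^{A}$ and $uw_{C_2}^{D}$, and it is what generates the double sum $\sum_i\sum_{j\neq i}$ with $\Pi(\underline{v}_{ij})_{\underline\beta_{ij}}$. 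At the two liberated positions the highest-weight relation $\Omega\,T_{\mathring\gamma}^{\beta}(\underline{u},v)=\delta_{\mathring\gamma}^{\beta}\,a_2(\underline{u},v)\,\Omega$ from (\ref{1.40}) removes the operators $T_1^{\beta_i}(v_i)$ and $T_1^{\beta_j}(v_j)$ from the product and produces the two factors $a_2(v_i)\,a_2(v_j)$; commuting the residual R-matrices through $\Pi$ by (\ref{RPi}) and converting them with the symmetry (\ref{3.1a}) of $L$ turns everything else into scalars, which are absorbed into the (here unspecified) coefficient $f_{ij}(\underline{v})$.

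The hard part will be the first step. Unlike the charge-$1$ transition of $uw_{C_3}^{A_3}$, the charge-$2$ transition $\bar 1\to 1$ can a priori also be realized through the annihilation-creation term $\mathbf{K}$, so one must enumerate all channels and check that they combine into the single clean permutation picture above with the correct scalar dressing. Finding the right special Yang-Baxter identity to organize this bookkeeping — and confirming that the $d$-contributions which would place a forbidden $\bar 1$ into a $\Pi$ slot indeed vanish by (\ref{m4}), as happened in the $A$- and $D$-computations — is the technical heart of the argument. Once the displayed form is secured, the explicit $f_{ij}$ is read off by comparison with (\ref{uwa32}).
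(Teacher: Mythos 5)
There is a genuine gap, and it sits exactly where you locate the ``technical heart'': the channel bookkeeping in the expansion of $T_{1\cdots1,\bar 1}^{1,\underline{\beta}'}(\underline{v},u_1)$, and what happens after it. Your plan assigns the two channels of Lemma \ref{l2} the wrong roles. In the paper's proof the $c$-terms (the $\mathbf{P}_{i0}$ insertions you take as primary) do not survive this particular contraction at all --- they produce terms of the type $\Omega B\dots=0$ --- and $uw_{C_2}^{A_3}$ comes entirely from the $d$-terms (the $\mathbf{K}_{i0}$ insertions you plan to argue away). Each $d$-term yields an operator $T_{\bar 1}^{\beta_i}(v_i)$ standing at the far right of the product of $T_1$'s, i.e.\ \emph{not} adjacent to $\Omega$. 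This is the difficulty your proposal never confronts: to evaluate that operator one must commute it through all the $T_1^{\beta_k}(v_k)$ using the Yang--Baxter algebra, and this iteration (equation (\ref{a.66}) in the paper) is what actually produces the lemma. At each step the $d$-part of the R-matrix converts $T_{\bar 1}$ together with one $T_1$ into a conjugate pair $T_{\mathring\gamma}T_{\overline{\mathring\gamma}}\,\mathbf{\mathring{C}}^{\overline{\mathring\gamma}\mathring\gamma}$, while the pure ``transport'' terms die once $T_{\bar 1}$ reaches $\Omega$ (by $\Omega B=0$ and $\Pi_{\dots\bar 1}^{\dots}=0$). So the whole answer consists of the pair terms; moving these pairs to $\Omega$ (a second iteration of the same kind) and using $\Omega T_{\mathring\gamma}^{\beta}=\delta_{\mathring\gamma}^{\beta}a_2\Omega$ is what generates the double sum, the deletion of both slots in $\Pi(\underline{v}_{ij})$, and the factors $a_2(v_i)a_2(v_j)$ in the stated form.

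Your substitute mechanism cannot replace this. The $\bar 1$ created by the expansion is a \emph{lower} index of the monodromy entry at position $i$, whereas the slots of $\Pi$ contract the \emph{upper} indices $\beta_k$; hence this $\bar 1$ cannot ``feed the $\Pi$-recursion'' (\ref{m6}) directly. That mechanism is the one for $uw_{C_2}^{A}$ and $uw_{C_2}^{D}$, where the $\bar 1$ really does sit in a $\Pi$ slot (it is the auxiliary index $\gamma=\bar 1$ of $T_Q$), and it would in any case produce $a_1$-type dressing as in $X^{(ij)}$, not the $a_2(v_i)a_2(v_j)$ structure of the lemma. Likewise, the relation $\Omega T_{\mathring\gamma}^{\beta}=\delta_{\mathring\gamma}^{\beta}a_2\Omega$ that you invoke to remove the operators at positions $i$ and $j$ cannot be applied to them as they stand, since those operators carry lower index $1$; it is precisely the commutation iteration that changes their lower indices to $\mathring\gamma,\overline{\mathring\gamma}$. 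So your expectation that all channels ``combine into the single clean permutation picture'' is false, and with it the proposed route to the displayed form collapses.
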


\begin{proof}
Using (\ref{Sb}) we have%
\begin{align*}
&  \Pi_{\underline{\beta}}^{\underline{\mathring{\beta}}}\Omega T_{\beta
_{m}^{\prime}}^{\beta_{m}}(\underline{u},v_{m})\cdots T_{\beta_{1}^{\prime}%
}^{\beta_{1}}(\underline{u},v_{1})T_{1\cdots1,\bar{1}}^{1,\underline{\beta
}^{\prime}}(\underline{v},u_{1})\\
&  =\sum_{j=1}^{m}d(v_{i}-u_{1})\left(  \mathring{R}(v_{ii-1})\dots
\mathring{R}(v_{i1})\Pi(v_{i},\underline{v}_{i})\right)  _{\underline{\beta}%
}^{\underline{\mathring{\beta}}}\Omega\left[  T_{1}^{\beta_{m}}(v_{m})\cdots
T_{1}^{\beta_{1}}(v_{1})\right]  _{i}T_{\bar{1}}^{\beta_{i}}(v_{i})
\end{align*}
The $c(v_{i}-u_{1})$-terms do not contribute because they produce terms like
$\Omega B\dots=0$. The Yang-Baxter equation for $R$ implies%
\begin{align}
&  \left(  \mathring{R}(v_{ji})\Pi(v_{ji})\right)  _{\dots\alpha\beta\dots
}^{\underline{\mathring{\beta}}}\Omega\left[  \dots T_{\bar{1}}^{\beta}%
(v_{i})T_{1}^{\alpha}(v_{j})\dots\right] \nonumber\\
&  =\left(  \Pi(v_{ij})\right)  _{\dots\beta\alpha\dots}^{\underline
{\mathring{\beta}}}\Omega\left[  \dots T_{\alpha^{\prime}}^{\alpha}%
(v_{j})T_{\beta^{\prime}}^{\beta}(v_{i})\dots\right]  R_{1\bar{1}}%
^{\beta^{\prime}\alpha^{\prime}}(v_{ji})\nonumber\\
&  =\left(  \Pi(v_{ij})\right)  _{\dots\beta\alpha\dots}^{\underline
{\mathring{\beta}}}\Omega\left[  \dots T_{1}^{\alpha}(v_{j})T_{\bar{1}}%
^{\beta}(v_{i})\dots\right]  (1+d(v_{ji}))\nonumber\\
&  +\left(  \Pi(v_{ij})\right)  _{\dots\beta\alpha\dots}^{\underline
{\mathring{\beta}}}\Omega\left[  \dots T_{\bar{1}}^{\alpha}(v_{j})T_{1}%
^{\beta}(v_{i})\dots\right]  (c(v_{ji})+d(v_{ji}))\nonumber\\
&  +\left(  \Pi(v_{ij})\right)  _{\dots\beta\alpha\dots}^{\underline
{\mathring{\beta}}}\Omega\left[  \dots T_{\mathring{\gamma}}^{\alpha}%
(v_{j})T_{\overline{\mathring{\gamma}}}^{\beta}(v_{i})\dots\right]
d(v_{ji})\mathbf{C}^{\overline{\mathring{\gamma}}\mathring{\gamma}}
\label{a.66}%
\end{align}
Iterating this formula we move the $T_{\bar{1}}$-operators to the left and
finally $\Pi\Omega T_{\bar{1}}\dots=0$. Therefore $\Pi_{\underline{\beta}%
}^{\underline{\mathring{\beta}}}\Omega T_{1}^{\beta_{m}}(v_{m})\cdots
T_{\bar{1}}^{\beta_{1}}(v_{1})$ is a sum of terms like%
\[
\Pi_{\underline{\beta}}^{\underline{\mathring{\beta}}}\Omega T_{1}^{\beta_{m}%
}(v_{m})\dots T_{\mathring{\gamma}}^{\beta_{i}}(v_{i})T_{\overline
{\mathring{\gamma}}}^{\beta_{j}}(v_{j})\cdots T_{1}^{\beta_{1}}(v_{1}%
)\mathbf{C}^{\overline{\mathring{\gamma}}\mathring{\gamma}}%
\]
Similar as for $T_{\bar{1}}^{\beta}$ we can move the two $T_{\mathring{\gamma
}}^{\beta_{i}}$-operators to the left using%
\begin{align*}
&  \left(  \mathring{R}(v_{ji})\Pi(v_{ji})\right)  _{\dots\alpha\beta\dots
}^{\underline{\mathring{\beta}}}\Omega\left(  \dots T_{\mathring{\gamma}%
}^{\alpha}(v_{i})T_{1}^{\beta}(v_{j})\dots\right) \\
&  =\left(  \Pi(v_{ij})\right)  _{\dots\beta\alpha\dots}^{\underline
{\mathring{\beta}}}\Omega\left(  \dots T_{\alpha^{\prime}}^{\alpha}%
(v_{j})T_{\beta^{\prime}}^{\beta}(v_{i})\dots\right)  R_{1\mathring{\gamma}%
}^{\beta^{\prime}\alpha^{\prime}}(v_{ji})\\
&  =\left(  \Pi(v_{ij})\right)  _{\dots\beta\alpha\dots}^{\underline
{\mathring{\beta}}}\Omega\left(  \left(  \dots T_{1}^{\alpha}(v_{j}%
)T_{\mathring{\gamma}}^{\beta}(v_{i})\dots\right)  +\left(  \dots
T_{\mathring{\gamma}}^{\alpha}(v_{j})T_{1}^{\beta}(v_{i})\dots\right)
c(v_{ji})\right)
\end{align*}
and finally%
\[
\left(  \Pi\right)  _{\dots\alpha\beta}^{\underline{\mathring{\beta}}}\Omega
T_{\mathring{\gamma}}^{\beta}(v_{i})T_{\overline{\mathring{\gamma}}}^{\alpha
}(v_{j})\cdots=\left(  \Pi\right)  _{\dots\alpha\beta}^{\underline
{\mathring{\beta}}}\delta_{\mathring{\gamma}}^{\beta}\delta_{\overline
{\mathring{\gamma}}}^{\alpha}a_{2}(v_{i})a_{2}(v_{j})\Omega\cdots
\]

\end{proof}

We calculate $f_{12}(\underline{v})$ and $f_{21}(\underline{v})$, the other
$f_{ij}(\underline{v})$ are due to (\ref{3.1c}) related to $f_{12}%
(\underline{v})$ by the symmetry%
\[
\left(  L(v_{ji})\Pi(v_{ji})\right)  _{\dots\beta^{\prime}\alpha^{\prime}%
\dots}^{\underline{\mathring{\beta}}}\left[  \dots T_{1}^{\alpha^{\prime}%
}(v_{i})T_{1}^{\beta^{\prime}}(v_{j})\dots\right]  =\left(  L(v_{ij}%
)\Pi(v_{ij})\right)  _{\dots\alpha\beta\dots}^{\underline{\mathring{\beta}}%
}\left[  \dots T_{1}^{\beta}(v_{j})T_{1}^{\alpha}(v_{i})\dots\right]  .
\]
We insert in (\ref{A32}) the intermediate states $\gamma=1,0,\bar{1}$ behind
the second R-matrix in the monodromy matrix $T_{1\cdots1,\bar{1}}^{1,\cdots
}(\underline{v},u_{1})$%
\begin{multline*}
uw_{C_{2}}^{A_{3}}\prod_{k=1}^{m}\left(  1+d(v_{k}-u_{1})\right)  =-\left(
L\Pi\right)  _{\underline{\beta}}\Omega T_{\beta_{m}^{\prime}}^{\beta_{m}%
}(\underline{u},v_{m})\cdots T_{\beta_{1}^{\prime}}^{\beta_{1}}(\underline
{u},v_{1})C_{2Q}\\
\times\left(  R(v_{1}-u_{1})R(v_{2}-u_{1})\right)  _{11,\gamma}^{1,\beta
_{1}^{\prime}\beta_{2}^{\prime}}\left(  R(v_{3}-u_{1})\dots R(v_{m}%
-u_{1})\right)  _{1\dots1,\bar{1}}^{\gamma,\beta_{3}^{\prime}\dots\beta
_{m}^{\prime}}\,.
\end{multline*}
\label{here}Similar as in the proof of lemma \ref{l1} one can show that for
$\gamma=1$ there are only contributions to $f_{ij}$ with $2<i<j$ and for
$\gamma=0$ there are only contributions to $f_{1j}$ or $f_{2j}$ with $2<j$. So
we have to consider only $\gamma=\bar{1}$ where we use
\[
\left(  R(v_{3}-u_{1})\dots R(v_{m}-u_{1})\right)  _{1\dots1,\bar{1}}^{\bar
{1},\beta_{3}^{\prime}\dots\beta_{m}^{\prime}}=\prod_{k=3}^{m}\left(
1+d(v_{k}-u_{1})\right)  \mathbf{1}_{1\dots1}^{\beta_{3}^{\prime}\dots
\beta_{m}^{\prime}}%
\]
to obtain for the contribution of $f_{12}$ and $f_{21}$ to $uw_{C_{2}}^{A_{3}%
}\prod_{k=1}^{m}\left(  1+d(v_{k}-u_{1})\right)  $%
\begin{multline*}
-\left(  L\Pi\right)  _{\underline{\beta}}\Omega T_{1}^{\beta_{m}}%
(v_{m})\cdots T_{1}^{\beta_{3}}(v_{3})T_{\beta_{2}^{\prime}}^{\beta_{2}}%
(v_{2})T_{\beta_{1}^{\prime}}^{\beta_{1}}(v_{1})C_{2Q}\\
\times\left(  R(v_{1}-u_{1})R(v_{2}-u_{1})\right)  _{11,\bar{1}}^{1,\beta
_{1}^{\prime}\beta_{2}^{\prime}}\prod_{k=3}^{m}\left(  1+d(v_{k}%
-u_{1})\right)
\end{multline*}
Using (\ref{Sb}) we obtain (similar as in the proof of lemma \ref{l1}%
\begin{align*}
&  \Pi_{\underline{\beta}}^{\underline{\mathring{\beta}}}\Omega T_{1}%
^{\beta_{m}}(v_{m})\cdots T_{1}^{\beta_{3}}(v_{3})T_{\beta_{2}^{\prime}%
}^{\beta_{2}}(v_{2})T_{\beta_{1}^{\prime}}^{\beta_{1}}(v_{1})\left(
R(v_{1}-u_{1})R(v_{2}-u_{1})\right)  _{11,\bar{1}}^{1,\beta_{1}^{\prime}%
\beta_{2}^{\prime}}\\
&  =c(v_{1}-u_{1})\left(  \mathring{R}(v_{21})\Pi\right)  _{\underline{\beta}%
}^{\underline{\mathring{\beta}}}\Omega T_{1}^{\beta_{m}}(v_{m})\cdots
T_{1}^{\beta_{3}}(v_{3})T_{\bar{1}}^{\beta_{2}}(v_{1})T_{1}^{\beta_{1}}%
(v_{2})\\
&  +c(v_{2}-u_{1})\Pi_{\underline{\beta}}^{\underline{\mathring{\beta}}}\Omega
T_{1}^{\beta_{m}}(v_{m})\cdots T_{1}^{\beta_{3}}(v_{3})T_{\bar{1}}^{\beta_{2}%
}(v_{2})T_{1}^{\beta_{1}}(v_{1})a(v_{12})\\
&  +d(v_{1}-u_{1})\Pi_{\underline{\beta}}^{\underline{\mathring{\beta}}}\Omega
T_{1}^{\beta_{m}}(v_{m})\cdots T_{1}^{\beta_{3}}(v_{3})T_{1}^{\beta_{2}}%
(v_{2})T_{\bar{1}}^{\beta_{1}}(v_{1})a(v_{12})\\
&  +d(v_{2}-u_{1})\left(  \mathring{R}(v_{21})\Pi\right)  _{\underline{\beta}%
}^{\underline{\mathring{\beta}}}\Omega T_{1}^{\beta_{m}}(v_{m})\cdots
T_{1}^{\beta_{3}}(v_{3})T_{1}^{\beta_{2}}(v_{1})T_{\bar{1}}^{\beta_{1}}%
(v_{2})\,.
\end{align*}
The $c$-terms do not contribute to $f_{12}$ and $f_{21}$. For the first
$d$-term we may replace (because of (\ref{a.66}))
\[
\Omega\left[  \dots T_{1}^{\alpha}(v_{2})T_{\bar{1}}^{\beta}(v_{1})\right]
\rightarrow\Omega\left[  \dots T_{\mathring{\gamma}}^{\alpha}(v_{2}%
)T_{\overline{\mathring{\gamma}}}^{\beta}(v_{1})\right]  f(v_{12}%
)\mathbf{C}^{\overline{\mathring{\gamma}}\mathring{\gamma}}+\ldots
\]
and obtain%
\begin{align*}
&  d(v_{1}-u_{1})\Pi_{\underline{\beta}}^{\underline{\mathring{\beta}}}\Omega
T_{1}^{\beta_{m}}(v_{m})\cdots T_{1}^{\beta_{3}}(v_{3})T_{1}^{\beta_{2}}%
(v_{2})T_{\bar{1}}^{\beta_{1}}(v_{1})a(v_{12})\\
&  =-d(v_{1}-u_{1})f(v_{12})\Pi_{\underline{\beta}}^{\underline{\mathring
{\beta}}}\Omega T_{1}^{\beta_{m}}(v_{m})\cdots T_{1}^{\beta_{3}}%
(v_{3})T_{\mathring{\gamma}}^{\beta_{2}}(v_{2})T_{\overline{\mathring{\gamma}%
}}^{\beta_{1}}(v_{1})a(v_{12})\mathbf{C}^{\overline{\mathring{\gamma}%
}\mathring{\gamma}}+\ldots
\end{align*}
where the missing term again does not contribute to $f_{12}$ and $f_{21}$.
Similarly for the second $d$-term
\begin{align*}
&  d(v_{2}-u_{1})\left(  \mathring{R}(v_{21})\Pi\right)  _{\underline{\beta}%
}^{\underline{\mathring{\beta}}}\Omega T_{1}^{\beta_{m}}(v_{m})\cdots
T_{1}^{\beta_{3}}(v_{3})T_{1}^{\beta_{2}}(v_{1})T_{\bar{1}}^{\beta_{1}}%
(v_{2})\\
&  =-d(v_{2}-u_{1})f(v_{21})\left(  \mathring{R}(v_{21})\Pi\right)
_{\underline{\beta}}^{\underline{\mathring{\beta}}}\Omega T_{1}^{\beta_{m}%
}(v_{m})\cdots T_{1}^{\beta_{3}}(v_{3})T_{\mathring{\gamma}}^{\beta_{2}}%
(v_{1})T_{\overline{\mathring{\gamma}}}^{\beta_{1}}(v_{2})\mathbf{C}%
^{\overline{\mathring{\gamma}}\mathring{\gamma}}+\ldots
\end{align*}
In order to move $T_{\mathring{\gamma}}^{\beta}(v_{2})T_{\overline
{\mathring{\gamma}}}^{\alpha}(v_{1})$ to the left we consider%
\begin{align*}
&  \left(  \left(  \mathring{R}(v_{32})\dots\mathring{R}(v_{m2})\right)
\left(  \mathring{R}(v_{31})\dots\mathring{R}(v_{m1})\right)  \Pi\right)
_{\underline{\beta}}^{\underline{\mathring{\beta}}}\Omega T_{\mathring{\gamma
}}^{\beta_{m}}(v_{2})T_{\overline{\mathring{\gamma}}}^{\beta_{m-1}}%
(v_{1})T_{1}^{\beta_{2}}(v_{m})\dots T_{1}^{\beta_{1}}(v_{3})\mathbf{C}%
^{\overline{\mathring{\gamma}}\mathring{\gamma}}\\
&  =\Pi_{\underline{\beta}}^{\underline{\mathring{\beta}}}\Omega T_{\beta
_{m}^{\prime}}^{\beta_{m}}(v_{m})\dots T_{\beta_{3}^{\prime}}^{\beta_{3}%
}(v_{3})T_{\beta_{2}^{\prime}}^{\beta_{2}}(v_{2})T_{\beta_{1}^{\prime}}%
^{\beta_{1}}(v_{1})\\
&  \times\left(  \left(  R(v_{32})\dots R(v_{m2})\right)  \left(
R(v_{31})\dots R(v_{m1})\right)  \right)  _{1\dots1,\overline{\mathring
{\gamma}}\mathring{\gamma}}^{\underline{\beta}^{\prime}}\mathbf{C}%
^{\overline{\mathring{\gamma}}\mathring{\gamma}}\\
&  =\Pi_{\underline{\beta}}^{\underline{\mathring{\beta}}}\Omega T_{1}%
^{\beta_{m}}(v_{m})\dots T_{1}^{\beta_{3}}(v_{3})T_{\mathring{\gamma}}%
^{\beta_{2}}(v_{2})T_{\overline{\mathring{\gamma}}}^{\beta_{1}}(v_{1}%
)\mathbf{C}^{\overline{\mathring{\gamma}}\mathring{\gamma}}+\ldots
\end{align*}
therefore%
\begin{align*}
&  d(v_{1}-u_{1})\left(  L\Pi\right)  _{\underline{\beta}}\Omega T_{1}%
^{\beta_{m}}(v_{m})\cdots T_{1}^{\beta_{3}}(v_{3})T_{1}^{\beta_{2}}%
(v_{2})T_{\bar{1}}^{\beta_{1}}(v_{1})a(v_{12})\\
&  =-\left(  L(\underline{v})\left(  \mathring{R}(v_{32})\dots\mathring
{R}(v_{m2})\right)  \left(  \mathring{R}(v_{31})\dots\mathring{R}%
(v_{m1})\right)  \Pi\left(  \underline{v}_{12}\right)  \mathbf{1}%
_{1}\mathbf{1}_{2}\right)  _{\underline{\beta}_{12}\beta_{1}\beta_{2}}\\
&  \times d(v_{1}-u_{1})f(v_{12})a(v_{12})\Omega T_{\mathring{\gamma}}%
^{\beta_{2}}(v_{2})T_{\overline{\mathring{\gamma}}}^{\beta_{1}}(v_{1}%
)T_{1}^{\beta_{m}}(v_{m})\dots T_{1}^{\beta_{3}}(v_{3})\mathbf{C}%
^{\overline{\mathring{\gamma}}\mathring{\gamma}}+\ldots\\
&  =-d(v_{1}-u_{1})f(v_{12})a(v_{12})\left(  L(v_{1}^{\prime},v_{2}^{\prime
},\underline{v}_{12})\mathbf{C}^{12}\Pi\left(  \underline{v}_{12}\right)
\right)  _{\underline{\beta}_{12}}\prod_{k=3}^{m}a(v_{k2})a(v_{k1})\\
&  \times a_{2}(v_{2})a_{2}(v_{1})\Omega T_{1}^{\beta_{m}}(v_{m})\dots
T_{1}^{\beta_{3}}(v_{3})+\ldots\\
&  =-d(v_{1}-u_{1})X^{(12)}(v_{1}^{\prime},v_{2}^{\prime},\underline{v}%
_{12})\frac{a_{2}(v_{2})a_{2}(v_{1})\prod_{k=3}^{m}a(v_{k2})a(v_{k1})}%
{a_{1}(v_{1}^{\prime})a_{1}(v_{2}^{\prime})\prod_{k=3}^{m}a(v_{2k}^{\prime
})a(v_{1k}^{\prime})}+\ldots\\
&  =-d(v_{1}-u_{1})X^{(12)}(v_{1}^{\prime},v_{2}^{\prime},\underline{v}%
_{12})\chi_{1}(\underline{u},\underline{v}^{(2)})\chi_{2}(\underline
{u},\underline{v})+\ldots
\end{align*}
where the missing terms again do not contribute to $f_{12}$ or $f_{21}$.
Similarly
\begin{align*}
&  d(v_{2}-u_{1})\left(  L\mathring{R}(v_{21})\Pi\right)  _{\underline{\beta}%
}\Omega T_{1}^{\beta_{m}}(v_{m})\cdots T_{1}^{\beta_{3}}(v_{3})T_{1}%
^{\beta_{2}}(v_{1})T_{\bar{1}}^{\beta_{1}}(v_{2})\\
&  =-\left(  L(\underline{v})\left(  \mathring{R}(v_{21})\mathring{R}%
(v_{31})\dots\mathring{R}(v_{m1})\right)  \left(  \mathring{R}(v_{32}%
)\dots\mathring{R}(v_{m2})\right)  \mathbf{C}^{21}\Pi\left(  \underline
{v}_{12}\right)  \right)  _{\underline{\beta}_{12}}^{\underline{\mathring
{\beta}}}\\
&  \times d(v_{2}-u_{1})f(v_{21})a_{2}(v_{2})a_{2}(v_{1})\Omega T_{1}%
^{\beta_{m}}(v_{m})\dots T_{1}^{\beta_{3}}(v_{3})+\ldots\\
&  =-d(v_{2}-u_{1})f(v_{21})\left(  L(v_{2}^{\prime},v_{1}^{\prime}%
,\underline{v}_{12})\mathbf{C}^{21}\Pi\left(  \underline{v}_{12}\right)
\right)  _{\underline{\beta}_{12}}a(v_{21})\prod_{k=3}^{m}a(v_{k2})a(v_{k1})\\
&  \times a_{2}(v_{2})a_{2}(v_{1})\Omega T_{1}^{\beta_{m}}(v_{m})\dots
T_{1}^{\beta_{3}}(v_{3})+\ldots\\
&  =-d(v_{2}-u_{1})X^{(21)}(v_{2}^{\prime},v_{1}^{\prime},\underline{v}%
_{12})\chi_{2}(\underline{u},\underline{v}^{(1)})\chi_{1}(\underline
{u},\underline{v})+\ldots
\end{align*}
Note that $\chi_{j}(\underline{u},\underline{v}^{(i)})\chi_{i}(\underline
{u},\underline{v})=\chi_{i}(\underline{u},\underline{v}^{(j)})\chi
_{j}(\underline{u},\underline{v})$ and%
\[
X^{(21)}(v_{2}^{\prime},v_{1}^{\prime},\underline{v}_{12})=-X^{(12)}%
(v_{1}^{\prime},v_{2}^{\prime},\underline{v}_{12})
\]
because of the identities%
\begin{align*}
L(v_{2}^{\prime},v_{1}^{\prime},\underline{v}_{12})\mathbf{\mathring{C}}^{21}
&  =\frac{\mathring{R}_{0}(v_{21})}{a(v_{21})}L(v_{1}^{\prime},v_{2}^{\prime
},\underline{v}_{12})\mathbf{\mathring{C}}^{12}\\
f(v)\mathring{R}_{0}(v)  &  =-a(-v)f(-v)\,.
\end{align*}
Finally
\begin{align*}
&  uw_{C_{2}}^{A_{3},12}+uw_{C_{2}}^{A_{3},21}\\
&  =-\left(  \frac{d(v_{1}-u_{1})}{\left(  1+d(v_{1}-u_{1})\right)  \left(
1+d(v_{2}-u_{1})\right)  }-\frac{d(v_{2}-u_{1})}{\left(  1+d(v_{1}%
-u_{1})\right)  \left(  1+d(v_{2}-u_{1})\right)  }\right) \\
&  \times X^{(12)}(v_{1}^{\prime},v_{2}^{\prime},\underline{v}_{12})\chi
_{1}(\underline{u},\underline{v}^{(2)})\chi_{2}(\underline{u},\underline{v})\\
&  =\left(  f(u_{1}-v_{1})-f(u_{1}-v_{2})\right)  X^{(12)}(v_{1}^{\prime
},v_{2}^{\prime},\underline{v}_{12})\chi_{1}(\underline{u},\underline{v}%
^{(2)})\chi_{2}(\underline{u},\underline{v})\\
&  =\left(  f(u_{1}-v_{1})X^{(12)}(v_{1}^{\prime},v_{2}^{\prime},\underline
{v}_{12})+f(u_{1}-v_{2})X^{(21)}(v_{2}^{\prime},v_{1}^{\prime},\underline
{v}_{12})\right)  \chi_{1}(\underline{u},\underline{v}^{(2)})\chi
_{2}(\underline{u},\underline{v})
\end{align*}
we set
\begin{align*}
uw_{C_{2}}^{A_{3},12}  &  =-f(u_{1}-v_{2})X^{(12)}(v_{1}^{\prime}%
,v_{2}^{\prime},\underline{v}_{12})\chi_{1}(\underline{u},\underline{v}%
^{(2)})\chi_{2}(\underline{u},\underline{v})\\
uw_{C_{2}}^{A_{3},21}  &  =-f(u_{1}-v_{1})X^{(21)}(v_{2}^{\prime}%
,v_{1}^{\prime},\underline{v}_{12})\chi_{2}(\underline{u},\underline{v}%
^{(1)})\chi_{1}(\underline{u},\underline{v})\\
uw_{C_{2}}^{A_{3},ij}  &  =-f(u_{1}-v_{j})X^{(ij)}(v_{i}^{\prime}%
,v_{j}^{\prime},\underline{v}_{ij})\chi_{i}(\underline{u},\underline{v}%
^{(j)})\chi_{j}(\underline{u},\underline{v})
\end{align*}
which satisfy the desired symmetry. Therefore we obtain $uw_{C_{2}}^{A_{3}%
,ij}(\underline{u},\underline{v})$ in the form of (\ref{uwa32}).

\section{Proof of Theorem
\protect\ref{HW}%
}

\label{a5}

\begin{proof}

\begin{enumerate}
\item The weights (\ref{w}) of the reference state $\Omega$ (\ref{omega}) are
\[
w=\left(  n=n_{0},0,\dots,0\right)
\]
In level $k=1,\dots,\left[  \left(  N-3\right)  /2\right]  $ of the Bethe
ansatz the weights are changed as
\[
w_{k}\rightarrow w_{k}-n_{k},~w_{k+1}\rightarrow w_{k+1}+n_{k}.
\]
This means the states $\Phi_{\underline{\alpha}}^{\underline{\mathring{\beta}%
}}(\underline{u},\underline{v})$ of (\ref{Phi}) are eigenvectors of the
weights. Using in addition (\ref{w3}) for $O(3)$ and (\ref{w4}) for $O(4)$ we
obtain $w=$%
\[
(w_{1},\dots,w_{\left[  N/2\right]  })=\left\{
\begin{array}
[c]{lll}%
\left(  n-n_{1},\dots,n_{\left[  N/2\right]  -1}-n_{\left[  N/2\right]
},n_{\left[  N/2\right]  }\right)  & \text{for} & N~\text{odd}\\
\left(  n-n_{1},\dots,n_{\left[  N/2\right]  -2}-n_{-}-n_{+},n_{-}%
-n_{+}\right)  & \text{for} & N~\text{even.}%
\end{array}
\right.
\]

\item The proof of the highest weight property%
\[
\Psi(\underline{v})M_{\mathring{\gamma}}^{1}=\Psi(\underline{v})M_{\bar{1}%
}^{\mathring{\gamma}}=\Psi(\underline{v})M_{\bar{1}}^{1}=0
\]
uses similar techniques as the derivation of the unwanted terms.

i) We use $\Omega B_{\mathring{\gamma}}(v)$, Yang-Baxter relations and apply
lemma \ref{l2} for $v\rightarrow\infty$
\begin{align*}
0  &  =%
\begin{array}
[c]{c}%
{\phantom{\rule{1.6889in}{1.2671in}}}%
\unitlength4mm\begin{picture}(12,8)\thicklines\put(3.3,2){$v$}
\put(3.3,0){$u_1$}
\put(5.5,0){$u_n$}
\put(7.3,0){1}
\put(8.3,0){1}
\put(9.3,0){1}
\put(5,3.9){$v_i$}
\put(10.2,1.4){$\mathring\gamma$}
\put(7.5,3){$_{\dots}$}
\put(8.8,3){$_{\dots}$}
\put(5,6){$\dots$}
\put(3.3,6.3){1}
\put(4.3,6.3){1}
\put(6.3,6.3){1}
\put(6.5,0){\line(0,1){6}}
\put(4.5,0){\line(0,1){6}}
\put(3.5,1){\oval(12,9)[rt]}
\put(3.5,6){\oval(2,1)[lb]}
\put(3.5,1){\oval(10,7)[rt]}
\put(3.5,6){\oval(3.5,3)[lb]}
\put(3.5,1){\oval(8,5)[rt]}
\put(3.5,6){\oval(5,5)[lb]}
\put(10,6){\oval(13,8)[lb]}
\put(.5,6){$\framebox(2.5,1){$\Pi$}$}
\put(.9,7){\line(0,1){.5}}
\put(1.7,7){\line(0,1){.5}}
\put(2.5,7){\line(0,1){.5}}
\end{picture}%
\end{array}
=%
\begin{array}
[c]{c}%
{\phantom{\rule{1.519in}{1.2671in}}}%
\unitlength4mm\begin{picture}(12,8)\thicklines\put(3.3,1.8){1}
\put(3.3,0){$u_1$}
\put(5.5,0){$u_n$}
\put(7.3,0){1}
\put(8.3,0){1}
\put(9.3,0){1}
\put(5,3.9){$v_i$}
\put(6.8,1.4){$\mathring\gamma$}
\put(7.5,3){$_{\dots}$}
\put(8.8,3){$_{\dots}$}
\put(5,6){$\dots$}
\put(4.3,6.3){1}
\put(6.3,6.3){1}
\put(6.5,0){\line(0,1){6}}
\put(4.5,0){\line(0,1){6}}
\put(3.5,1){\oval(12,9)[rt]}
\put(3.5,6){\oval(2,1)[lb]}
\put(3.5,1){\oval(10,7)[rt]}
\put(3.5,6){\oval(3.5,3)[lb]}
\put(3.5,1){\oval(8,5)[rt]}
\put(3.5,6){\oval(5,5)[lb]}
\put(3.8,2){\line(1,0){3}}
\put(.5,6){$\framebox(2.5,1){$\Pi$}$}
\put(.9,7){\line(0,1){.5}}
\put(1.7,7){\line(0,1){.5}}
\put(2.5,7){\line(0,1){.5}}
\end{picture}%
\end{array}
+O(v^{-2})\\
&  +\sum_{i=1}^{m}c(v-v_{i})%
\begin{array}
[c]{c}%
{\phantom{\rule{1.519in}{1.2671in}}}%
\unitlength4mm\begin{picture}(12,8)\thicklines\put(3.3,0){$u_1$}
\put(5.5,0){$u_n$}
\put(7.3,0){1}
\put(8.3,0){1}
\put(9.3,0){1}
\put(5,3.9){$v_i$}
\put(3.3,1.4){$\mathring\gamma$}
\put(7.5,3){$_{\dots}$}
\put(8.8,3){$_{\dots}$}
\put(5,6){$\dots$}
\put(3.3,6.3){1}
\put(4.3,6.3){1}
\put(6.3,6.3){1}
\put(6.5,0){\line(0,1){6}}
\put(4.5,0){\line(0,1){6}}
\put(3.5,1){\oval(12,9)[rt]}
\put(3.5,6){\oval(2,1)[lb]}
\put(4.5,1){\oval(8,7)[rt]}
\put(4.5,6){\oval(2,3)[lb]}
\put(2.7,6){\oval(2,3)[lb]}
\put(2.7,2.5){\oval(2,4)[rt]}
\put(3.5,1){\oval(8,5)[rt]}
\put(3.5,6){\oval(5,5)[lb]}
\put(.5,6){$\framebox(2.5,1){$\Pi$}$}
\put(.9,7){\line(0,1){.5}}
\put(1.7,7){\line(0,1){.5}}
\put(2.5,7){\line(0,1){.5}}
\end{picture}%
\end{array}
+\sum_{i=1}^{m}c(v_{i}-v)%
\begin{array}
[c]{c}%
{\phantom{\rule{1.6889in}{1.2403in}}}%
\unitlength4mm\begin{picture}(12,8)\thicklines\put(6,1.4){$1$}
\put(7.3,0){1}
\put(8.3,0){1}
\put(9.3,0){1}
\put(4.2,3.9){$v_i$}
\put(10.2,1.4){$\mathring\gamma$}
\put(7.5,3){$_{\dots}$}
\put(8.8,3){$_{\dots}$}
\put(4.2,6){$\dots$}
\put(3.6,6.3){1}
\put(5.6,6.3){1}
\put(5.8,0){\line(0,1){6}}
\put(3.8,0){\line(0,1){6}}
\put(3.5,1){\oval(12,9)[rt]}
\put(3.5,6){\oval(2,1)[lb]}
\put(3.5,3){\oval(10,3)[rt]}
\put(10,3){\oval(3,2)[lb]}
\put(6.5,1){\oval(4,2)[rt]}
\put(3.5,6){\oval(3.5,3)[lb]}
\put(3.5,1){\oval(8,5)[rt]}
\put(3.5,6){\oval(5,5)[lb]}
\put(.5,6){$\framebox(2.5,1){$\Pi$}$}
\put(.9,7){\line(0,1){.5}}
\put(1.7,7){\line(0,1){.5}}
\put(2.5,7){\line(0,1){.5}}
\end{picture}%
\end{array}
\end{align*}
Multiplied with $L(\underline{v})$ this reads in terms of formulas as%
\begin{align*}
0  &  =\left(  L(\underline{v})\Pi(\underline{v})\right)  _{\underline{\beta}%
}\Omega B_{\mathring{\gamma}}(v)\,T_{1}^{\beta_{m}^{\prime}}(v_{m})\cdots
T_{1}^{\beta_{1}^{\prime}}(v_{1})\\
&  =\left(  L(\underline{v})\Pi(\underline{v})\right)  _{\underline{\beta}%
}\Omega\,\left(  R_{0m}(v-v_{m})\dots R_{01}(v-v_{1})\right)  _{\gamma
,\underline{\beta}^{\prime}}^{\underline{\beta},1}T_{1}^{\beta_{m}%
^{\prime\prime}}(v_{m})\cdots T_{1}^{\beta_{1}^{\prime\prime}}(v_{1}%
)T_{\gamma^{\prime}}^{\gamma}(v)\\
&  \times\left(  R_{01}(v_{1}-v)\dots R_{0m}(v_{m}-v)\right)  _{1\dots
1,\mathring{\gamma}}^{\gamma\underline{\beta}^{\prime\prime}}+O(v^{-2})\,.
\end{align*}
With equations (\ref{4.1}), (\ref{4.3}) and using similar techniques as for
the derivation of $uw_{C}^{A}$ and $uw_{C}^{D}$ above we obtain%
\[
0=\Psi(\underline{v})M_{\mathring{\gamma}}^{1}-\sum_{i=1}^{m}X_{\mathring
{\gamma}}^{(i)}(\underline{u},\underline{v})+\sum_{i=1}^{m}X_{\mathring
{\gamma}}^{(i)}(\underline{u},\underline{v}^{(i)})\chi_{i}(\underline
{u},\underline{v})
\]
with $X_{\mathring{\gamma}}^{(i)}$ and $\chi_{i}$ defined in (\ref{Xi}) and
(\ref{chi}). After multiplication with $g(\underline{u},\underline{v})$ and
summation over the $\underline{v}$ the terms cancel each other because of
$\chi_{i}(\underline{u},\underline{v})g(\underline{u},\underline
{v})=g(\underline{u},\underline{v}^{(i)})$.

ii) We consider%
\begin{multline*}%
\begin{array}
[c]{c}%
{\phantom{\rule{1.5064in}{1.2646in}}}%
\unitlength4mm\begin{picture}(10,7)\thicklines\put(3.9,0){$u_1$}
\put(8.3,0){$u_n$}
\put(6,4.2){$v$}
\put(6,2.7){$v_i$}
\put(9.5,4.6){$\bar1$}
\put(9.5,3.1){1}
\put(9.5,2.1){1}
\put(9.5,1.1){1}
\put(4.8,6.1){1}
\put(7.8,6.1){1}
\put(5.8,5.5){$\dots$}
\put(2.8,4.5){$_{\dots}$}
\put(1.5,5){\line(1,0){7.5}}
\put(8,0){\line(0,1){6}}
\put(5,0){\line(0,1){6}}
\put(.8,4.6){$\mathring\gamma$}
\put(9,6){\oval(11,5)[lb]}
\put(9,6){\oval(12.5,7)[lb]}
\put(9,6){\oval(14,9)[lb]}
\put(1.8,6){$\framebox(2,1){$\Pi$}$}
\put(2.,7){\line(0,1){.5}}
\put(2.8,7){\line(0,1){.5}}
\put(3.5,7){\line(0,1){.5}}
\end{picture}%
\end{array}
=\sum_{i=1}^{m}d(v_{i}-v)%
\begin{array}
[c]{c}%
{\phantom{\rule{1.5067in}{1.2649in}}}%
\unitlength4mm\begin{picture}(10,7)\thicklines\put(3.9,0){$u_1$}
\put(8.3,0){$u_n$}
\put(6,4.2){$v$}
\put(6,2.7){$v_i$}
\put(4.2,4.6){$\bar1$}
\put(9.5,3.1){1}
\put(9.5,2.1){1}
\put(9.5,1.1){1}
\put(4.8,6.1){1}
\put(7.8,6.1){1}
\put(5.8,5.5){$\dots$}
\put(8,0){\line(0,1){6}}
\put(5,0){\line(0,1){6}}
\put(.8,4.6){$\mathring\gamma$}
\put(9,6){\oval(11,5)[lb]}
\put(4,4){\oval(2.5,2)[lt]}
\put(9,4){\oval(12.5,3)[lb]}
\put(1.5,6){\oval(2.5,2)[rb]}
\put(9,6){\oval(14,9)[lb]}
\put(1.8,6){$\framebox(2,1){$\Pi$}$}
\put(2.,7){\line(0,1){.5}}
\put(2.8,7){\line(0,1){.5}}
\put(3.5,7){\line(0,1){.5}}
\end{picture}%
\end{array}
+O(v^{-2})\\
=%
\begin{array}
[c]{c}%
{\phantom{\rule{1.519in}{1.2671in}}}%
\unitlength4mm\begin{picture}(10,8)\thicklines\put(3.3,1.8){$\mathring\gamma$}
\put(3.3,0){$u_1$}
\put(5.5,0){$u_n$}
\put(7.3,0){1}
\put(8.3,0){1}
\put(9.3,0){1}
\put(5,3.9){$v_i$}
\put(6.9,1.4){$\bar1$}
\put(7.5,3){$_{\dots}$}
\put(8.8,3){$_{\dots}$}
\put(5,6){$\dots$}
\put(4.3,6.3){1}
\put(6.3,6.3){1}
\put(6.5,0){\line(0,1){6}}
\put(4.5,0){\line(0,1){6}}
\put(3.5,1){\oval(12,9)[rt]}
\put(3.5,6){\oval(2,1)[lb]}
\put(3.5,1){\oval(10,7)[rt]}
\put(3.5,6){\oval(3.5,3)[lb]}
\put(3.5,1){\oval(8,5)[rt]}
\put(3.5,6){\oval(5,5)[lb]}
\put(3.8,2){\line(1,0){3}}
\put(.5,6){$\framebox(2.5,1){$\Pi$}$}
\put(.9,7){\line(0,1){.5}}
\put(1.7,7){\line(0,1){.5}}
\put(2.5,7){\line(0,1){.5}}
\end{picture}%
\end{array}
+\sum_{i=1}^{m}d(v_{i}-v)%
\begin{array}
[c]{c}%
{\phantom{\rule{1.6328in}{1.2671in}}}%
\unitlength4mm\begin{picture}(12,8)\thicklines\put(3.3,0){$u_1$}
\put(5.5,0){$u_n$}
\put(7.3,0){1}
\put(8.3,0){1}
\put(9.3,0){1}
\put(5,3.9){$v_i$}
\put(6.7,1.4){$\mathring\gamma$}
\put(10,1.4){$\bar1$}
\put(7.5,3){$_{\dots}$}
\put(8.8,3){$_{\dots}$}
\put(5,6){$\dots$}
\put(4.3,6.3){1}
\put(6.3,6.3){1}
\put(6.5,0){\line(0,1){6}}
\put(4.5,0){\line(0,1){6}}
\put(3.5,1){\oval(12,9)[rt]}
\put(3.5,6){\oval(2,1)[lb]}
\put(3.5,3){\oval(10,3)[rt]}
\put(7.2,3){\oval(2.6,3)[rb]}
\put(10,1){\oval(3,1.5)[lt]}
\put(3.5,6){\oval(3.5,3)[lb]}
\put(3.5,1){\oval(8,5)[rt]}
\put(3.5,6){\oval(5,5)[lb]}
\put(.5,6){$\framebox(2.5,1){$\Pi$}$}
\put(.9,7){\line(0,1){.5}}
\put(1.7,7){\line(0,1){.5}}
\put(2.5,7){\line(0,1){.5}}
\end{picture}%
\end{array}
+O(v^{-2})
\end{multline*}

Multiplied with $L(\underline{v})$ this reads in terms of formulas as%
\begin{align*}
&  \left(  L(\underline{v})\Pi(\underline{v})\right)  _{\underline{\beta}%
}\left(  R_{10}(v_{1}-v)\dots R_{m0}(v_{m}-v)\right)  _{\underline{\beta
}^{\prime},\gamma}^{\mathring{\gamma}\underline{\beta}}\Omega T_{\bar{1}%
}^{\gamma}(v)\,T_{1}^{\beta_{m}^{\prime}}(v_{m})\cdots T_{1}^{\beta
_{1}^{\prime}}(v_{1})\\
&  =\left(  L(\underline{v})\Pi(\underline{v})\right)  _{\underline{\beta}%
}\left(  R_{10}(v_{1}-v)\dots R_{m0}(v_{m}-v)\right)  _{\bar{1},\underline
{\beta}^{\prime}}^{\mathring{\gamma}\underline{\beta}}\Omega\,T_{1}^{\beta
_{m}^{\prime}}(v_{m})\cdots T_{1}^{\beta_{1}^{\prime}}(v_{1})+O(v^{-2})\\
&  =\Psi(\underline{v})T_{\bar{1}}^{\mathring{\gamma}}(v)+O(v^{-2})\\
&  +\left(  L(\underline{v})\Pi(\underline{v})\right)  _{\underline{\beta}%
}\Omega\,T_{\beta_{m}^{\prime}}^{\beta_{m}}(v_{m})\cdots T_{\beta_{1}^{\prime
}}^{\beta_{1}}(v_{1})\left(  R_{10}(v_{1}-v)\dots R_{m0}(v_{m}-v)\right)
_{1\dots1,\bar{1}}^{\mathring{\gamma}\underline{\beta}^{\prime}}\,.
\end{align*}
It has been used that only $\gamma=\bar{1}$ contributes because of $\Omega
B_{2}=\Omega B_{3}=0$. We apply lemma \ref{l2} for $v\rightarrow\infty$. With
equations (\ref{4.1}), (\ref{4.3}) and using similar techniques as for the
derivation of $uw_{C_{3}}^{D}$ and $uw_{C_{3}}^{A_{3}}$ above we obtain%
\[
0=\Psi(\underline{v})M_{\bar{1}}^{\mathring{\gamma}}-\mathbf{C}^{\mathring
{\gamma}\mathring{\gamma}^{\prime}}\sum_{i=1}^{m}X_{\mathring{\gamma}^{\prime
}}^{(i)}(\underline{u},\underline{v})+\mathbf{C}^{\mathring{\gamma}%
\mathring{\gamma}^{\prime}}\sum_{i=1}^{m}X_{\mathring{\gamma}^{\prime}}%
^{(i)}(\underline{u},\underline{v}^{(i)})\chi_{i}(\underline{u},\underline
{v})\,.
\]
Again after multiplication with $g(\underline{u},\underline{v})$ and summation
over the $\underline{v}$ the terms cancel each other because of $\chi
_{i}(\underline{u},\underline{v})g(\underline{u},\underline{v})=g(\underline
{u},\underline{v}^{(i)})$.

iii) We consider%
\[
0=\Omega M_{\bar{1}}^{1}\dots=%
\begin{array}
[c]{c}%
{\phantom{\rule{1.6472in}{1.2574in}}}%
\unitlength4mm\begin{picture}(12,8)\thicklines\put(3.3,2){$v$}
\put(3.3,0){$u_1$}
\put(5.5,0){$u_n$}
\put(7.3,0){1}
\put(8.3,0){1}
\put(9.3,0){1}
\put(5,3.9){$v_i$}
\put(10.2,1.4){$\bar1$}
\put(7.5,3){$_{\dots}$}
\put(8.8,3){$_{\dots}$}
\put(5,6){$\dots$}
\put(3.3,6.3){1}
\put(4.3,6.3){1}
\put(6.3,6.3){1}
\put(6.5,0){\line(0,1){6}}
\put(4.5,0){\line(0,1){6}}
\put(3.5,1){\oval(12,9)[rt]}
\put(3.5,6){\oval(2,1)[lb]}
\put(3.5,1){\oval(10,7)[rt]}
\put(3.5,6){\oval(3.5,3)[lb]}
\put(3.5,1){\oval(8,5)[rt]}
\put(3.5,6){\oval(5,5)[lb]}
\put(10,6){\oval(13,8)[lb]}
\put(.5,6){$\framebox(2.5,1){$\Pi$}$}
\put(.9,7){\line(0,1){.5}}
\put(1.7,7){\line(0,1){.5}}
\put(2.5,7){\line(0,1){.5}}
\end{picture}%
\end{array}
=%
\begin{array}
[c]{c}%
{\phantom{\rule{1.5075in}{1.2574in}}}%
\unitlength4mm\begin{picture}(12,8)\thicklines\put(3.3,1.8){1}
\put(3.3,0){$u_1$}
\put(5.5,0){$u_n$}
\put(7.3,0){1}
\put(8.3,0){1}
\put(9.3,0){1}
\put(5,3.9){$v_i$}
\put(6.8,1.4){$\bar1$}
\put(7.5,3){$_{\dots}$}
\put(8.8,3){$_{\dots}$}
\put(5,6){$\dots$}
\put(4.3,6.3){1}
\put(6.3,6.3){1}
\put(6.5,0){\line(0,1){6}}
\put(4.5,0){\line(0,1){6}}
\put(3.5,1){\oval(12,9)[rt]}
\put(3.5,6){\oval(2,1)[lb]}
\put(3.5,1){\oval(10,7)[rt]}
\put(3.5,6){\oval(3.5,3)[lb]}
\put(3.5,1){\oval(8,5)[rt]}
\put(3.5,6){\oval(5,5)[lb]}
\put(3.8,2){\line(1,0){3}}
\put(.5,6){$\framebox(2.5,1){$\Pi$}$}
\put(.9,7){\line(0,1){.5}}
\put(1.7,7){\line(0,1){.5}}
\put(2.5,7){\line(0,1){.5}}
\end{picture}%
\end{array}
\]%
\begin{align*}
&  +\sum_{i=1}^{m}c(v-v_{i})%
\begin{array}
[c]{c}%
{\phantom{\rule{1.5075in}{1.2574in}}}%
\unitlength4mm\begin{picture}(12,8)\thicklines\put(3.3,0){$u_1$}
\put(5.5,0){$u_n$}
\put(7.3,0){1}
\put(8.3,0){1}
\put(9.3,0){1}
\put(5,3.9){$v_i$}
\put(3.3,1.4){$\bar1$}
\put(7.5,3){$_{\dots}$}
\put(8.8,3){$_{\dots}$}
\put(5,6){$\dots$}
\put(3.3,6.3){1}
\put(4.3,6.3){1}
\put(6.3,6.3){1}
\put(6.5,0){\line(0,1){6}}
\put(4.5,0){\line(0,1){6}}
\put(3.5,1){\oval(12,9)[rt]}
\put(3.5,6){\oval(2,1)[lb]}
\put(4.5,1){\oval(8,7)[rt]}
\put(4.5,6){\oval(2,3)[lb]}
\put(2.7,6){\oval(2,3)[lb]}
\put(2.7,2.5){\oval(2,4)[rt]}
\put(3.5,1){\oval(8,5)[rt]}
\put(3.5,6){\oval(5,5)[lb]}
\put(.5,6){$\framebox(2.5,1){$\Pi$}$}
\put(.9,7){\line(0,1){.5}}
\put(1.7,7){\line(0,1){.5}}
\put(2.5,7){\line(0,1){.5}}
\end{picture}%
\end{array}
+\sum_{i=1}^{m}d(v_{i}-v)%
\begin{array}
[c]{c}%
{\phantom{\rule{1.6194in}{1.2579in}}}%
\unitlength4mm\begin{picture}(12,8)\thicklines\put(3.3,0){$u_1$}
\put(5.5,0){$u_n$}
\put(7.3,0){1}
\put(8.3,0){1}
\put(9.3,0){1}
\put(5,3.9){$v_i$}
\put(6.7,1.4){$ 1$}
\put(10,1.4){$\bar1$}
\put(7.5,3){$_{\dots}$}
\put(8.8,3){$_{\dots}$}
\put(5,6){$\dots$}
\put(4.3,6.3){1}
\put(6.3,6.3){1}
\put(6.5,0){\line(0,1){6}}
\put(4.5,0){\line(0,1){6}}
\put(3.5,1){\oval(12,9)[rt]}
\put(3.5,6){\oval(2,1)[lb]}
\put(3.5,3){\oval(10,3)[rt]}
\put(7.2,3){\oval(2.6,3)[rb]}
\put(10,1){\oval(3,1.5)[lt]}
\put(3.5,6){\oval(3.5,3)[lb]}
\put(3.5,1){\oval(8,5)[rt]}
\put(3.5,6){\oval(5,5)[lb]}
\put(.5,6){$\framebox(2.5,1){$\Pi$}$}
\put(.9,7){\line(0,1){.5}}
\put(1.7,7){\line(0,1){.5}}
\put(2.5,7){\line(0,1){.5}}
\end{picture}%
\end{array}
\\
&  +\sum_{i=1}^{m}c(v_{i}-v)%
\begin{array}
[c]{c}%
{\phantom{\rule{1.6472in}{1.23in}}}%
\unitlength4mm\begin{picture}(12,8)\thicklines\put(6,1.4){$1$}
\put(7.3,0){1}
\put(8.3,0){1}
\put(9.3,0){1}
\put(4.2,3.9){$v_i$}
\put(10.2,1.4){$\bar1$}
\put(7.5,3){$_{\dots}$}
\put(8.8,3){$_{\dots}$}
\put(4.2,6){$\dots$}
\put(3.6,6.3){1}
\put(5.6,6.3){1}
\put(5.8,0){\line(0,1){6}}
\put(3.8,0){\line(0,1){6}}
\put(3.5,1){\oval(12,9)[rt]}
\put(3.5,6){\oval(2,1)[lb]}
\put(3.5,3){\oval(10,3)[rt]}
\put(10,3){\oval(3,2)[lb]}
\put(6.5,1){\oval(4,2)[rt]}
\put(3.5,6){\oval(3.5,3)[lb]}
\put(3.5,1){\oval(8,5)[rt]}
\put(3.5,6){\oval(5,5)[lb]}
\put(.5,6){$\framebox(2.5,1){$\Pi$}$}
\put(.9,7){\line(0,1){.5}}
\put(1.7,7){\line(0,1){.5}}
\put(2.5,7){\line(0,1){.5}}
\end{picture}%
\end{array}
+O(v^{-2})
\end{align*}
or in terms of formulas%
\begin{align*}
0  &  =\Omega M_{\bar{1}}^{1}\dots\\
&  =\left(  L(\underline{v})\Pi(\underline{v})\right)  _{\underline{\beta}%
}\left(  R(v-v_{m})\dots R(v-v_{1})\right)  _{\gamma,\underline{\beta}%
^{\prime}}^{\underline{\beta},1}\Omega\,T_{\beta_{m}^{\prime\prime}}%
^{\beta_{m}^{\prime}}(v_{m})\cdots T_{\beta_{1}^{\prime\prime}}^{\beta
_{1}^{\prime}}(v_{1})T_{\gamma^{\prime}}^{\gamma}(v)\\
&  \times\left(  R(v_{1}-v)\dots R(v_{m}-v)\right)  _{1\dots1,\bar{1}}%
^{\gamma^{\prime},\underline{\beta}^{\prime\prime}}\,.
\end{align*}
For $v\rightarrow\infty$ we apply lemma \ref{l2}, equations (\ref{4.1}) and
(\ref{4.3}) and obtain%
\[
0=\Psi(\underline{v})M_{\bar{1}}^{1}-\sum_{i=1}^{m}X^{(ij)}(\underline
{v})+\sum_{i=1}^{m}X^{(ij)}(v_{i}^{\prime},v_{j}^{\prime},\underline{v}%
_{ij})\chi_{i}(\underline{u},\underline{v})\chi_{j}(\underline{u}%
,\underline{v})
\]
where similar techniques as above for the derivation of the unwanted termd
have bee used. Again after multiplication with $g(\underline{u},\underline
{v})$ and summation over the $\underline{v}$ the terms cancel each other
because of $\chi_{i}(\underline{u},\underline{v})\chi_{j}(\underline
{u},\underline{v})g(\underline{u},\underline{v})=g(\underline{u},\underline
{v}^{(ij)})$.

iv) Next we prove%
\[
\Psi(\underline{v})M_{\mathring{\gamma}}^{\mathring{\gamma}^{\prime}%
}=0,~1<\mathring{\gamma}^{\prime}<\mathring{\gamma}<\bar{1}\,.
\]

We consider
\begin{align*}
&  L_{\underline{\beta}^{\prime}}(\underline{v})\Pi_{\gamma\underline{\beta}%
}^{\mathring{\gamma}^{\prime}\underline{\beta}^{\prime}}(v,\underline
{v})\Omega\,T_{1}^{\beta_{m}}(w_{m})\cdots T_{1}^{\beta_{1}}(w_{1}%
)T_{\mathring{\gamma}}^{\gamma}(v)+O(v^{-2})\\
&  =\left(  L(\underline{v})\left(  T^{(1)}\right)  _{\mathring{\gamma}%
}^{\mathring{\gamma}^{\prime}}(v)\right)  _{\underline{\beta}^{\prime}}%
\Pi_{\underline{\beta}}^{\underline{\beta}^{\prime}}(\underline{v}%
)\Omega\,T_{\mathring{\gamma}}^{\mathring{\gamma}}(v)T_{1}^{\beta_{m}}%
(w_{m})\cdots T_{1}^{\beta_{1}}(w_{1})+O(v^{-2})
\end{align*}
where Yang-Baxter rules and (\ref{RPi}) have been used. We have also used that
by (\ref{Pia}) and (\ref{Sb})%
\begin{align*}
\Pi_{\gamma\underline{\beta}}^{\mathring{\gamma}^{\prime}\underline{\beta
}^{\prime}}(v,\underline{v})  &  =\delta_{\gamma}^{\mathring{\gamma}^{\prime}%
}\Pi_{\underline{\beta}}^{\underline{\beta}^{\prime}}(\underline{v}%
)+O(v^{-1})\\
\left(  R(w_{1}-v)\dots R(w_{m}-v)\right)  _{1\dots m,0}  &  =\mathbf{1}%
_{1\dots m}\mathbf{1}_{0}+O(v^{-1}).
\end{align*}
For $v\rightarrow\infty$ the highest weight condition $L(\underline{v})\left(
M^{(1)}\right)  _{\mathring{\gamma}}^{\mathring{\gamma}^{\prime}}=0$ implies
the claim.

\item The highest weight properties of the weights are obtained as follows.
The commutation relation relation (\ref{4.8}) reads in the complex basis as%
\[
\lbrack M_{\alpha}^{\alpha^{\prime}},M_{\beta}^{\beta^{\prime}}]=-\delta
_{\alpha}^{\beta^{\prime}}M_{\beta}^{\alpha^{\prime}}+\mathbf{C}%
^{\alpha^{\prime}\beta^{\prime}}\left(  \mathbf{C}M\right)  _{\alpha\beta
}+M_{\alpha}^{\beta^{\prime}}\delta_{\beta}^{\alpha^{\prime}}-\left(
M\mathbf{C}\right)  ^{\beta^{\prime}\alpha^{\prime}}\mathbf{C}_{\alpha\beta
}\,.
\]
In particular for $\beta\neq\alpha,\bar{\alpha}$%
\[
\lbrack M_{\alpha}^{\beta},M_{\beta}^{\alpha}]=M_{\alpha}^{\alpha}-M_{\beta
}^{\beta}=M_{\alpha}^{\alpha}+M_{\bar{\beta}}^{\bar{\beta}}\,.
\]
Because of $\left(  M_{\alpha}^{\beta}\right)  ^{\dag}=M_{\beta}^{\alpha}$%
\[
0\leq M_{\alpha}^{\beta}\left(  M_{\alpha}^{\beta}\right)  ^{\dag}=M_{\alpha
}^{\beta}M_{\beta}^{\alpha}=M_{\beta}^{\alpha}M_{\alpha}^{\beta}+M_{\alpha
}^{\alpha}-M_{\beta}^{\beta}\,.
\]
Applying this to highest weight co-vectors with%
\[
0=\Psi M_{\beta}^{\alpha}~\text{for }\alpha<\beta
\]
we obtain for the weights (\ref{w})%
\[
0\leq w_{\alpha}-w_{\beta}~\text{for }\alpha<\beta\leq N/2\,.
\]
In addition if $N$ is even%
\begin{align*}
0  &  \leq w_{\alpha}+w_{\bar{\beta}}~\text{for }\alpha\leq N/2<\beta\neq
\bar{\alpha}\\
&  \Rightarrow w_{1}\geq w_{2}\geq\dots\geq w_{N/2-1}\geq|w_{N/2}|
\end{align*}
and if $N$ is odd
\begin{align*}
0  &  \leq w_{\alpha}~\text{for }\alpha\leq N/2~\text{because }\Psi M_{0}%
^{0}=0\\
&  \Rightarrow w_{1}\geq w_{2}\geq\dots\geq w_{N/2}\geq0\,.
\end{align*}

\end{enumerate}
\end{proof}


\begin{thebibliography}{99}                                                                                               %


\bibitem {KW}M.~Karowski and P.~Weisz, \newblock Exact form factors in
(1+1)-dimensional field theoretic models with soliton behavior,
\newblock Nucl. Phys. \textbf{B139}, 455--476 (1978).

\bibitem {BKW}B.~Berg, M.~Karowski, and P.~Weisz, \newblock Construction of
Green functions from an exact S matrix, \newblock Phys. Rev. \textbf{D19},
2477--2479 (1979).

\bibitem {Sm}F.~Smirnov, \newblock Form Factors in Completely Integrable
Models of Quantum Field Theory, \newblock Adv. Series in Math. Phys.
\textbf{14}, World Scientific (1992).

\bibitem {BKZ2}H.~Babujian, M.~Karowski, and A.~Zapletal, \newblock Matrix
Difference Equations and a Nested Bethe ansatz, \newblock J. Phys.
\textbf{A30}, 6425--6450 (1997).

\bibitem {BFKZ}H.~M. Babujian, A.~Fring, M.~Karowski, and A.~Zapletal,
\newblock Exact form factors in integrable quantum field theories: The
sine-Gordon model, \newblock Nucl. Phys. \textbf{B538}, 535--586 (1999).

\bibitem {BFK1}H.~M. Babujian, A.~Foerster, and M.~Karowski,
\newblock {The nested SU(N) off-shell Bethe ansatz and exact form factors},
\newblock J. Phys. \textbf{A41}, 275202 (2008).

\bibitem {Bethe}H.~Bethe, \newblock On the theory of metals. 1. Eigenvalues
and eigenfunctions for the linear atomic chain, \newblock Z. Phys.
\textbf{71}, 205--226 (1931).

\bibitem {TF}L.~A. Takhtajan and L.~D. Faddeev, \newblock The Quantum method
of the inverse problem and the Heisenberg XYZ model, \newblock Russ. Math.
Surveys \textbf{34}, 11--68 (1979).

\bibitem {B3}H.~M. Babujian, \newblock Off-shell Bethe ansatz equation and N
point correlators in SU(2) WZNW theory, \newblock J. Phys. \textbf{A26},
6981--6990 (1993).

\bibitem {Re}N.~Y. Reshetikhin, \newblock Jackson-type integrals, bethe
vectors, and solutions to a difference analog of the Knizhnik-Zamolodchikov
system, \newblock Lett. Math. Phys. \textbf{26}, 153--165 (1992).

\bibitem {Yang1}C.~N. Yang, \newblock Some exact results for the many body
problems in one dimension with repulsive delta function interaction,
\newblock Phys. Rev. Lett. \textbf{19}, 1312--1314 (1967).

\bibitem {Su}B.~Sutherland,
\newblock {Further Results for the Many-Body Problem in One Dimension},
\newblock Phys.Rev.Lett. \textbf{20}, 98--100 (1968).

\bibitem {sut}B.~Sutherland, \newblock Model for a multicomponent quantum
system, \newblock Phys. Rev. B \textbf{12}(9), 3795--3805 (Nov 1975).

\bibitem {BKZ}H.~Babujian, M.~Karowski, and A.~Zapletal, \newblock SU(N)
Matrix Difference Equations and a Nested Bethe ansatz,
\newblock hep-th/9611005 (1996).

\bibitem {Tar}V.~Tarasov, \newblock Algebraic Bethe ansatz for the
Izergin-Korepin R Matrix, \newblock Theor. Math. Phys. \textbf{76}, 793--803 (1988).

\bibitem {MR}M.~J. Martins and P.~B. Ramos,
\newblock {The algebraic Bethe ansatz for rational braid-monoid lattice
models}, \newblock Nucl. Phys. \textbf{B500}, 579--620 (1997).

\bibitem {dVK}H.~J. de~Vega and M.~Karowski, \newblock Exact Bethe ansatz
solution of O(2N) symmetric theories, \newblock Nucl. Phys. \textbf{B280}, 225 (1987).

\bibitem {BFK7}H.~M. Babujian, A.~Foerster, and M.~Karowski, \newblock Exact
form factors of the O(N) Sigma and Gross-Neveu models and the 1/N expansion,
\newblock to be published.

\bibitem {Ba}J.~Balog, \newblock Form-factors and asymptotic freedom in the
O(3) sigma model, \newblock Phys. Lett. \textbf{B300}, 145--151 (1993).

\bibitem {BH}J.~Balog and T.~Hauer, \newblock Polynomial form-factors in the
O(3) nonlinear sigma model, \newblock Phys. Lett. \textbf{B337}, 115--121 (1994).

\bibitem {AGT}L.~F. Alday, D.~Gaiotto, and Y.~Tachikawa,
\newblock {Liouville Correlation Functions from Four-dimensional Gauge
Theories}, \newblock Lett. Math. Phys. \textbf{91}, 167--197 (2010).

\bibitem {MMSS}A.~Mironov, A.~Morozov, S.~Shakirov, and A.~Smirnov,
\newblock {Proving AGT conjecture as HS duality: extension to five dimensions},
\newblock (2011), \newblock arXiv:1105.0948 [hep-th].

\bibitem {FR}I.~B. Frenkel and N.~Y. Reshetikhin, \newblock Quantum affine
algebras and holonomic difference equations, \newblock Commun. Math. Phys.
\textbf{146}, 1--60 (1992).

\bibitem {ZZ}A.~B. Zamolodchikov and A.~B. Zamolodchikov, \newblock Factorized
S-matrices in two dimensions as the exact solutions of certain relativistic
quantum field models, \newblock Annals Phys. \textbf{120}, 253--291 (1979).

\bibitem {FST}L.~D. Faddeev, E.~K. Sklyanin, and L.~A. Takhtajan,
\newblock The quantum inverse problem method. 1, \newblock Theor. Math. Phys.
\textbf{40}, 688--706 (1980).

\bibitem {BFK}H.~Babujian, A.~Foerster, and M.~Karowski, \newblock Exact form
factors in integrable quantum field theories: The scaling Z(N)-Ising model,
\newblock Nucl. Phys. \textbf{B736}, 169--198 (2006).

\bibitem {ShWi}R.~Shankar and E.~Witten,
\newblock {The S matrix of the kinks of the (psi-bar PSI)**2 model},
\newblock Nucl. Phys. \textbf{B141}, 349 (1978).

\bibitem {KT1}M.~Karowski and H.~J. Thun,
\newblock {Complete S matrix of the O(2N) Gross-Neveu model}, \newblock Nucl.
Phys. \textbf{B190}, 61--92 (1981).

\bibitem {BFK3}H.~Babujian, A.~Foerster, and M.~Karowski,
\newblock {Exact form factors of the SU(N) Gross-Neveu model and 1/N
expansion}, \newblock Nucl. Phys. \textbf{B825}, 396--425 (2010).
\end{thebibliography}

\end{document}